\newif\ifshow
\newtheorem{proposition}{Proposition}
\newtheorem{assumption}{Assumption}
\newtheorem{lemma}{Lemma}
\newcommand{\naA}{[Na^+]_A}
\newcommand{\kA}{[K^+]_A}
\newcommand{\clA}{[Cl^-]_A}
\newcommand{\naB}{[Na^+]_B}
\newcommand{\kB}{[K^+]_B}
\newcommand{\clB}{[Cl^-]_B}
\newcommand{\nae}{[Na^+]_e}
\newcommand{\ke}{[K^+]_e}
\newcommand{\cle}{[Cl^-]_e}
\newcommand{\nacle}{[NaCl]_e}
\newcommand{\ione}{[\text{ion}]_e}
\newcommand{\naj}{[Na^+]_j}
\newcommand{\kj}{[K^+]_j}
\newcommand{\clj}{[Cl^-]_j}
\newcommand{\ionj}{[\text{ion}]_j}
\newcommand{\wA}{w_A}
\newcommand{\wB}{w_B}
\newcommand{\wj}{w_j}
\newcommand{\wnorm}{w_{\text{Norm}}}
\newcommand{\vA}{V_A}
\newcommand{\vB}{V_B}
\newcommand{\vj}{V_j}
\newcommand{\parapath}{P}
\newcommand{\Arbl}{a_1}
\newcommand{\Arap}{a_2}
\newcommand{\Arpara}{a_{\parapath}}
\newcommand{\Arany}{a_{(\cdot)}}
\newcommand{\arbl}{\Arbl}%{\text{a}_1}
\newcommand{\arap}{\Arap}%{\text{a}_2}
\newcommand{\arany}{\Arany}
\newcommand{\eq}{^{eq}}
\newcommand{\zA}{z_{A}}
\newcommand{\zB}{z_{B}}
\newcommand{\zj}{z_{j}}
\newcommand{\zy}{z_{Y}}
\newcommand{\paramxA}{X_A^{\zA}}
\newcommand{\paramxB}{X_B^{\zB}}
\newcommand{\paramxj}{X_j^{\zj}}
\newcommand{\ye}{[Y^{\zy}]_e}
\newcommand{\xj}{[X^{\zj}]_j}
\newcommand{\enaA}{E_{Na,A}}
\newcommand{\ekA}{E_{K,A}}
\newcommand{\eclA}{E_{Cl,A}}
\newcommand{\enaB}{E_{Na,B}}
\newcommand{\ekB}{E_{K,B}}
\newcommand{\eclB}{E_{Cl,B}}
\newcommand{\eionj}{E_{\text{ion},j}}
\newcommand{\eionA}{E_{\text{ion},A}}
\newcommand{\eionB}{E_{\text{ion},B}}
\newcommand{\OsA}{\mathcal{O}_A}
\newcommand{\OsB}{\mathcal{O}_B}
\newcommand{\Ose}{\mathcal{O}_e}
\newcommand{\Osj}{\mathcal{O}_j}
\newcommand{\Cionjpbl}{\eta_{\text{ion},j}} % {\eta^{(1)}_{\text{ion},j}}
\newcommand{\Cnajpbl}{\eta_{Na,j}} % {\eta^{(1)}_{Na,j}}
\newcommand{\Ckjpbl}{\eta_{K,j}} % {\eta^{(1)}_{K,j}}
\newcommand{\Ccljpbl}{\eta_{Cl,j}} % {\eta^{(1)}_{Cl,j}}
\newcommand{\C}{\mathcal{C}\eq}
\newcommand{\Cpbl}{\mathcal{C}_{j}} % {\mathcal{C}^{(1)}_{j}}
\newcommand{\pnabl}{\text{\textbf{p}}_{\text{\textbf{Na,1}}}}
\newcommand{\pkbl}{\text{\textbf{p}}_{\text{\textbf{K,1}}}}
\newcommand{\pclbl}{\text{\textbf{p}}_{\text{\textbf{Cl,1}}}}
\newcommand{\pnaap}{\text{\textbf{p}}_{\text{\textbf{Na,2}}}}
\newcommand{\pkap}{\text{\textbf{p}}_{\text{\textbf{K,2}}}}
\newcommand{\pclap}{\text{\textbf{p}}_{\text{\textbf{Cl,2}}}}
\newcommand{\pionbl}{\text{\textbf{p}}_{\text{\textbf{ion,1}}}}
\newcommand{\pionap}{\text{\textbf{p}}_{\text{\textbf{ion,2}}}}
\newcommand{\pumprate}{p}
\newcommand{\jatp}[1]{J_{ATP}^{#1}}
\newcommand{\pMaxAbl}{\pumprate_{\max,A}} % {P_{\max,A}^{(1)}}
\newcommand{\pMaxBbl}{\pumprate_{\max,B}} % {P_{\max,B}^{(1)}}
\newcommand{\pMax}{\pumprate_{\max}}
    \newcommand{\Pmax}{\pMax} % {P_{\max}}
    \newcommand{\PmaxAbl}{\pMaxAbl} % {P_{\max,A}} % {P_{\max,A}^{(1)}}
    \newcommand{\PmaxBbl}{\pMaxBbl} % {P_{\max,B}} % {P_{\max,B}^{(1)}}
    \newcommand{\PmaxAap}{\PmaxAbl}%{P_{\max,A}^{(2)}}
    \newcommand{\PmaxBap}{\PmaxBbl}%{P_{\max,B}^{(2)}}
\newcommand{\pMinA}{\pumprate_{\min,A}}
\newcommand{\pMinB}{\pumprate_{\min,B}}
\newcommand{\zion}{z_{\text{ion}}}
\newcommand{\gnabl}{g_{Na,1}}
\newcommand{\gkbl}{g_{K,1}}
\newcommand{\gclbl}{g_{Cl,1}}
\newcommand{\gnaap}{g_{Na,2}}
\newcommand{\gkap}{g_{K,2}}
\newcommand{\gclap}{g_{Cl,2}}
\newcommand{\gnapara}{g_{Na,\parapath}}
\newcommand{\gkpara}{g_{K,\parapath}}
\newcommand{\gclpara}{g_{Cl,\parapath}}
\newcommand{\gionbl}{g_{\text{ion},1}}
\newcommand{\gionap}{g_{\text{ion},2}}
\newcommand{\gionpara}{g_{\text{ion},\parapath}}
\newcommand{\ggnabl}{\overline{g}_{Na,1}}
\newcommand{\ggkbl}{\overline{g}_{K,1}}
\newcommand{\ggclbl}{\overline{g}_{Cl,1}}
\newcommand{\ggnaap}{\overline{g}_{Na,2}}
\newcommand{\ggkap}{\overline{g}_{K,2}}
\newcommand{\ggclap}{\overline{g}_{Cl,2}}
\newcommand{\ggnapara}{\overline{g}_{Na,\parapath}}
\newcommand{\ggkpara}{\overline{g}_{K,\parapath}}
\newcommand{\ggclpara}{\overline{g}_{Cl,\parapath}}
\newcommand{\ggionap}{\overline{g}_{\text{ion},2}}
\newcommand{\ggionpara}{\overline{g}_{\text{ion},\parapath}}
\newcommand{\ggnaany}{\overline{g}_{Na,\cdot}}
\newcommand{\ggkany}{\overline{g}_{K,\cdot}}
\newcommand{\ggclany}{\overline{g}_{Cl,\cdot}}
\newcommand{\gionany}{{g}_{\text{ion},(\cdot)}}
\newcommand{\ggionany}{\overline{g}_{\text{ion},(\cdot)}}
\newcommand{\GionAbl}{G_{\text{ion},A}} % {G_{\text{ion},A}^{(1)}}
\newcommand{\GionBbl}{G_{\text{ion},B}} % {G_{\text{ion},B}^{(1)}}
\newcommand{\GionAap}{G_{\text{ion},A}^{(2)}}
\newcommand{\GionBap}{G_{\text{ion},B}^{(2)}}
\newcommand{\Gionjbl}{G_{\text{ion},j}} % {G_{\text{ion},j}^{(1)}}
\newcommand{\Gnajbl}{G_{Na,j}} % {G_{Na,j}^{(1)}}
\newcommand{\Gkjbl}{G_{K,j}} % {G_{K,j}^{(1)}}
    \newcommand{\fluxionbl}{\Phi_{\text{ion},1}}
    \newcommand{\fluxionap}{\Phi_{\text{ion},2}}
    \newcommand{\fluxionpara}{\Phi_{\text{ion},\parapath}}
    \newcommand{\fluxnabl}{\Phi_{\text{Na},1}}
    \newcommand{\fluxnaap}{\Phi_{\text{Na},2}}
    \newcommand{\fluxnapara}{\Phi_{\text{Na},\parapath}}
\newcommand{\dna}{d_{Na}}
\newcommand{\dk}{d_{K}}
\newcommand{\dcl}{d_{Cl}}
\newcommand{\uA}{u_{1}}
\newcommand{\uB}{u_{2}}
\definecolor{myblue}{RGB}{20,120,245}
\definecolor{myorange}{RGB}{243,108,10}
\definecolor{myred}{RGB}{237,28,36}
\definecolor{mygreen}{RGB}{0,128,0}%{0,166,81}
\definecolor{magenta}{RGB}{255,26,255}
\newcommand{\ZA}[1]{{\color{black}{#1}}}
\newcommand{\AK}[1]{{\color{black}{#1}}}
\newcommand{\KT}[1]{{\color{black}{#1}}}
\title{{Stability and robustness of a generalized pump-leak model \\for epithelial cell and lumen volume regulation}}
\author{Kerry Tarrant\thanks{Department of Mathematics, University of Iowa, IA,  USA,  
  kerry-tarrant@uiowa.edu.}
\; and\; Alan R. Kay \thanks{Department of  Biology, University of Iowa, IA, USA,
  alan-kay@uiowa.edu.}
\; and\; Zahra Aminzare\thanks{Department of  Mathematics, University of Iowa, IA, USA,
  zahra-aminzare@uiowa.edu.}
  }
\date{}
\begin{document}

\maketitle

\begin{abstract}
Epithelial cells regulate ion concentrations and volume through coordinated membrane pumps, ion channels, and paracellular pathways which can be modeled by classical single-compartment pump-leak equations (PLEs). Many epithelial functions, however, depend on the interaction between a cell and an enclosed luminal space, a geometry that cannot be captured by classical PLEs. To address this, we develop a two-compartment model consisting of an intracellular compartment  coupled to a luminal compartment through the apical membrane, with both compartments  {interfacing} an infinite extracellular bath and connected to it through the basolateral membrane and a paracellular pathway. Building on the five-dimensional single-cell PLEs, we formulate a ten-dimensional PLE system for this geometry and derive analytical equilibria and steady-state formulas for both the passive system and the Na\textsuperscript{+}/K\textsuperscript{+}-ATPase (NKA) driven active system. We characterize how these equilibria depend on physiologically relevant parameters, analyze local stability across wide parameter ranges, and apply global sensitivity and robustness methods to identify the principal determinants of ion and volume homeostasis. The model reveals fundamental differences between basolateral and apical placement of the NKA, including the onset of luminal volume blow-up when apical potassium recycling is insufficient.  More broadly, this framework provides a mathematically tractable and physiologically grounded foundation for studying epithelial transport and for predicting conditions under which pump localization and conductance changes lead to stable function or pathological lumen expansion.
\end{abstract}

 \textbf{keywords}
Pump-leak model, pump-leak equations, ABp system, volume regulation, NKA pump, sensitivity analysis, Sobol index, robustness.

%~~~~~~~~~~~~~~~~~~~~~~~~~~~~~~~~~~~~~~~~~~~~~~~~~~~~~~~~~~~~~~~~~~~
\section{Introduction}

%Donnan effect & osmotic pressure

\AK{All live cells function under the constant threat of inundation by osmotic water fluxes that are driven by impermeant molecules within cells. This so-called \textit{Donnan effect} is generated because cells have membranes that are permeable to water and contain impermeant molecules, which ensures that water will flow continuously  into the cell unless countermeasures are taken \cite{Sperelakis_2012}. % add to bib
 Eukaryotes employ {a variety of} mechanisms to {counter} the Donnan effect. Plant cells have rigid cell walls that {constrain} swelling and allow them to sustain a turgor pressure \cite{hofte2017plant} {that opposes the osmotic influx of water.}  In contrast animal cells pump $Na^+$ ions out, which establishes a dynamic steady state, where the intracellular osmolarity balances that of the extracellular osmolarity. This pump-leak mechanism (PLM) also ensures that there is osmotic ``room" in the cell to accommodate about 100 mM of a variety of metabolites, which are needed for the cell to survive \cite{Kay_Blaustein_2019, tosteson1960regulation}.  In all animal cells the sodium pump is the Na\textsuperscript{+}/K\textsuperscript{+}-ATPase (NKA), which transports three $Na^+$ out \KT{of} the cell for every two $K^+$  pumped in, at the expense of one ATP molecule \cite{Fedosova_2021}.} 

%PLEs for single cells
\AK{Much work has been done to characterize the behavior of the single cell PLM, showing that it establishes a robust means for stabilizing volume, as well as accommodating metabolites by moving chloride out of the cell, as well as establishing a negative membrane potential.  }
The PLM is described by a system of five algebraic-differential equations--collectively known as the pump-leak equations (PLEs)--that model the time-dependent changes in ionic concentrations, cell volume, and voltage. These equations have been rigorously analyzed for the single-cell setting  \cite{hoppensteadt2002modeling, FRASER2007336, keener2009mathematical, mori2012mathematical}. In this case, the cell is bathed in an infinite, well-mixed extracellular medium and exchanges ions and water across a single membrane, and more recent descriptions and mathematical inquiries have incorporated cation-chloride cotransporters \cite{aminzare2024mathematical} (Figure~\ref{fig:diagram_ANBp}, left). \AK{The PLEs create a system that ensures that a steady state emerges that is not dependent on a very narrow range of parameters.}

\begin{figure}[h!]
    \centering
    \includegraphics[width=1\textwidth]{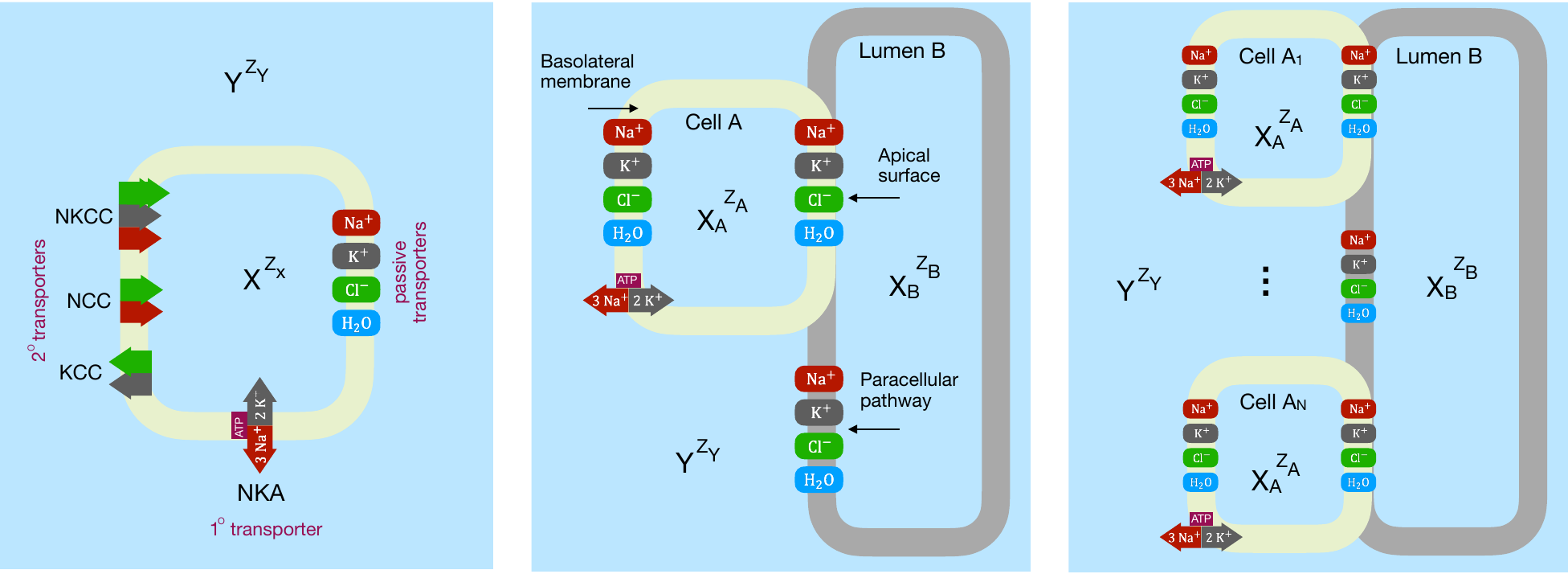}
    \caption{(Left) Schematic of a single-cell featuring Na\textsuperscript{+}, K\textsuperscript{+}, Cl\textsuperscript{-}, and water channels; the Na\textsuperscript{+}/K\textsuperscript{+}-ATPase (NKA) pump (a primary transporter); secondary transporters including KCC, NCC, and NKCC; and impermeant molecules inside (X) and outside (Y) the cell. (Middle) Schematic of the ABp model, where cell A connects to a lumen B via the apical surface and to the extracellular fluid via the basolateral membrane. The lumen B is also connected to the extracellular fluid through a paracellular pathway (p in ABp system refers to the paracellular pathway). The apical surface, basolateral membrane, and paracellular pathway contain passive ion channels and water transporters. Additionally, the basolateral membrane includes the NKA pump and other active transporters. (Right) Generalization of the ABp model to the $A_1\cdots A_NBp$ model, in which $N$ cells are connected to the common lumen B.}
    \label{fig:diagram_ANBp}
\end{figure} 

\AK{\textbf{Epithelia} are ubiquitous in animals, often providing the working end of an organ, whose function is to actively transport ions,  molecules and water, or to serve as \KT{a} barrier \cite{boron2016medical}. Simple epithelia consist of cells arranged in a tiled monolayer \cite{Caceres_2017}. 
 Junctions between adjacent epithelial cells are formed by transmembrane proteins that connect adjacent cells. Physiologists classify epithelia as either ``tight" or ``leaky", which is determined by the permeability characteristics of these intercellular junctions \cite{sackin2013electrophysiological}. Epithelia exhibit structural and functional asymmetry, with distinct populations of ion channels and transporters distributed between the apical membrane (facing the lumen) and the basolateral membrane (facing the interstitial space). This asymmetric distribution enables epithelia to transport ions and water vectorially.

Our objective is to model a rather specialized epithelium, which we term an ``epithelial vesicle". In these structures a layer of epithelial cells surrounds a lumen, which contains, water, ions and small metabolites (Figure~\ref{fig:diagram_ANBp}, right). The following are examples of epithelial vesicles: the ventricular system in the brain \cite{Kandel_2013}, 
 the scala media in the cochlea \cite{Patuzzi_2011}, 
  follicles of the thyroid, ovarian follicles, and embryonic structures like, the blastula, and the otic and optic vesicles \cite{Wolpert_2015}. 
    In all these cases the size of the lumen remains constant, at least in the short term. If all cells in the epithelial vesicle are the same, and the cells and lumen are small (i.e. of the order of a few microns), the cells can be collapsed into a single compartment A (Figure~\ref{fig:diagram_ANBp}, middle), since diffusion is rapid over short distances \cite{Berg_1993}.  
    We will term this a static ABp, where the volumes of A and B reach a constant steady state. This contrasts with what happens in a transporting epithelial system where the volume of B grows continuously as a fluid is secreted into it. 
}

\AK{
In this paper we develop a two compartment PLM where compartment A represents an epithelial cell, which borders the extracellular 
\ZA{interstitial fluid (ISF)} which is infinite and a compartment B that represents the lumen. Furthermore, the lumen can be coupled to the extracellular space by a paracellular pathway (p), hence the model is termed the ABp model. The permeability of the paracellular pathway is determined by occludin and claudin molecules with span the gap between epithelial cells \cite{Tsukita_2018}. 
Claudin molecules (23 genes in humans), can establish ion selective pathways in the space between epithelial cells. Claudins allow the selective permeation of ions and water, depending on the claudin gene expressed in the epithelium, or as a barrier to ion or water fluxes.  

In the absence of a paracellular pathway, the ABp model reduces to an AB system, where now compartment B is contained within A (\ZA{Figure~\ref{fig:AB_system}}). The AB system can serve as a model for intracellular organelles like, lysosomes, nuclei, endoplasmic reticulum etc. 

Similar models have been developed to simulate various kinds of epithelial transport \cite{Larsen_2000, Latta_1984, lew1979behaviour, Weinstein_Stephenson_1981}, 
however crucially in these models the compartments flanking the epithelium have been assumed to be infinite and have fixed concentrations. In contrast, here we allow the concentrations of ions in B to vary, as well as its volume and voltage. 

In the static ABp model, where in the steady state the volumes of all compartments reach a stable volume, the fluxes of any ion or water into and out a compartment must be balanced. We will show that it is possible for there to be persistent cyclic flows of ions through the model, which do not need to be balanced.}

Understanding the collective behavior of these multicellular arrangements is essential for explaining coordinated volume regulation, fluid secretion, and absorption, and may reveal emergent properties that do not appear in single-cell models.

\medskip 
\ZA{
Our main contributions in this work are as follows:

\noindent\textbf{Development of a mathematical framework.}
We generalize the classical five-dimensional pump-leak equations for a single cell to a ten-dimensional system describing the coupled dynamics of ion concentrations, volume, and membrane potential in a two-compartment ABp system.

\noindent\textbf{Existence and local stability of steady states.} 
We derive parameter conditions guaranteeing the existence of equilibria and steady states, obtain explicit analytical expressions for these states, and establish their local stability.

\noindent\textbf{Robustness and sensitivity analysis in high-dimensional parameter space.} 
The ABp model depends on a large number of parameters. To assess robustness of the steady states and long-term behavior, we analyze model sensitivity across a broad parameter range using Latin hypercube sampling combined with Sobol variance-based sensitivity indices. This approach quantifies how uncertainty in individual parameters and their interactions contributes to variability in key model outputs.

\noindent\textbf{Analysis of pump localization.}
We analyze two distinct pump configurations—placement on the basolateral membrane versus the apical surface—and characterize their effects on the existence and stability of steady states.
\medskip

Our analysis leads to the following outcomes:

\noindent\textbf{1.} As in the classical single-cell pump--leak model, the Na$^+$/K$^+$-ATPase plays a central role in stabilizing the two-compartment system and regulating cell and lumen volume. Moreover, we show that the qualitative behavior of the ABp system is insensitive to the precise mathematical form of the pump, justifying the use of a constant pump rate to obtain tractable analytical results.

\noindent\textbf{2.}
Global sensitivity analysis reveals a pronounced low-dimensional structure in parameter space. In particular, sodium conductances, extracellular sodium concentration, and, in some regimes, temperature account for a dominant fraction of output variance and can partially compensate for reduced pump activity, thereby shifting volume steady states.

\noindent\textbf{3.}
Although potassium and chloride conductances are necessary for physiological operation of the ABp system, the steady-state behavior is comparatively insensitive to variations in these parameters over wide ranges.

\noindent\textbf{4.}
Pump localization has qualitatively distinct consequences: while basolateral pump placement supports stable regulation of both compartments, apical pump placement leads to divergence of the luminal volume once the pump rate exceeds a critical threshold.
}
\medskip 

%stucture
The structure of the paper is as follows.
Section~\ref{sec:model} introduces the coupled pump-leak equations for a general two-compartment system.
Section~\ref{sec:passive} derives analytical expressions for the equilibria of the purely passive system, in which ionic fluxes driven by electrochemical gradients are the sole transport mechanisms.
Section~\ref{sec:active} extends this analysis by incorporating active transport via the NKA pump on the basolateral membrane and establishes the corresponding steady-state solutions.
Section~\ref{sec:dependencyOnParams} employs sensitivity analysis and Sobol indices to investigate the robustness of the coupled PLEs under parameter fluctuations and to identify mechanisms that promote volume regulation.
Section~\ref{subsection:pumpAS} considers the complementary case in which the NKA pump is located on the apical surface, yielding steady-state expressions that depend explicitly on membrane conductances and the pump rate.
We conclude in Section~\ref{sec:discussion}. All tables and numerical details are provided in the Appendix.

%~~~~~~~~~~~~~~~~~~~~~~~~~~~~~~~~~~~~~~~~~~~~~~~~~~~~~~~~~~~~~~~~~~~
\section{Pump-Leak Equations for Two-Compartment Systems}
\label{sec:model}

The classical pump-leak equations (PLEs) comprise a system of four differential equations and one algebraic constraint describing the intracellular concentrations of $Na^+$, $K^+$, and $Cl^-$, together with cellular volume and membrane potential for a single cell immersed in an infinite bath. Introduced by Tosteson and Hoffman in 1960 \cite{tosteson1960regulation}, this framework was subsequently extended to incorporate mechanisms of cell volume regulation \cite{jakobsson1980interactions} and epithelial transport \cite{lew1979behaviour}. Mori \cite{mori2012mathematical} later established the existence and uniqueness of an asymptotically stable steady state, followed by additional analytical developments \cite{keener2009mathematical, aminzare2024mathematical}.

In this section, we generalize the PLE framework to describe ionic transport, compartmental volumes, and membrane potentials in two-compartment systems, with particular emphasis on ABp models (illustrated in the middle panel of Figure~\ref{fig:diagram_ANBp}). The resulting formulation consists of ten coupled differential-algebraic equations, which we refer to as the \emph{coupled PLEs}.

To accommodate the additional compartment, we introduce notations that distinguishes quantities in the cytoplasm (compartment $A$), lumen (compartment $B$), and extracellular space (ISF). Variables and parameters associated with compartments $A$ and $B$ carry the subscripts $_A$ and $_B$, respectively, while extracellular quantities carry the subscript $_e$. We further use the subscripts $_1$, $_2$, and \KT{$_\parapath$} to identify the basolateral membrane (between $A$ and ISF), the apical surface (between $A$ and $B$), and the paracellular pathway (between $B$ and ISF).

\medskip 

\noindent\textbf{Volume.} We make the standard assumption that the solution in each compartment is composed entirely of water, allowing us to treat osmolarity and osmolality as equivalent measures of total solute concentration \cite{boron2016medical}. Since the membrane is permeable to water, if there are differences in osmolarity across it, water will move by osmosis \cite{manning2023physical}. The osmolarities of the extra- and intracellular solutions are respectively $\Ose$ and $\Osj$ (for $j=A,B$):
\begin{subequations}
\begin{align}   
&\Ose = \nae + \ke + \cle + \ye \label{eq:ose},\\
    &\Osj = \naj + \kj + \clj + \xj,\label{eq:osj}
\end{align}
\end{subequations}
where $\ione$  and $\ionj$ represent the extracellular concentrations and intracellular concentrations for compartment $j$ ($j=A,B$), respectively. 
$\xj$ describes the concentration of impermeant molecules inside compartment $j$ with average charge $\zj$ which include metabolites and macromolecules. Similarly, $\ye$ with average charge $\zy$ represents the extracellular impermeant molecules. Because osmosis is a colligative effect, one need only consider the number of moles of these molecules. Note that we assume $\ye$ is constant and the number of moles of the intracellular impermeant molecule is constant, i.e., $\paramxj$ is constant. If $\wj$ is the compartment volume then $\xj={\paramxj}/{\wj}.$

{The Starling} equation governs water flux in the system \cite{garcia2013biophysical,boron2016medical,manning2023physical}, which describes changes in volume as proportional to the osmolarity differences across membranes{--the Starling equation normally includes the transmembrane pressure, which we omit since we assume that the compartments can stretch freely without developing tension and, thus, any pressure}. The following equations express the change in volumes in compartments $A$ and $B$:
\begin{subequations}
\begin{align}
    &\frac{ d \wA }{ dt } = \underbrace{{\color{black}\nu_1} (\OsA-\Ose)}_{\text{baso. mem.}} + \underbrace{{\color{black}\nu_2} (\OsA-\OsB)}_{\text{ap. sur.}} \label{eq:dwA} \\
    &\frac{ d \wB }{ dt } = \underbrace{{\color{black}\nu_p}(\OsB-\Ose)}_{\text{para. pathway}} - \underbrace{{\color{black}\nu_2} (\OsA-\OsB)}_{\text{ap. sur.}} \label{eq:dwB}
\end{align}
\label{eq:ode_water}
\end{subequations}
where the parameters $\nu_1$, $\nu_2$, and $\nu_p$ are the osmotic permeability coefficients times the molar volume of water.

\medskip 

\noindent\textbf{Voltage.}
The total concentration of charge inside the compartments and in the extracellular space  is given by:
 \begin{subequations}\label{eq:charge:constraint}
\begin{align}
Q_j&=\naj +\kj -\clj +\zj \xj\label{eq:elnu_i},\\
Q_e &=\nae+\ke- \cle + z_Y\ye\label{eq:elnu_o}. 
\end{align}
\end{subequations}

Because of the energetic cost of separating charges, isolated solutions will have a net charge of zero, so we can set $Q_e=0$
\begin{align}
    &\nae + \ke - \cle + \zy\ye = 0 \label{eq:sum_e=0} .
\end{align}

 For $j\in\{A,B\}$, the membrane potential in each compartment can be modeled exactly by the following  algebraic equation \cite{varghese1997conservation}: 
\begin{equation}
    \vj = \frac{F \wj}{C_{m,j}} \left( \naj + \kj -\clj + z_j \xj\right) ,
\label{eq:alg_volt}
\end{equation}
where $F$ is Faraday's constant and $C_{m,j}$ is the total compartment capacitance.  

The voltage difference between the spaces is negligible, allowing us to use the electroneutrality condition to approximate the net charge concentration \cite{keener2009mathematical, ostby2009astrocytic}, given by $Q_j=Q_e$. 

\medskip 

\noindent\textbf{Ion concentrations.}
In the presence of an electrochemical gradient, ionic channels facilitate the passive transport of ions by allowing them to move freely down their electrochemical gradients, thereby reducing the difference in charge and concentration across the membrane. Channels act as selective gateways, each attuned to a specific ionic species, such as sodium, potassium, and chloride. Passive ionic currents through these channels are described using Ohm's law, chosen {because it is a reasonable approximation for how conductances actually behave. Additionally, its linear dependence on electrochemical potential makes} subsequent analyses more tractable. Ohm's law for passive ionic current is given by:
\begin{equation*}
    i_{\text{ion}} = - z_{\text{ion}} \gionany (\vj-\eionj - V_e) \label{eq:ohmslaw}
\end{equation*}
where $z_{\text{ion}}$ is the ionic valence, $\gionany$ is the channel conductance on a given membrane or pathway, and the driving force is the transmembrane potential difference $\vj - \eionj - V_e$. The Nernst potential $\eionj$ is
\begin{align}
\eionj &= \frac{RT}{z_{\text{ion}}F} \ln\left(\frac{\ione}{\ionj}\right), \label{eq:E_ion_j}
\end{align}
with $R$ the universal gas constant and $T$ the absolute temperature. When $\vj - \eionj > V_e$, cations (respectively, anions) tend to move out of (respectively, into) the compartment. Throughout, the extracellular voltage serves as the reference potential and is fixed at $V_e = 0$.

The net ionic flux into each compartment results from the combined effects of passive transport, described by Ohm's law, and active transport processes localized to the membrane of compartment~$A$. The differential equations governing the concentrations of Na\textsuperscript{+}, K\textsuperscript{+}, and Cl\textsuperscript{-} in compartment $A$ are given by
\begin{subequations}
\begin{align} 
    &F\frac{d \left( \wA \naA \right) }{dt} = -\underbrace{{\color{black}\gnabl} \left( \vA - \enaA \right) + {\color{black}\pnabl}}_{} + \underbrace{{\color{black}\pnaap} + {\color{black} \dna}}_{} , \label{eq:dnaA} \\
    &F\frac{d(\wA\kA)}{dt}  = -\underbrace{{\color{black}\gkbl} \left( \vA - \ekA \right) + {\color{black}\pkbl}}_{} + \underbrace{{\color{black}\pkap} + {\color{black} \dk}}_{} , \label{eq:dkA} \\
    &F\frac{d(\wA\clA)}{dt} = \underbrace{{\color{black}\gclbl} \left( \vA - \eclA \right)}_{\text{ion total flux across basolateral}} + \underbrace{ {\color{black} \dcl}}_{\text{ion total flux across apical}}. \label{eq:dclA}
\end{align}
\label{eq:ode_con_a}
\end{subequations}
Similarly, the differential equations for the ionic concentrations in compartment $B$ are
\begin{subequations}
\begin{align}
    &F\frac{d \left( \wB \naB \right) }{dt} = -\underbrace{{\color{black}\gnapara} \left( \vB - \enaB \right)}_{} - \underbrace{{\color{black}\pnaap} - {\color{black} \dna}}_{} , \label{eq:dnaB} \\
    &F\frac{d(\wB \kB)}{dt}  = -\underbrace{{\color{black}\gkpara}\left( \vB - \ekB \right)}_{} - \underbrace{{\color{black}\pkap} - {\color{black} \dk}}_{} , \label{eq:dkB} \\
    &F\frac{d(\wB \clB)}{dt} = \underbrace{{\color{black}\gclpara} \left( \vB - \eclB \right)}_{\text{ion total flux across paracellular}} - \underbrace{ {\color{black} \dcl}}_{\text{ion total flux across apical}} . \label{eq:dclB}
\end{align}
\label{eq:ode_con_b}
\end{subequations}
In Equations \eqref{eq:ode_con_a}-\eqref{eq:ode_con_b}, $g_{\text{ion},1}$ and $g_{\text{ion},p}$ denote the conductance parameters along the basolateral membrane and the paracellular pathway, respectively. Active flux is given by the bolded terms \textbf{p}$_\text{ion,1}$ and \textbf{p}$_\text{ion,2}$. We will define the active transport terms at the beginning of subsequent sections. The term $d_{\text{ion}}$ is the passive flux across the apical surface and is defined as
\begin{align}
    \label{eq:dION}
    &d_{\text{ion}} = -z_{\text{ion}}{\color{black} g_{\text{ion},2}}\left( \left(\vA-E_{\text{ion},A})-(\vB-E_{\text{ion},B}\right) \right)
\end{align}
where $g_{\text{ion},2}$ is the conductance for the ionic channels located on apical surface. Passive flux along the apical membrane, in Equation \eqref{eq:dION}, is a function of the compartment concentrations and electrical potentials.

To maintain a constant extracellular osmolarity $\Ose$, we fix $\ke$. The concentrations $\cle$ and $\nae$ are determined by the following relationships: 
\begin{subequations}
    \begin{align}
        &\cle = \frac{1}{2}\left(\Ose + \left(\zy - 1\right)\ye \right), \label{eq:cle} \\
        &\nae = -\ke +\cle - \zy\ye \label{eq:nae}. 
    \end{align}
    \label{eq:ione}
\end{subequations}
Equation \eqref{eq:cle} is derived by combining the extracellular osmolarity in Equation \eqref{eq:ose} and the electroneutrality Equation \eqref{eq:sum_e=0}. Equation \eqref{eq:nae} comes directly from the assumption of electroneutrality. Table~\ref{tab:extracellular} gives the default values of solute concentrations.

From this point forward, we refer to the system consisting of the volume equation \eqref{eq:ode_water}, the voltage relation \eqref{eq:alg_volt}, and the ion concentration equations \eqref{eq:ode_con_a}--\eqref{eq:ode_con_b} collectively as the coupled PLEs for a general ABp system.

\medskip 

The middle illustration in Figure \ref{fig:diagram_ANBp} has some limitations. First, the volumes of compartments $A$ and $B$ are not {drawn to scale}; for a more precise representation, compartment $B$ should be at least ten times larger than compartment $A$. Second, the paracellular pathway is a narrow intercellular space through which ions and water flow. Despite these simplifications, the schematic highlights some features that will be useful later.

Default parameter values, sometimes called nominal values, are provided in Section \ref{sec:appendix_1}. Tables \ref{tab:constants}-\ref{tab:pump} provide default values and descriptions for the parameters used in our model. Table \ref{tab:constants} specifies the surface area for the basolateral and apical membranes, which are assumed to be identical unless otherwise stated, and the paracellular pathway. The product of area and conductance determines the effective number of channels in each membrane. Surface area calculates the distribution of leak channels and active transporters along surfaces and pathways. We assume that the number of channels and transporters does not change as the volume of the compartment changes. Area for the basolateral membrane, apical surface, and paracellular pathway will be denoted with $\Arbl$, $\Arap$, and $\Arpara$, respectively.

The ionic and water flux rates are asymmetrical along the surfaces and pathway since the total ionic conductance and water permeability are proportional to the surface area (given in Table~\ref{tab:constants}). For the ionic species on each surface or pathway, we will express total ionic conductance as $\gionany = \ggionany \cdot \Arany$, where $\ggionany$ is the ionic conductance per unit area. In this paper, equations and expressions will use total conductance, while figures will use conductance per unit area.

Active transport mechanisms, specifically the terms $\pionbl$ and $\pionap$ as seen in Equations \eqref{eq:ode_con_a} and \eqref{eq:ode_con_b}, are fixed at 0 in Section \ref{sec:passive} while each will be varied in Sections \ref{sec:active}-\ref{subsection:pumpAS}. Table \ref{tab:pump} lists the default values for total NKA pump rates and the stoichiometry coefficients for sodium and potassium on NKA pumps for the basolateral membrane and the apical surface. In this paper, we denote the NKA pump rate per unit area as $\pumprate$. The stoichiometry of the NKA pump is typically found to pump out three Na\textsuperscript{+} for every two K\textsuperscript{+} pumped into the cell. Under some circumstances in nature, the sodium to potassium ratios can differ \cite{artigas2023pump}, but, for most {animal cells}, the values given in Table \ref{tab:pump} reduce the electrochemical work performed by the NKA pump \cite{peluffo2023na}.

%~~~~~~~~~~~~~~~~~~~~~~~~~~~~~~~~~~~~~~~~~~~~~~~~~~~~~~~~~~~~~~~~~~~
\section{Passive Two-Compartment Systems}\label{sec:passive}

In this section, we present  conditions that guarantee the existence of an \textit{equilibrium} in a 10-dimensional \textit{passive} ABp system {where there is no active transport.}

\begin{assumption}\label{assumption:equilibrium}
To establish the existence of equilibrium states in passive 2-compartment systems, we impose the following assumptions.\begin{enumerate}
    \item $\pnabl =  \pkbl = \pnaap= \pkap = 0$. That says no active transport mechanisms are involved.

    \item For each ion, at least two of $g_{\text{ion},1}, g_{\text{ion},2}, g_{\text{ion},p}$ are strictly positive.

    \item   At least two of $\nu_1,\nu_2, \nu_p$ are strictly positive.

    \item $\ke,\nae,\cle,\ye > 0$.

\end{enumerate}
\end{assumption}

Note that to achieve a positive and finite volume at equilibrium, the concentration of impermeant molecules, $\xj\eq$, must be strictly positive. The next lemma provides the conditions for this.

\begin{lemma}\label{lemma:equilibrium}
{For $j\in\{A,B\}$,} let 
\begin{equation}
    {\xi_j} := \begin{cases}
                \dfrac{\Ose - \sqrt{4\left( 1 - z_j^2 \right) \C + \Ose^2 z_j^2}}{1 - z_j^2} & \text{ if } z_j^2 \neq 1\\
                \dfrac{\Ose^2 - 4\C}{2 \Ose} & \text{ if } z_j^2 = 1.
           \end{cases} \label{eq:xjeq}
\end{equation}
where
 \begin{align}
    \C = \cle \left( \nae  + \ke \right){=\cle\left(\cle-\zy\ye\right)} \label{eq:C}.
\end{align}
and assume that  $\ye > 0$. Then, $4 \C - \Ose^2 < 0$ and $\xi_j>0$. 
\end{lemma}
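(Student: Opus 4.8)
The plan is to treat this as a direct verification from the explicit formula \eqref{eq:xjeq}: first establish the auxiliary inequality $4\C - \Ose^2 < 0$, then use it to sign each branch of the piecewise definition of $\xi_j$. Throughout I would invoke the positivity $\nae,\ke,\cle,\ye>0$ from Assumption~\ref{assumption:equilibrium}(4), which yields both $\Ose = \nae + \ke + \cle + \ye > 0$ and $\C = \cle(\nae+\ke) > 0$.

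For the first claim, abbreviate $s := \nae + \ke > 0$, $c := \cle > 0$, and $y := \ye > 0$, so that $\Ose = s + c + y$ and $\C = cs$. A direct expansion then gives
\begin{equation*}
\Ose^2 - 4\C = (s+c+y)^2 - 4cs = (s-c)^2 + y\bigl(y + 2(s+c)\bigr).
\end{equation*}
The first term is nonnegative and the second is strictly positive because $y > 0$ and $s + c > 0$; hence $\Ose^2 - 4\C > 0$, i.e. $4\C - \Ose^2 < 0$. This is the only place the hypothesis $\ye > 0$ is essential, since it is what forces the strict inequality, and every subsequent step reduces to it.

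For $\xi_j > 0$ I would split on $z_j^2$. When $z_j^2 = 1$, the formula reads $\xi_j = (\Ose^2 - 4\C)/(2\Ose)$, whose numerator is positive by the first part and whose denominator is positive since $\Ose > 0$, so $\xi_j > 0$ immediately. When $z_j^2 \neq 1$, write $R := 4(1 - z_j^2)\C + \Ose^2 z_j^2$ for the radicand. I would first confirm $R>0$ by regrouping $R = 4\C + z_j^2(\Ose^2 - 4\C)$, a sum with $4\C > 0$ and the remaining term nonnegative, so that $\sqrt{R}$ is real and strictly positive. The key algebraic identity is
\begin{equation*}
\Ose^2 - R = (1 - z_j^2)\,(\Ose^2 - 4\C),
\end{equation*}
which shows $\operatorname{sign}(\Ose^2 - R) = \operatorname{sign}(1 - z_j^2)$ because $\Ose^2 - 4\C > 0$. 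Since $\Ose + \sqrt{R} > 0$, the factor $\Ose - \sqrt{R}$ carries the same sign as $\Ose^2 - R = (\Ose - \sqrt R)(\Ose + \sqrt R)$, namely that of $1 - z_j^2$. Therefore $\xi_j = (\Ose - \sqrt R)/(1 - z_j^2)$ is a quotient of two quantities of the same sign; they are moreover nonzero, since the identity forbids $\Ose^2 = R$ when $z_j^2 \neq 1$, and so $\xi_j > 0$.

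The computations are routine, so the only real subtlety, and the step I would be most careful with, is the sign bookkeeping in the $z_j^2 \neq 1$ case: both the numerator $\Ose - \sqrt R$ and the denominator $1 - z_j^2$ can be positive or negative depending on whether $z_j^2 < 1$ or $z_j^2 > 1$, but they always flip together. The identity $\Ose^2 - R = (1 - z_j^2)(\Ose^2 - 4\C)$ is exactly what couples their signs and lets a single argument cover both regimes without a separate case split. It is worth noting in passing that $\xi_j$ is precisely the root of $(1 - z_j^2)\xi^2 - 2\Ose\,\xi + (\Ose^2 - 4\C) = 0$ selected by the minus sign, which reconciles the two branches of \eqref{eq:xjeq} in the limit $z_j^2 \to 1$.
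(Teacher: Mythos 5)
Your proof is correct, but the first half takes a genuinely different algebraic route from the paper's. To show $4\C - \Ose^2 < 0$, the paper substitutes the derived relation \eqref{eq:cle} (which encodes electroneutrality) into the second form of $\C$, obtaining $\C = \tfrac{1}{4}\bigl((\Ose - \ye)^2 - \zy^2\ye^2\bigr)$ and hence $4\C - \Ose^2 = \ye\bigl(\ye - \zy^2\ye - 2\Ose\bigr)$, which is negative because $0 < \ye < \Ose$. You instead work from the first form $\C = \cle\left(\nae+\ke\right)$ and the osmolarity definition alone: writing $s = \nae+\ke$, $c = \cle$, $y = \ye$, your sum-of-squares identity $\Ose^2 - 4\C = (s-c)^2 + y\bigl(y + 2(s+c)\bigr)$ never invokes electroneutrality or the charge $\zy$, and it isolates the hypothesis $\ye>0$ as exactly the strictly positive term, so the argument is more self-contained and the role of the hypothesis is more transparent. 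What the paper's factorization buys in exchange is reuse: the expression $\ye(\ye - \zy^2\ye - 2\Ose)$ reappears verbatim in the proof of Lemma~\ref{lemma:existence_ss}, where one also needs that $4\C - \Ose^2$ vanishes when $\ye = 0$ --- immediate from the paper's form, but recoverable from yours only by reintroducing electroneutrality to get $s = c$. For the positivity of $\xi_j$, the two arguments are essentially the same comparison of the radicand with $\Ose^2$; the only difference is that you couple the signs of numerator and denominator through the single identity $\Ose^2 - R = (1 - z_j^2)(\Ose^2 - 4\C)$, whereas the paper treats $z_j^2<1$ and $z_j^2>1$ as separate cases. Your closing remark that $\xi_j$ is the minus-branch root of $(1 - z_j^2)\xi^2 - 2\Ose\,\xi + (\Ose^2 - 4\C) = 0$, reconciling the two branches of \eqref{eq:xjeq} as $z_j^2 \to 1$, is a nice sanity check not present in the paper.
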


\begin{proof}
Substituting \eqref{eq:cle} into the third term of Equation \eqref{eq:C} gives
    \begin{align*}
        \C &= \left(\frac{1}{2}\Ose + \frac{1}{2}\left(\zy - 1\right)\ye \right)\left(\left(\frac{1}{2}\Ose + \frac{1}{2}\left(\zy - 1\right)\ye \right)-\zy\ye\right)\\ 
        &= \frac{1}{4}\left( \left(\Ose - \ye \right)^2  -  \zy^2\ye^2 \right) .
    \end{align*}
    Hence 
    $4 \C - \Ose^2=\ye\left(\ye - \zy^2 \ye - 2\Ose\right).$
    Since $0<\ye <\Ose$, we conclude that $4 \C - \Ose^2<0$. For $z_j^2 = 1$, it is straightforward to show that if $4 \C - \Ose^2<0$ then $\xi_j >0$. 
    For $z_j^2 < 1$ ($> 1$), since $4 \C - \Ose^2<0$, we can conclude that 
   $4(1-\zj^2)\C+\zj^2\Ose^2 < \Ose^2$, ($> \Ose^2$). Hence $\xi_j >0$. 
\end{proof}

\begin{proposition}\label{proposition:equilibrium}
Consider the coupled PLEs \eqref{eq:ode_con_a}--\eqref{eq:ode_water} for 2-compartment systems under Assumption~\ref{assumption:equilibrium}. Then, the system admits a stable equilibrium point, given explicitly as follows:
\begin{subequations}
    \begin{align}
    &\naj\eq = \dfrac{2 \nae \cle }{\Ose + \left( z_j - 1 \right) \xj\eq}, \label{eq:sodium_EQ} \\
    &\kj\eq  = \dfrac{2 \ke \cle }{\Ose + \left( z_j - 1 \right) \xj\eq}, \label{eq:potassium_EQ} \\
    &\clj\eq = \frac{1}{2}\left( \Ose + \left( z_j - 1 \right) \xj\eq \right), \label{eq:chloride_EQ} \\
    &\wj\eq = \frac{\paramxj}{\xj\eq}, \label{eq:volume_EQ} \\
    &\vj\eq = \frac{RT}{F}\ln\left( \frac{\Ose + \left( z_j - 1 \right)\xj\eq}{2 \cle} \right) . \label{eq:voltage_EQ}
    \end{align}
    \label{eq:EQ}
\end{subequations}
where $ \xj\eq = \xi_j$ is given in Equation \eqref{eq:xjeq}.
\end{proposition}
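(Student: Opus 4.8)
The plan is to show that, under Assumption~\ref{assumption:equilibrium}, every equilibrium of the passive coupled PLEs is a state of \emph{detailed balance}: the electrochemical driving force of each ion across every pathway vanishes and the three osmolarities coincide. Once this is established, the formulas \eqref{eq:EQ} fall out of the osmotic and electroneutrality constraints, and stability follows from a free-energy Lyapunov function. First I would set the right-hand sides of the water equations \eqref{eq:ode_water} and the ion equations \eqref{eq:ode_con_a}--\eqref{eq:ode_con_b} to zero. Since the first condition of Assumption~\ref{assumption:equilibrium} removes the active terms, the apical coupling $\dna,\dk,\dcl$ from \eqref{eq:dION} links, for each fixed ionic species, only the $A$-equation to the $B$-equation, yielding a closed pair.

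The key step is a $2\times2$ linear-algebra observation. Writing $x := \vA - \eionA$ and $y := \vB - \eionB$ and substituting \eqref{eq:dION}, the two equilibrium equations for a fixed ion become a homogeneous system $M\,(x,y)^\top = \mathbf{0}$ whose coefficient matrix $M$ depends only on $\gionbl,\gionap,\gionpara$. A direct computation gives $\det M = \gionbl\,\gionpara + \gionbl\,\gionap + \gionap\,\gionpara$ for every ion, since the valence-dependent signs in the off-diagonal entries cancel in the determinant. The second condition of Assumption~\ref{assumption:equilibrium} forces at least one of these three products to be positive, so $\det M>0$ and hence $x=y=0$, i.e. $\vj=\eionj$ for each ion and $j\in\{A,B\}$. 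Applying the identical argument to the two water equations, now invoking the third condition of Assumption~\ref{assumption:equilibrium}, gives $\OsA=\OsB=\Ose$.

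With detailed balance in hand, I would convert the Nernst conditions $\vA=\enaA=\ekA=\eclA$ (and the analogous relations in $B$), read off from \eqref{eq:E_ion_j}, into concentration relations by setting $\beta_j := \exp\!\big(F\vj/(RT)\big)$: this gives $\naj=\nae/\beta_j$, $\kj=\ke/\beta_j$, $\clj=\cle\,\beta_j$, and therefore $(\naj+\kj)\,\clj=\cle(\nae+\ke)=\C$ by \eqref{eq:C}. Imposing osmotic balance $\Osj=\Ose$ together with compartmental electroneutrality $\naj+\kj-\clj+\zj\xj=0$, then taking their difference and sum, produces $\clj=\tfrac12(\Ose+(\zj-1)\xj)$, which is \eqref{eq:chloride_EQ}, and $\naj+\kj=\tfrac12(\Ose-(\zj+1)\xj)$. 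Multiplying these and equating to $\C$ yields the quadratic $(1-\zj^2)\xj^2-2\Ose\,\xj+(\Ose^2-4\C)=0$, whose admissible root is exactly $\xi_j$ of \eqref{eq:xjeq}. Lemma~\ref{lemma:equilibrium} guarantees $\xi_j>0$, so $\wj\eq=\paramxj/\xi_j$ is positive and finite, and back-substitution recovers \eqref{eq:sodium_EQ}, \eqref{eq:potassium_EQ}, and \eqref{eq:voltage_EQ}.

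The main obstacle is the stability assertion. I would address it with a Lyapunov function given by the total free energy of the two-compartment system---a sum over $j\in\{A,B\}$ of the entropic mixing terms $\wj\,\ionj(\ln(\ionj/\ione)-1)$ over the three ions, together with the electrostatic energy encoded by \eqref{eq:alg_volt} and the osmotic work associated with \eqref{eq:ode_water}---mirroring the single-cell analysis \cite{mori2012mathematical, aminzare2024mathematical}. Differentiating along trajectories, the ionic contributions collapse into sums of $-\gionany(\vj-\eionj)^2$-type terms (flux times electrochemical driving force) and the water contributions into $-\nu(\Delta\mathcal{O})^2$-type terms, all non-positive for ohmic fluxes, so the free energy is non-increasing and its derivative vanishes exactly at the detailed-balance state found above. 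Combined with uniqueness of that state (the monotone selection of the root $\xi_j$), this yields local asymptotic stability. The delicate point is that the ion species are coupled through the algebraic voltage constraint \eqref{eq:alg_volt} and through the volume dependence $\xj=\paramxj/\wj$, so the linearized Jacobian is not literally block-diagonal; verifying that these couplings preserve negative-definiteness of the dissipation---equivalently, choosing the change of variables (to ionic amounts $\wj\ionj$ and volumes) in which the per-ion Hurwitz matrices $M$ from the second step reassemble into a stable Jacobian---is where the real work lies.
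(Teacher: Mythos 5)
Your equilibrium derivation is correct and is, in substance, the paper's own proof. The $2\times 2$ observation---writing $x=\vA\eq-\eionA\eq$, $y=\vB\eq-\eionB\eq$ and noting that the coefficient matrix has determinant $\gionbl\,\gionap+\gionbl\,\gionpara+\gionap\,\gionpara$ for every ion, the valence signs cancelling---is exactly what the paper's substitution argument in \eqref{equ:equilibrium1}--\eqref{equ:equilibrium3} computes, and Assumption~\ref{assumption:equilibrium}.2 makes that determinant positive, forcing $\vj\eq=\eionj\eq$; the identical structure in $\nu_1,\nu_2,\nu_p$ gives $\Osj\eq=\Ose$. Likewise, your product identity $\left(\naj\eq+\kj\eq\right)\clj\eq=\C$ and the resulting quadratic $(1-\zj^2)(\xj\eq)^2-2\Ose\,\xj\eq+(\Ose^2-4\C)=0$ is the same equation the paper reaches in \eqref{eq:volt_ion_sum_EQ}, and both arguments then invoke Lemma~\ref{lemma:equilibrium} for positivity. (Both you and the paper are terse about root selection: only the root in \eqref{eq:xjeq} is admissible, because $\clj\eq$ and $\naj\eq+\kj\eq$ have product $\C>0$ and sum $\Ose-\xj\eq$, so both are positive only if $\xj\eq<\Ose$; the other root exceeds $\Ose/(1-\zj^2)\ge\Ose$ when $\zj^2<1$ and is negative when $\zj^2>1$.)

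The divergence---and the genuine gap---is in stability. The paper does not argue via free energy: it declares the stability proof identical to the active case with $\pumprate=0$ and defers to Section~\ref{sec:localstability}, where the system is rewritten in the variables $F\wj\ionj$ and $F\wj$ (Equation~\eqref{eq:ode_changevariable}), the Jacobian \eqref{eq:Jacobian} is assembled in block form, and its eigenvalues are computed (symbolically, in variable precision, since $M$ is nearly singular) and observed to be negative over the parameter ranges considered. That is a linearization argument supported by computation rather than a closed-form proof. Your Lyapunov proposal, modeled on Mori's single-cell analysis \cite{mori2012mathematical}, is a genuinely different route and would, if completed, be stronger---parameter-independent and potentially global; indeed the paper's Discussion explicitly lists this free-energy analysis of the ABp system as future work. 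But as written your sketch does not establish the proposition's stability claim: you concede that verifying negative semidefiniteness of the dissipation under the electroneutrality constraint and the volume coupling $\xj=\paramxj/\wj$ is ``where the real work lies,'' and that verification---together with a LaSalle-type step showing the set where dissipation vanishes reduces to the equilibrium, and coercivity of the free energy so that sublevel sets are compact in the positive orthant---is the entire mathematical content of the assertion. So the existence half of your proof stands and matches the paper; the stability half is a well-motivated but unproven program. To match what the paper actually does, you would instead evaluate the Jacobian of \eqref{eq:ode_changevariable} at the equilibrium and verify that its spectrum lies in the open left half-plane.
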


\begin{proof}
At equilibrium (using Assumption~\ref{assumption:equilibrium}.1), we set the right-hand side of Equations~\eqref{eq:ode_con_a}--\eqref{eq:ode_con_b} equal to zero:
\begin{subequations}
\begin{align}
0&=-g_{\text{ion},1}\left(\vA^{eq}-\eionA^{eq}\right)-g_{\text{ion},2}\left(\left(\vA^{eq}-\eionA^{eq}\right)-\left(\vB^{eq}-\eionB^{eq}\right)\right)\label{equ:equilibrium1}\\
0&=-g_{\text{ion},p}\left(\vB^{eq}-\eionB^{eq}\right)+g_{\text{ion},2}\left(\left(\vA^{eq}-\eionA^{eq}\right)-\left(\vB^{eq}-\eionB^{eq}\right)\right)\label{equ:equilibrium2}.
\end{align}
\end{subequations}
By summing \eqref{equ:equilibrium1} and \eqref{equ:equilibrium2}, and assuming that at least one of 
$g_{\text{ion},1}$ or $g_{\text{ion},p}$ is strictly positive (Assumption~\ref{assumption:equilibrium}.2), 
we can express $\vA^{eq} - \eionA^{eq}$ in terms of $\vB^{eq} - \eionB^{eq}$, or vice versa. 
We can then use \eqref{equ:equilibrium1} or \eqref{equ:equilibrium2} to solve for both quantities. 
For example, assume that $g_{\text{ion},1} > 0$, then
\begin{align}\label{equ:equilibrium3}
\vA^{eq}-\eionA^{eq}=-\frac{g_{\text{ion},2}}{g_{\text{ion},1}}\left(\vB^{eq}-\eionB^{eq}\right). 
\end{align}
{Substituting \eqref{equ:equilibrium3} into \eqref{equ:equilibrium1} yields}
\[0=\left(g_{\text{ion},1}g_{\text{ion},2}+g_{\text{ion},1}g_{\text{ion},\parapath}+g_{\text{ion}2}g_{\text{ion},\parapath}\right)\left(\vB^{eq}-\eionB^{eq}\right).\]
Since by Assumption~\ref{assumption:equilibrium}.2, the first term of the product is non-zero, we get $\vB^{eq}-\eionB^{eq}=0,$ {and hence, we conclude that} for $j\in\{A,B\}$,  the following electrochemical potential condition holds for sodium, potassium, and chloride:
\begin{align}
    &\vj\eq = \eionj\eq \label{eq:vj-EQ}. 
\end{align}
At equilibrium, the chemical and electrical potentials counterbalance each other, leaving the system without a net flux of ions.

Next,  setting the right-hand side of Equation \eqref{eq:ode_water} to zero we obtain 
\begin{subequations}
\begin{align}
&0=\nu_1\left(\OsA^{eq}-\Ose\right)+\nu_2\left(\OsA^{eq}-\OsB^{eq}\right)\label{equ:equilibrium4}\\
&0=\nu_p\left(\OsB^{eq}-\Ose\right)-\nu_2\left(\OsA^{eq}-\OsB^{eq}\right)\label{equ:equilibrium5}.
\end{align}
\end{subequations}
By summing \eqref{equ:equilibrium4} and \eqref{equ:equilibrium5}, and assuming that at least one of $\nu_1$ or $\nu_p$ is strictly positive, we can express  $\OsA^{eq}-\Ose$ in terms of $\OsB^{eq}-\Ose$, or vice versa. We can then use \eqref{equ:equilibrium4} or \eqref{equ:equilibrium5} to solve for both quantities. 
For example, assume that 
$\nu_1>0$. Then, 
\begin{align}\label{equ:equilibrium6}
\OsA^{eq}-\Ose=-\frac{\nu_p}{\nu_1}\left(\OsB^{eq}-\Ose\right).
\end{align}
{Substituting \eqref{equ:equilibrium6} into \eqref{equ:equilibrium4} yields}
$\left(\nu_1\nu_2+\nu_1\nu_p+\nu_2\nu_p\right)\left(\OsB^{eq}-\Ose\right)=0,$ {and hence,}
for $j\in\{A,B\}$, we get $\mathcal{O}_j\eq = \Ose$ or equivalently, 
\begin{align}   
 \naj\eq+\kj\eq+\clj\eq+\xj\eq = \Ose \label{eq:ose_EQ}. 
\end{align}

At equilibrium, the compartments have equal osmolarity. 
As a result, the surfaces and pathways lack any net water flow. Although the osmolarities are the same in both spaces, the solute composition of each compartment may differ \cite{boron2016medical}.

Electroneutrality requires 
\begin{align}
    &\naj\eq+\kj\eq-\clj\eq+z_j\xj\eq = 0 \label{eq:EN_EQ} .
\end{align}
Adding Equations \eqref{eq:ose_EQ} and \eqref{eq:EN_EQ} gives us 
\begin{align}
    &\naj\eq + \kj\eq = \frac{1}{2}\left( \Ose - \left( z_{j} + 1 \right) \xj\eq \right) . \label{eq:Ose_p_EN_EQ}
\end{align}
Subtracting Equation \eqref{eq:EN_EQ} from Equation \eqref{eq:ose_EQ} gives us the equilibrium value of chloride as in \eqref{eq:chloride_EQ}.

From Equation \eqref{eq:vj-EQ}, we can express the intracellular electrical potential in terms of $\clj\eq$, which gives us $\vj\eq = -\frac{RT}{F}\ln\left(\frac{\cle}{\clj\eq}\right)$. Chloride concentration $\clj\eq$ can be substituted with the right-hand side of Equation \eqref{eq:chloride_EQ}. This allows us to describe {the equilibrium value of} electrical potential in terms of $\xj\eq$, as given in \eqref{eq:voltage_EQ}.

The intracellular sodium and potassium concentrations $\naj$ and $\kj$ are isolated from the potential in Equation \eqref{eq:vj-EQ}, and expressed as
\begin{align}
    \ionj\eq &= [\text{ion}]_e \exp\left( -\frac{z_{\text{ion}}F}{RT} \vj\eq  \right) . \label{eq:ion_EQ}
\end{align}
Notice that both $\naj\eq$ and $\kj\eq$ depend on the $\vj\eq$, which in turn depends on $\xj\eq$ as expressed in {\eqref{eq:sodium_EQ}--\eqref{eq:potassium_EQ}}. {Finally, by definition, the equilibrium value of the volume can be expressed as \eqref{eq:volume_EQ}.}

To finish the proof of existence, we now solve for the impermeant concentration $\xj\eq$. The ionic concentrations on the left-hand side of Equation \eqref{eq:Ose_p_EN_EQ} can be replaced with the expressions found on the right-hand side of Equation \eqref{eq:ion_EQ}. The voltage in Equation \eqref{eq:Ose_p_EN_EQ} can be replaced with the right-hand side of Equation \eqref{eq:voltage_EQ}. This gives us
\begin{align}
    \frac{1}{2}\left( \Ose - \left( 1+z_j \right)\xj\eq \right) &= 
    \nae \exp\left( -\ln\left( \frac{\Ose + \left( z_j - 1 \right)\xj\eq}{2 \cle} \right)\right) \nonumber \\
    &\quad+ \ke \exp\left(  -\ln\left( \frac{\Ose + \left( z_j - 1 \right)\xj\eq}{2 \cle} \right) \right) . \label{eq:volt_ion_sum_EQ}
\end{align}
Solving for $\xj\eq$ in Equation \eqref{eq:volt_ion_sum_EQ} gives us an expression for impermeant concentration as given in \eqref{eq:xjeq}. Note that Assumption~\ref{assumption:equilibrium}.4 and Lemma~\ref{lemma:equilibrium} guarantee that $\xj\eq$ is strictly positive and hence the corresponding volume is finite. 
The rest of Equation~\eqref{eq:EQ} can be derived from \eqref{eq:ion_EQ} and the fact that $w_j=\paramxj/[X_j]$.

{The proof of stability is identical to that of the active 2-compartment system with $\pumprate = 0$. We will discuss this case in detail in Section~\ref{sec:localstability}.}
\end{proof}

Equation~\eqref{eq:EQ} provides analytical expressions for the equilibrium solutions of the coupled PLEs described in Section~\ref{sec:model}. The expressions for each compartment depend on the shared extracellular concentrations as well as on the number of impermeant molecules, $\paramxj$, and their average charges, $\zj$.  

If these compartment-specific parameters are identical, i.e., $\paramxA = \paramxB$ and $\zA = \zB$, then the equilibrium states of the two compartments become indistinguishable, and the ABp system behaves like a single compartment at equilibrium. See \cite{aminzare2024mathematical} for more details on the single-cell case. 

Note that the equilibrium volume $\wj^{eq}$ is the only state variable that depends on the number of impermeant molecules {$\paramxj$}, and the equilibrium values for the ionic concentrations and voltages only depend on $\zj$. {A well-defined, locally asymptotically stable equilibrium requires impermeant molecules in the external compartment, i.e., $\ye>0$, to provide the osmotic constraint needed for volume regulation.} As mentioned before, for $\zA = \zB$, the equilibrium values of ion concentrations and voltages of compartments $A$ and $B$ are identical, and their volumes depend on $\paramxA$ and $\paramxB$. 

%~~~~~~~~~~~~~~~~~~~~~~~~~~~~~~~~~~~~~~~~~~~~~~~~~~~~~~~~~~~~~~~~~~~
\section{Active Two-Compartment Systems:  NKA Pump on Basolateral Membrane} \label{sec:active}

In this section, we introduce a Na\textsuperscript{+}/K\textsuperscript{+} ATPase (NKA) pump to the cellular membrane and derive the steady states of coupled PLEs analytically. In most epithelial systems, the NKA pump is confined to the basolateral membrane {or on the lateral borders of epithelial cells} \cite{shoshani2005polarized}. In rarer cases, the NKA pump is expressed on the apical surface \cite{pollay1985choroid}. We first consider the NKA pump on the basolateral membrane and later in  in Section \ref{subsection:pumpAS}, we will explore the effect of the NKA pump on the apical surface.  An active NKA pump mechanism will pump $\gamma_{Na}$ Na\textsuperscript{+} from the intracellular space to the extracellular space for every $\gamma_{K}$ K\textsuperscript{+} ions pumped into the intracellular space. Here, $\gamma_{Na}$ and $\gamma_{K}$ denote the stoichiometries of the NKA pump -- typically 3 and 2, respectively \cite{patel2023asymmetric}. 
{Active transport by the basolateral NKA directly regulates ionic concentrations in compartment $A$ and, through transcellular coupling and paracellular leak, also influences concentrations in compartment $B$.} We will use Equations \eqref{eq:ode_con_a}--\eqref{eq:ode_con_b} with the \textit{constant} active pump mechanism given by:
\[
    \pnabl = -\gamma_{Na} \, \pumprate \, \Arbl, \quad \pkbl = \gamma_{K} \, \pumprate \, \Arbl, \quad \pnaap=0, \quad \pkap = 0, 
\]
where $\pumprate$ and $\arbl$ are non-negative constants. The schematic diagrams for ABp systems in Figure~\ref{fig:diagram_ANBp}(middle) illustrate this configuration, with the NKA pump located on the interface between the extracellular space and compartment $A$. Passive transport mechanisms are the sole means of ion transport at the confluence of compartments $A$ and $B$, effectively coupling the two compartments and enabling the NKA pump to influence both. The ion composition observed in compartment $B$ almost mirrors that in compartment $A$, while volume changes are also proportional. 

Note that the constant NKA pump used here represents the simplest form. More complex nonlinear formulations exist, such as 
\begin{align}\label{eq:nonlinear_pump}
\pumprate \, \Arbl\left(\frac{\ke}{k_p+\ke}\right)^2\left(\frac{\naA}{k_{Na}+\naA}\right)^3,
\end{align}
first introduced in \cite{GarayGarrahan1973}, where $k_p = 0.883$ mM and $k_{Na} = 3.56$ mM are the apparent dissociation constants for $K^+$ and $Na^+$, respectively. Simpler nonlinear forms have also been proposed, including $\pumprate \, \Arbl \naA$ \cite{RN30260}, $\pumprate \, \Arbl \left(\tfrac{\naA}{\nae}\right)^3$ \cite{keener2009mathematical}, and $\pumprate \, \Arbl \left(\tfrac{\ke}{\kA}\right)^2\left(\tfrac{\naA}{\nae}\right)^3$ \cite{Manicka_Levin_2019}.
In \cite{aminzare2024mathematical}, Aminzare and Kay analytically demonstrated that the exact mathematical form of the NKA pump does not qualitatively alter the steady states of a single cell; only the stoichiometry of the pump affects them. 

 {The hydrolysis of ATP (Adenosine Triphosphate) releases free energy} that drives a wide range of cellular processes, including active ion transport \cite{DunnGrider2023}. {The rate at which the NKA pump operates determines the ATP consumption rate.} 
 In particular, the ATP consumption rate for a constant NKA pump is defined as $\jatp{\pumprate}=\frac{\pumprate \, \Arbl}{F}$.
For the nonlinear NKA pump in Equation \eqref{eq:nonlinear_pump}, this rate is $\jatp{\pumprate}=\frac{\pumprate \, \Arbl}{F}\left(\frac{\ke}{k_p+\ke}\right)^2\left(\tfrac{\naA}{k_{Na}+\naA}\right)^3.$ 
In Figure~\ref{fig:constant_vs_nonlinear_pumps}, we plot the steady states of the ABp system as a function of the ATP consumption rate for both the constant and nonlinear models of the NKA pump. As the figure shows, the steady state values are qualitatively similar, consistent with those found in single cells \cite{aminzare2024mathematical}. Therefore, it is reasonable to use a constant pump, as it does not alter the qualitative behavior of the steady states and is more mathematically tractable.
Note that while we use analytical expressions for the steady states in the constant pump case (derived in Section~\ref{sec:existence} below), we compute them numerically (using the stiff ODE solver \texttt{ode15s} in MATLAB) for the nonlinear pump models.

\begin{figure}[h!]
   \centering
   \includegraphics[width=.7\linewidth]{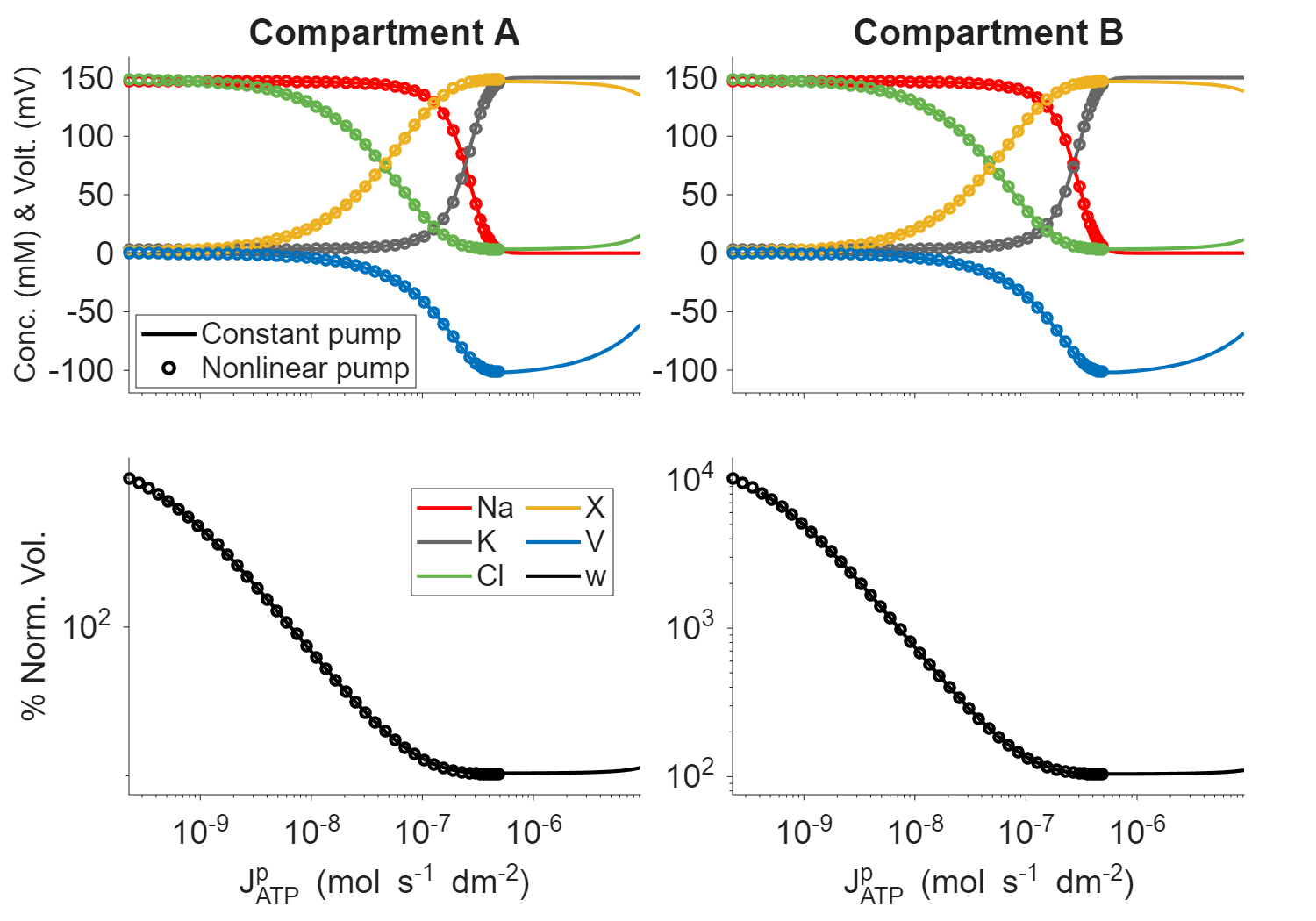}
    \caption{
   Steady state values of ABp systems are plotted as functions of the ATP consumption rate, for the constant NKA (solid curve) and the nonlinear NKA model given in \eqref{eq:nonlinear_pump} (circles).
   }
   \label{fig:constant_vs_nonlinear_pumps}
\end{figure}

%~~~~~~~~~~~~~~~~~~~~~~~~~~~~~~~~~~~~~~~~~~~~~~~~~~~~~~~~~~~~~~~~~~~
\subsection{Existence of the steady states of coupled PLEs} \label{sec:existence}

To establish the existence of steady states {(denoted by the superscript $^{ss}$)} in 2-compartment systems where the NKA pump is located on basolateral membrane, we prove the following lemma which will provide a possible range for the pump rate $\pumprate$ and impose the following assumptions.

\begin{lemma}\label{lemma:existence_ss}
For ions Na\textsuperscript{+} and K\textsuperscript{+}, assume that at least two of $g_{\text{ion},1}$, $g_{\text{ion},2}$, and $g_{\text{ion},p}$ are strictly positive. For 
 $j\in\{A,B\}$ let 
\begin{align}
    \Cpbl &= \cle \left( \nae e^{- \frac{F}{RT}\gamma_{Na} \, \pumprate \, \Arbl \Gnajbl} + \ke e^{\frac{F}{RT} \gamma_{K} \, \pumprate \, \Arbl\Gkjbl} \right), \label{eq:Cpbl}
\end{align}
where the constants $\Gionjbl$ are defined as
\begin{subequations}
    \begin{align}
    &\GionAbl = \dfrac{\gionap + \gionpara}{\gionbl \, \gionap  + \gionbl \, \gionpara  + \gionap \, \gionpara} \\
    &\GionBbl = \dfrac{\gionap}{\gionbl \, \gionap  + \gionbl \, \gionpara  + \gionap \, \gionpara} . \label{eq:GionB}
    \end{align}
    \label{eq:Gionjp1}
\end{subequations}
Then, if 
\begin{equation}
    \gamma_{Na}\nae\Gnajbl - \gamma_{K}\ke\Gkjbl > 0 \label{eq:na-k>0_bl}, 
\end{equation}
there exists $\pMax>0$ such that for $\pumprate<\pMax$, 
\begin{equation}\label{eq:necessary_for_existence}
4\Cpbl-\Ose^2<0.
\end{equation}
Hence, for  $0<\pumprate<\pMax$,  $\xj^{ss}>0$, where
\begin{align}
    \xj^{ss} &= \begin{cases}
                \dfrac{\Ose - \sqrt{4\left( 1 - \zj^2 \right) \Cpbl + \Ose^2 \zj^2}}{1 - \zj^2} & \text{ if } \zj^2 \neq 1\\
                \dfrac{\Ose^2 - 4\Cpbl}{2 \Ose} & \text{ if } \zj^2 = 1 .
           \end{cases}
    \label{eq:xjssp1}
\end{align}
\end{lemma}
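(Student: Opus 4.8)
The plan is to regard $\Cpbl$ in \eqref{eq:Cpbl} as a function of the single scalar $\pumprate\ge 0$ and to reduce the claim to a monotonicity statement anchored at $\pumprate=0$. Write $\Cpbl=\cle\,f_j(\pumprate)$ with $f_j(\pumprate)=\nae e^{-\alpha_j\pumprate}+\ke e^{\beta_j\pumprate}$, where $\alpha_j=\tfrac{F}{RT}\gamma_{Na}\Arbl\Gnajbl$ and $\beta_j=\tfrac{F}{RT}\gamma_K\Arbl\Gkjbl$. Since $f_j(0)=\nae+\ke$, the value at zero pump rate is $\Cpbl\big|_{\pumprate=0}=\cle(\nae+\ke)=\C$, the equilibrium constant of Lemma~\ref{lemma:equilibrium}. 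That lemma already supplies the strict base inequality $4\C-\Ose^2<0$, and the whole argument amounts to propagating this margin to small positive $\pumprate$.

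Next I would exploit convexity. Hypothesis \eqref{eq:na-k>0_bl} states $\gamma_{Na}\nae\Gnajbl>\gamma_K\ke\Gkjbl\ge 0$, which forces $\Gnajbl>0$ and hence $\alpha_j>0$; therefore $f_j''(\pumprate)=\alpha_j^2\nae e^{-\alpha_j\pumprate}+\beta_j^2\ke e^{\beta_j\pumprate}>0$, so $f_j$ is strictly convex and $f_j'$ strictly increasing. Multiplying \eqref{eq:na-k>0_bl} by $\tfrac{F}{RT}\Arbl>0$ gives exactly $\alpha_j\nae>\beta_j\ke$, i.e. $f_j'(0)=-\alpha_j\nae+\beta_j\ke<0$. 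Consequently $f_j$ is strictly decreasing on $[0,p^\ast)$, where $p^\ast\in(0,\infty]$ is the unique critical point of $f_j$ (with $p^\ast=\infty$ in the degenerate case $\beta_j=0$). Setting $\pMax:=p^\ast>0$ then yields $f_j(\pumprate)<f_j(0)$ for every $\pumprate\in(0,\pMax)$.

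Chaining these bounds gives, for $0<\pumprate<\pMax$, the inequality $4\Cpbl=4\cle f_j(\pumprate)<4\cle f_j(0)=4\C<\Ose^2$, which is precisely \eqref{eq:necessary_for_existence}. To pass from $4\Cpbl-\Ose^2<0$ to $\xj^{ss}>0$ for the expression \eqref{eq:xjssp1}, I would reuse verbatim the case analysis from Lemma~\ref{lemma:equilibrium}, since that argument depends only on the sign of $4\Cpbl-\Ose^2$ and not on the particular form of the constant: for $\zj^2=1$ positivity of $\tfrac{\Ose^2-4\Cpbl}{2\Ose}$ is immediate, while for $\zj^2<1$ (resp. $\zj^2>1$) multiplying $4\Cpbl<\Ose^2$ by $1-\zj^2$ gives $4(1-\zj^2)\Cpbl+\zj^2\Ose^2<\Ose^2$ (resp. $>\Ose^2$), so the numerator $\Ose-\sqrt{4(1-\zj^2)\Cpbl+\zj^2\Ose^2}$ and denominator $1-\zj^2$ share a sign and $\xj^{ss}>0$.

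The only genuine content beyond Lemma~\ref{lemma:equilibrium} is locating the decreasing branch of $f_j$, and the main point to get right is that hypothesis \eqref{eq:na-k>0_bl} is not merely convenient but is exactly the condition placing the minimizer of $f_j$ at a strictly positive pump rate, which is what guarantees $\pMax>0$ rather than an empty interval. I would also record the degenerate subcases—$\beta_j=0$, or a vanishing numerator such as $\gionap=0$, which makes $\GionBbl=0$—to confirm consistency: in those regimes $\Cpbl$ is nonincreasing or constant in $\pumprate$, so \eqref{eq:necessary_for_existence} persists and no obstruction appears. No transcendental equation must be solved, because strict convexity replaces an explicit inversion of $f_j$.
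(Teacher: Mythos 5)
Your proposal is correct, but it reaches the conclusion by a genuinely different route than the paper. The paper's proof is purely local: it sets $f_{j,1}(\pumprate)=4\,\Cpbl-\Ose^2$, checks $f_{j,1}(0)\le 0$ (via Lemma~\ref{lemma:equilibrium}) and $f'_{j,1}(0)<0$ (this is exactly where hypothesis \eqref{eq:na-k>0_bl} enters, the same derivative computation you perform), and then invokes continuity to get $f_{j,1}<0$ for all sufficiently small $\pumprate>0$, \emph{defining} $\pMax$ as the supremum of such values. Your proof replaces the continuity step with a global convexity argument: since $\alpha_j>0$ forces strict convexity of $f_j$ and \eqref{eq:na-k>0_bl} forces $f_j'(0)<0$, the function $\Cpbl$ is strictly decreasing up to its unique critical point $p^\ast$, and you take $\pMax:=p^\ast$. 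Two consequences are worth noting. First, your $\pMax$ is explicit but conservative: solving $f_j'(p^\ast)=0$ gives precisely the quantity the paper later calls $p_{\min,j}$ in Equation~\eqref{eq:pminp1j} (numerically $\approx 80$--$90\ \mu$A\,dm$^{-2}$), whereas the paper's $\pMax$ is the first positive root of $4\,\Cpbl-\Ose^2$ (numerically $\approx 3{,}400$--$4{,}100\ \mu$A\,dm$^{-2}$); since the lemma only asserts existence of some $\pMax>0$, this does not affect correctness, but the paper's downstream use of $\pMax$ as the \emph{maximal} admissible pump rate (Proposition~\ref{proposition:steadystate} and the simulations) relies on the larger interval, which your choice would shrink. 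Your observation that \eqref{eq:na-k>0_bl} is exactly the condition placing the minimizer at a positive pump rate is, in fact, the paper's own remark following \eqref{eq:pminp1j}. Second, a small hypothesis mismatch: you anchor the chain with the strict inequality $4\,\C-\Ose^2<0$ from Lemma~\ref{lemma:equilibrium}, which requires $\ye>0$, whereas the paper's proof deliberately allows $\ye=0$ (then $f_{j,1}(0)=0$) and lets the strictly negative derivative push the value below zero for $\pumprate>0$; your argument actually survives this degenerate case too, since strict decrease gives $4\,\Cpbl<4\,\C\le\Ose^2$ for $\pumprate>0$, but as written you should weaken the base inequality to $\le$ to match the lemma's stated hypotheses. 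The final step, reducing $\xj^{ss}>0$ to the sign of $4\,\Cpbl-\Ose^2$ by the case analysis on $\zj^2$, is identical in both proofs.
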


In Lemma~\ref{lemma:existence_ss}, the conditions on $g_{\text{ion},1}$, $g_{\text{ion},2}$, and $g_{\text{ion},p}$ ensure that the denominator of $\Gionjbl$ is non-zero. For instance, it is permissible to have $\ggnabl = \ggkap = 0$, provided that $\ggnaap\;\ggnapara > 0$ and $\ggkbl \; \ggkpara > 0$.

Note that it is possible to assume $\ggionap = 0$, in which case compartments $A$ and $B$ become decoupled--$A$ transport mechanism remains active while $B$ is passive. Although we exclude this case from our analysis, we note that the steady state values for $A$ and the equilibrium values for $B$ are consistent with those derived in previous studies; see \cite{aminzare2024mathematical}.

\begin{proof}
To show that there exists a $\pMax > 0$ such that for all $\pumprate < \pMax$ we have  
\[
f_{j,1} (\pumprate) := 4\Cpbl - \Ose^2 < 0,
\]
we need to verify the following two conditions:  
(1) $f_{j,1}(0) \leq 0$, and  
(2) $f'_{j,1}(0) < 0$.  
Then, since $f_{j,1}$ is continuous, we conclude that $f_{j,1}(\pumprate) < 0$ for all sufficiently small values of $\pumprate > 0$. We denote the supremum of such values by $\pMax$.

(1) Note that
$
f_{j,1}(0) = 4\C - \Ose^2 = \ye (\ye - \zy^2 \ye - 2\Ose),
$
which, by Lemma~\ref{lemma:equilibrium}, is negative for $\ye > 0$ and is zero when $\ye = 0$.

(2) Next, we compute the derivative:
\[
f'_{j,1}(\pumprate) = 4\cle \frac{F}{RT} \Arbl \left( 
    -\nae \gamma_{Na} \Gnajbl e^{\frac{F}{RT} \gamma_{Na} \, \pumprate \, \Arbl \Gnajbl} 
    + \ke \gamma_K \Gkjbl e^{\frac{F}{RT} \gamma_K \pumprate \, \Arbl \Gkjbl} 
\right).
\]
Evaluating at $\pumprate = 0$, we obtain
\[
f'_{j,1}(0) = 4\cle \frac{F}{RT} \Arbl \left( 
    -\nae \gamma_{Na} \Gnajbl 
    + \ke \gamma_K \Gkjbl 
\right).
\]
By \eqref{eq:na-k>0_bl}, the term in parentheses is negative, and hence $f'_{j,1}(0) < 0$.

A similar argument to that in Lemma~\ref{lemma:equilibrium} shows that $\xj^{ss} > 0$.
\end{proof}

\begin{proposition}\label{proposition:steadystate}
Consider the coupled PLEs \eqref{eq:ode_con_a}--\eqref{eq:ode_water} for 2-compartment systems with 
\[\pnabl = -\gamma_{Na} \, \pumprate \, \Arbl, \pkbl = \gamma_{K} \, \pumprate \, \Arbl, \pnaap=\pkap = 0.\] 
Under the assumptions of Lemma~\ref{lemma:existence_ss},  and the following two assumptions:
\begin{enumerate}
\item at least two of $g_{\text{Cl},1}$, $g_{\text{Cl},2}, g_{\text{Cl},p}$ are strictly positive
\item at least two of $\nu_1,\nu_2,\nu_p$ are strictly positive 
\end{enumerate}
the system admits a stable steady state for $0<\pumprate<\pMax$ , where $\pMax$ is defined in Lemma~\ref{lemma:existence_ss}. The steady states can be described explicitly as follows. 
For $j\in\{A,B\}$:
\begin{subequations}
\begin{align}
    &\naj^{ss} = \dfrac{2 \nae \cle \exp\left(- \frac{F}{RT}\gamma_{Na} \, \pumprate \, \Arbl \Gnajbl \right)}{\Ose + \left( z_j - 1 \right) \xj^{ss}} \label{eq:najss_p1} \\
    &\kj^{ss} = \dfrac{2 \ke \cle \exp\left(\frac{F}{RT} \gamma_{K} \, \pumprate \, \Arbl\Gkjbl\right)}{\Ose + \left( z_j - 1 \right) \xj^{ss}} \label{eq:kjss_p1} \\
    &\clj^{ss} = \frac{1}{2}\left( \Ose + \left( z_j - 1 \right) \xj^{ss} \right) \\
    &\wj^{ss} = \frac{\paramxj}{\xj^{ss}}  \label{eq:wjss_p1} \\
    &\vj^{ss} = \frac{RT}{F}\ln\left( \frac{\Ose + \left( z_j - 1 \right)\xj^{ss}}{2 \cle} \right), \label{eq:vjss_p1}
\end{align}
\label{eq:jss_p1}
\end{subequations}
where $\GionAbl,\GionBbl$ are given in \eqref{eq:Gionjp1} and $\xj^{ss}$ is given in \eqref{eq:xjssp1}. 
\end{proposition}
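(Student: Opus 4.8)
The plan is to follow the proof of Proposition~\ref{proposition:equilibrium} almost verbatim, the single new ingredient being that the basolateral pump displaces the sodium and potassium driving forces from zero, while chloride, carrying no pump term, still relaxes to its Nernst value. First I would set the right-hand sides of \eqref{eq:ode_con_a}--\eqref{eq:ode_con_b} to zero and record, for each ion, the two balance equations in terms of the electrochemical driving forces $u_{\text{ion},A}:=\vA-\eionA$ and $u_{\text{ion},B}:=\vB-\eionB$. For sodium and potassium (cations, using the apical flux \eqref{eq:dION}) these rearrange to
\begin{align*}
\gionbl\,u_{\text{ion},A}+\gionap\left(u_{\text{ion},A}-u_{\text{ion},B}\right)&=\mathbf{p}_{\text{ion},1},\\
\gionpara\,u_{\text{ion},B}-\gionap\left(u_{\text{ion},A}-u_{\text{ion},B}\right)&=0,
\end{align*}
a linear system whose determinant $\gionbl\gionap+\gionbl\gionpara+\gionap\gionpara$ is nonzero precisely because at least two of the three conductances are positive (the hypothesis of Lemma~\ref{lemma:existence_ss}). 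Solving by Cramer's rule gives the compact formula $u_{\text{ion},j}=\mathbf{p}_{\text{ion},1}\,\Gionjbl$ for $j\in\{A,B\}$, with $\Gionjbl$ exactly the constants in \eqref{eq:Gionjp1}; inserting $\mathbf{p}_{Na,1}=-\gamma_{Na}\pumprate\Arbl$ and $\mathbf{p}_{K,1}=\gamma_{K}\pumprate\Arbl$ yields $\vj^{ss}-E_{Na,j}^{ss}=-\gamma_{Na}\pumprate\Arbl\Gnajbl$ and $\vj^{ss}-E_{K,j}^{ss}=\gamma_{K}\pumprate\Arbl\Gkjbl$. For chloride there is no pump, and the analogous homogeneous system—whose determinant $\gclbl\gclap+\gclbl\gclpara+\gclap\gclpara$ is nonzero by the first hypothesis here—has only the trivial solution, so $u_{Cl,A}=u_{Cl,B}=0$ and $\vj^{ss}=E_{Cl,j}^{ss}$.

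Next I would convert these driving-force relations into the stated concentration formulas. Eliminating $\vj^{ss}$ through the chloride Nernst identity $\vj^{ss}=\tfrac{RT}{F}\ln(\clj^{ss}/\cle)$ turns the sodium and potassium relations into \eqref{eq:najss_p1}--\eqref{eq:kjss_p1}, after writing $\naj^{ss}=\nae\exp(-\tfrac{F}{RT}\vj^{ss})\exp(-\tfrac{F}{RT}\gamma_{Na}\pumprate\Arbl\Gnajbl)$ and similarly for potassium. The osmotic step is identical to the passive case: zeroing the right-hand sides of \eqref{eq:ode_water} and using that at least two of $\nu_1,\nu_2,\nu_p$ are positive forces $\OsA^{ss}=\OsB^{ss}=\Ose$, and combining this with electroneutrality exactly as in \eqref{eq:ose_EQ}--\eqref{eq:Ose_p_EN_EQ} gives $\clj^{ss}$ together with the scalar relation
\[
4\Cpbl=\left(\Ose+(z_j-1)\xj^{ss}\right)\left(\Ose-(z_j+1)\xj^{ss}\right),
\]
in which the left-hand side is precisely $4\Cpbl$ from \eqref{eq:Cpbl} once the sodium and potassium expressions are inserted into $\naj^{ss}+\kj^{ss}$. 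Expanding the right-hand side to $\Ose^2-2\Ose\,\xj^{ss}+(1-z_j^2)(\xj^{ss})^2$ and solving the resulting quadratic gives \eqref{eq:xjssp1}; Lemma~\ref{lemma:existence_ss}, via condition \eqref{eq:na-k>0_bl} and the restriction $0<\pumprate<\pMax$, guarantees $\xj^{ss}>0$ and hence a finite positive volume $\wj^{ss}=\paramxj/\xj^{ss}$. Back-substitution then delivers the remaining entries of \eqref{eq:jss_p1}.

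The existence and explicit-formula part is thus routine once the linear systems are recognized, and the real obstacle is the local stability claim. I would defer this to the linearization carried out in Section~\ref{sec:localstability}, which treats the active system directly (the passive Proposition~\ref{proposition:equilibrium} being the case $\pumprate=0$). The key structural observation is that a \emph{constant} pump contributes no state-dependent terms, so its linearization vanishes and the Jacobian of the active system has the same functional form as that of the passive system, merely evaluated at the shifted steady state \eqref{eq:jss_p1}. The plan there is to rewrite the differential-algebraic system in the solute amounts $N_{\text{ion},j}=\wj\ionj$ together with the volumes, eliminate the voltages by means of \eqref{eq:alg_volt} and electroneutrality, and show that the reduced Jacobian is Hurwitz. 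The hardest point is controlling the sign structure of that Jacobian uniformly over $\pumprate\in(0,\pMax)$; since the pump drops out of the linearization, this reduces to the passive analysis and can be closed by the same argument used there—an entropy/Lyapunov functional or an explicit diagonal-dominance estimate on the reduced matrix.
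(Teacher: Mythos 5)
Your derivation of the explicit steady-state formulas is correct and follows essentially the same route as the paper's proof: set the ionic right-hand sides to zero, solve the resulting $2\times2$ linear systems for the driving forces $\vj^{ss}-\eionj^{ss}$ (the paper sums the two equations and back-substitutes where you invoke Cramer's rule---the same computation, producing exactly $\Gnajbl$, $\Gkjbl$ and the zero chloride driving force), then combine osmotic balance with electroneutrality to reduce everything to $4\Cpbl=\left(\Ose+(z_j-1)\xj^{ss}\right)\left(\Ose-(z_j+1)\xj^{ss}\right)$, whose admissible root is \eqref{eq:xjssp1}. You also allocate the hypotheses correctly (Lemma~\ref{lemma:existence_ss} for Na$^+$/K$^+$, the proposition's first assumption for Cl$^-$, the second for the osmotic step), and your potassium relation $\vj^{ss}-E_{K,j}^{ss}=\gamma_K\,\pumprate\,\Arbl\,\Gkjbl$ is the version consistent with \eqref{eq:kjss_p1} (the paper's intermediate Equation~\eqref{eq:vj-ekj} carries a typographical $\gamma_{Na}$ there).

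The genuine gap is in how you claim the stability assertion gets closed. Deferring to the linearization of Section~\ref{sec:localstability} is exactly what the paper does, and your structural observation---that a constant pump contributes nothing to the Jacobian, which therefore has the passive functional form evaluated at the pump-shifted steady state---is correct and implicit in the paper's matrix $M$ of \eqref{eq:Jacobian}. But the paper never proves $M$ is Hurwitz analytically: it computes the eigenvalues symbolically with variable-precision arithmetic (because $M$ is nearly singular) at the \emph{default} parameter values while sweeping $\pumprate\in(0,\pMax)$, and observes numerically that they are negative. There is no entropy/Lyapunov functional and no diagonal-dominance estimate anywhere in the paper to which your argument could ``reduce'': the passive Proposition~\ref{proposition:equilibrium} defers its own stability claim to this same active-case computation (as the case $\pumprate=0$), so your proposed reduction to ``the same argument used there'' is circular---the argument you are pointing to does not exist, and the discussion section explicitly leaves Lyapunov/free-energy methods for the two-compartment system as future work. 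Any honest completion of your plan would have to establish the sign structure of $M$ at the shifted steady state directly, uniformly over $\pumprate\in(0,\pMax)$ and over the admissible parameter set; that is precisely the step neither you nor the paper carries out analytically, so you should either supply it or state the stability conclusion as resting on numerical verification, as the paper in effect does.
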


Note that, unlike in the passive case, $\ye$ is not required to be positive. Indeed, as long as the NKA pump remains active, it compensates for $\ye$ and maintains regulation of the 2-compartment system.

\begin{proof}
 To derive the steady states, we set the right-hand side of Equations \eqref{eq:ode_con_a} and \eqref{eq:ode_con_b} to zero. For sodium, this yields:
\begin{subequations}
\begin{align}
    &0=-\gnabl \left(\vA^{ss}-\enaA^{ss}\right) - \gnaap\left(\left(\vA^{ss}-\enaA^{ss}\right)-\left(\vB^{ss}-\enaB^{ss}\right)\right)-\gamma_{Na} \, \pumprate \, \Arbl \label{eq:set_dnaA_0} \\
    &0=-\gnapara\left(\vB^{ss}-\enaB^{ss}\right)+\gnaap\left(\left(\vA^{ss}-\enaA^{ss}\right)-\left(\vB^{ss}-\enaB^{ss}\right)\right). \label{eq:set_dnaB_0}
\end{align}
\end{subequations}
Summing \eqref{eq:set_dnaA_0} and \eqref{eq:set_dnaB_0}, and assuming that at least one of $\gnabl$, $\gnapara$ is non-zero, we can solve for the electrochemical potential difference in the corresponding compartment. For instance if $\gnabl>0$, then: 
\[\vA^{ss} - \enaA^{ss} = \frac{-\gnapara\left(\vB^{ss} - \enaB^{ss}\right) - \gamma_{Na} \, \pumprate \, \Arbl}{\gnabl}.\]
 Substituting this expression into Equation \eqref{eq:set_dnaB_0} yields 
\begin{subequations}
\begin{align*}
    &0=-(\vB^{ss}-\enaB^{ss})(\gnabl\gnaap+\gnabl\gnapara+\gnaap\gnapara)- \gnaap\gamma_{Na} \, \pumprate \, \Arbl .
\end{align*}
\end{subequations}
We can now solve for  $\vB^{ss}-\enaB^{ss}$ and then for $\vA^{ss}-\enaA^{ss}$. Similarly,  
we can solve for the electrochemical potential of potassium and chloride in each compartment.
In summary, for $j\in\{A,B\}$,  the electrochemical potential of sodium,  potassium, and chloride become:
\begin{subequations}
    \begin{align}
        &\vj^{ss} - E_{Na,j}^{ss} = -\gamma_{Na} \, \pumprate \, \Arbl \Gnajbl  =: \Cnajpbl \label{eq:vj-enaj}\\
        &\vj^{ss} - E_{K,j}^{ss} = \gamma_{Na} \, \pumprate \, \Arbl\Gkjbl  =: \Ckjpbl \label{eq:vj-ekj}\\
        &\vj^{ss} - E_{Cl,j}^{ss}  = 0 =: \Ccljpbl\label{eq:vj-eclj}
    \end{align}
    \label{eq:vj-eionj1}
\end{subequations}
where, for ions Na\textsuperscript{+} and K\textsuperscript{+}, the constants $\Gionjbl$ are defined in \eqref{eq:Gionjp1} as
    \begin{align*}
    &\GionAbl = \dfrac{\gionap + \gionpara}{\gionbl \, \gionap  + \gionbl \, \gionpara  + \gionap \, \gionpara} \\
    &\GionBbl = \dfrac{\gionap}{\gionbl \, \gionap  + \gionbl \, \gionpara  + \gionap \, \gionpara} .
    \end{align*}
  
  We denote the right-hand side of the expressions in Equation~\eqref{eq:vj-eionj1} by $\Cionjpbl$. By isolating the Nernst potentials in Equation~\eqref{eq:vj-eionj1} and solving for the intracellular concentrations $\ionj^{ss}$, we obtain the following expression for $[\text{ion}]_j^{ss}$ in terms of $\vj^{ss}$:
\begin{align}
    [\text{ion}]_j^{ss} = [\text{ion}]_e \exp\left( \frac{F}{RT}\left( \Cionjpbl - \vj^{ss} \right) \right). \label{eq:ionj1}
\end{align}

Next, we derive an expression for $\vj^{ss}$ in terms of fixed parameters. Since the addition of the pump mechanism does not affect the volume equations, we can proceed similarly to the previous section and derive Equation~\eqref{eq:Ose_p_EN_EQ} for the steady state values using the expression for $[\text{ion}]_j^{ss}$ given in Equation~\eqref{eq:ionj1}:
 \begin{align}
    \naj^{ss} + \kj^{ss} = \frac{1}{2}\left( \Ose - \left( z_{j} + 1 \right) \xj^{ss} \right) .\label{eq:Ose_p_EN_SS}
\end{align}

Similarly, since $\vj^{ss} = E_{\text{Cl},j}^{ss}$, the voltage $\vj^{ss}$ can be derived analogously to Equation~\eqref{eq:voltage_EQ}, resulting in Equation~\eqref{eq:vjss_p1}.

Substituting Equations~\eqref{eq:vjss_p1} and~\eqref{eq:ionj1} into Equation~\eqref{eq:Ose_p_EN_SS} yields:
\begin{equation*}
\begin{split}
    \frac{1}{2}\left( \Ose - \left( 1 + z_j \right) \xj^{ss} \right) =\; &\nae \exp\left( \frac{F}{RT} \left( \Cnajpbl - \frac{RT}{F} \ln\left( \frac{\Ose + \left( z_j - 1 \right) \xj^{ss}}{2 \cle} \right) \right) \right) \\
    &+ \ke \exp\left( \frac{F}{RT} \left( \Ckjpbl - \frac{RT}{F} \ln\left( \frac{\Ose + \left( z_j - 1 \right) \xj^{ss}}{2 \cle} \right) \right) \right),
\end{split}
\end{equation*}
\noindent
which simplifies to Equation~\eqref{eq:xjssp1} with the definition in Equation~\eqref{eq:Cpbl}. Since $0 < \pumprate < \pMax$, Lemma~\ref{lemma:existence_ss} guarantees that $\xj^{ss}$ is strictly positive, ensuring the steady states are well-defined. 

The remainder of Equation~\eqref{eq:jss_p1} follows from Equations~\eqref{eq:vjss_p1},~\eqref{eq:ionj1}, and the identity $w_j = \paramxj / [X_j]$.

The proof of stability is discussed in detail in Section~\ref{sec:localstability}.
\end{proof}

Cellular volume can be challenging to compare across figures because it is measured in \KT{liters}  
and varies widely in magnitude between systems. To address this, we normalize volume by $\wnorm$, defined as $\wA^{ss}$ evaluated at default parameter values with the pump rate fixed at $\pumprate = 1$ $\mu$A dm$^{-2}$ on the basolateral membrane:
\begin{equation}
    \wnorm = \left.\wA^{ss}\right|_{\text{defaults},\, \pumprate = 1 \text{ $\mu$A dm$^{-2}$}} . \label{eq:wnorm}
\end{equation}

%~~~~~~~~~~~~
\subsubsection{A note on the possible range of the pump rate}

In Lemma~\ref{lemma:existence_ss}, we showed that if $\gamma_{Na}\nae\Gnajbl - \gamma_{K}\ke\Gkjbl > 0$, then there exists a range for the pump rate per unit area $\pumprate$, denoted by $(0,\pMax)$, for which steady states exist and, in particular, the compartment volumes remain finite. Define
\begin{equation} \label{eq:fj1} 
    f_{j,1}\left(\pumprate\right):= 4 \, \Cpbl(\pumprate)-\Ose^2. 
\end{equation}
Our default parameter values are chosen so that \eqref{eq:na-k>0_bl} holds. Therefore, we expect that $f_{j,1}(\pumprate) < 0$ over some interval of $\pumprate$. Figure~\ref{fig:fj1fig} shows the plots of $f_{j,1}(\pumprate)$ for compartments $A$ and $B$, which appear nearly identical. The right panel provides a zoomed-in view near their positive roots, denoted $p_{\max,A}$ and $p_{\max,B}$. Solving for these roots numerically yields $p_{\max,A} \approx 3,402$ $\mu$A dm$^{-2}$ and $p_{\max,B} \approx 4,082$ $\mu$A dm$^{-2}$. In the simulations presented in the following sections, we fix $\pMax = \min\{\pMaxAbl, \pMaxBbl\}$ and vary $\pumprate$ within the range $(0, \pMax)$.

\begin{figure}[h!]
    \centering
    \includegraphics[width=0.6\linewidth]{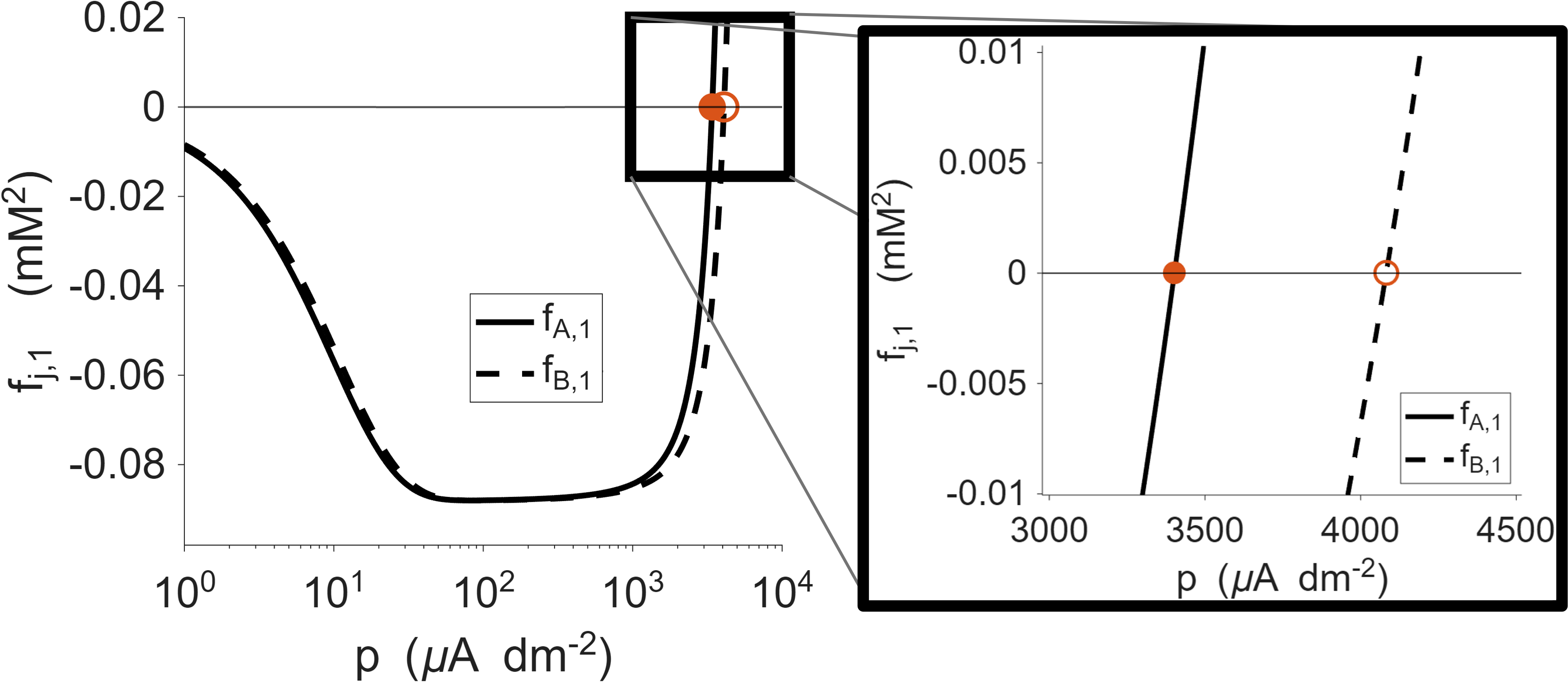}
    \caption{
    The admissible range of $\pumprate$ is computed numerically  
    by plotting $f_{j,1}(\pumprate) := 4 \, \Cpbl(\pumprate) - \Ose^2$ and identifying the interval $(0, \pMax)$ where $f_{j,1}(\pumprate) < 0$. In the zoomed-in panel on the right, the open and closed circles denote the roots of $f_{A,1}$ and $f_{B,1}$, respectively.} 
    \label{fig:fj1fig}
\end{figure}

The optimal pump rate $\pumprate$, denoted by $p_{\min}$, is defined as the value of $\pumprate$ that minimizes the steady state compartment volume $\wj^{ss}$. To compute $p_{\min}$, we differentiate $\wj^{ss}(\pumprate)$ from Equation~\eqref{eq:wjss_p1} with respect to $\pumprate$ and solve for the value at which the derivative is zero. For any $z_j$ and $j \in {A, B}$, the minimizing pump rate $p_{\min,j}$ is given by:
\begin{align}
p_{\min,j} &:= \frac{RT}{F} \left( \dfrac{1}{\gamma_K \Gkjbl + \gamma_{Na} \Gnajbl} \right) \ln\left( \dfrac{\gamma_{Na} \nae \Gnajbl}{\gamma_{K} \ke \Gkjbl} \right) \; / \; \arbl.
\label{eq:pminp1j}
\end{align}
For $p_{\min,j}$ to be positive, the argument of the logarithm must exceed 1; that is,
 \[\dfrac{\gamma_{Na} \nae \Gnajbl}{\gamma_{K} \ke \Gkjbl}>1,\]
  which is precisely the condition in Equation~\eqref{eq:na-k>0_bl}, a necessary condition for the existence of steady states. An analogous condition has also been used for single-compartment models in previous studies \cite{mori2012mathematical, aminzare2024mathematical}.
Finally, we note that $p_{\min,j}$ depends on the sodium and potassium conductances, the stoichiometry of the NKA pump, and the extracellular sodium and potassium concentrations. Since the values of $p_{\min,j}$ play a key role in understanding the system's behavior, we compute them numerically using our default parameter values and use them in the simulations presented in the following sections:
$
p_{\min,A} \approx 80 \text{ $\mu$A dm$^{-2}$}$ and  $p_{\min,B} \approx 90 \text{ $\mu$A dm$^{-2}$}.
$

%~~~~~~~~~~~~
\subsubsection{A note on ion flows of a ABp system at steady states}
For an ABp system with the NKA pump located on the apical surface, the ionic fluxes across the three interfaces are
\begin{subequations}
\begin{align}
    &\fluxionbl = \zion \gionbl \left( \vA - \eionA \right)+ \pionbl, \\
    &\fluxionap = - \zion \gionap \left[ \left( \vA - \eionA \right) - \left( \vB - \eionB \right) \right] , \\
    &\fluxionpara = - \zion \gionpara \left( \vB - \eionB \right),
\end{align}
\label{eq:fluxes_Baso}
\end{subequations}
where 
$ \pnabl=-\gamma_{Na} \, \pumprate \, \Arbl$, $\pkbl = \gamma_{K} \, \pumprate \, \Arbl,$ and $\pclbl=0.$

Using the steady state values derived in Equation~\eqref{eq:jss_p1}, we show that the absolute values of total ion and water flows remain equal across the apical, basolateral, and paracellular pathways, that is,
\[
\fluxionbl^{ss} = \fluxionap^{ss} = \fluxionpara^{ss}.
\]
As illustrated in Figure~\ref{fig:total_flow} (left panel), sodium flow forms a counterclockwise loop: sodium moves from compartment $B$ to $A$ through the apical surface, exits compartment $A$ through the basolateral membrane, and returns to compartment $B$ via the paracellular pathway.
The total fluxes across these three membranes are plotted in the right panel as functions of the pump rate $\pumprate$. As $\pumprate$ increases, the total flux magnitude across each interface increases while remaining equal, confirming conservation of flow at the steady states.
Similarly, the potassium flow follows the opposite direction (clockwise), moving from the extracellular environment into compartment $A$, then to $B$ through the basolateral membrane and apical surface, respectively, and finally leaving compartment $B$ through the paracellular pathway. As with sodium, the absolute value of the total potassium flow increases with the NKA pump rate $\pumprate$, as seen in the right panel. {
Moreover, \ZA{due to the $3{:}2$ pump stoichiometry,} the magnitude of the Na$^+$ flux is slightly larger than that of the K$^+$ flux over the plotted range at each interface.
}
The flows of chloride and water remain zero across all membranes (data not shown).
\begin{figure}[h!]
    \centering
    \includegraphics[width=0.7\linewidth]{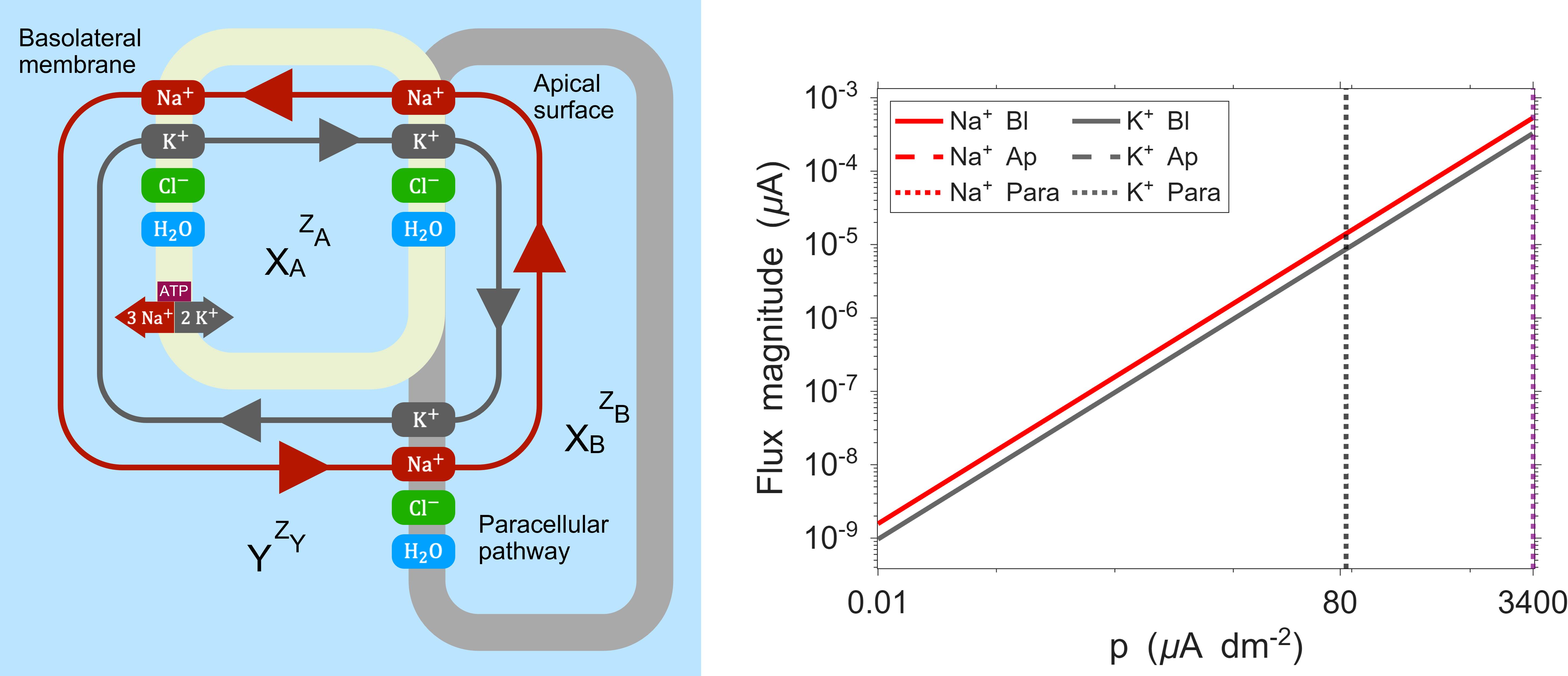}
   \caption{
    \textbf{Sodium {and potassium} flow across the membranes at steady state.}
   The arrows in the left schematic of the ABp system illustrate counterclockwise sodium and clockwise potassium transport loops. Sodium moves from compartment $B$ to $A$ across the apical surface, exits $A$ through the basolateral membrane, and returns to $B$ via the paracellular pathway, whereas potassium moves from $A$ to $B$ across the apical surface, from $B$ to the ISF through the paracellular pathway, and back to $A$ across the basolateral membrane. 
The right panel shows that the magnitudes of the total sodium (red) and potassium (gray) fluxes across the three interfaces increase with the pump rate $\pumprate$, {with Na$^+$ fluxes slightly larger than K$^+$ fluxes over the plotted range and equal across their respective interfaces}. 
{Here, \emph{total} flux denotes the area-integrated flux (in $\mu$A) and is not normalized by membrane or pathway area.}
}
    \label{fig:total_flow}
\end{figure}

%~~~~~~~~~~~~~~~~~~~~~~~~~~~~~~~~~~~~~~~~~~~~~~~~~~~~~~~~~~
\subsection{Local stability analysis of the steady states of coupled PLEs}
\label{sec:localstability}

In this section, we show that the ABp system's steady state is locally stable, i.e., the eigenvalues of the corresponding Jacobian matrix of PLEs evaluated at the steady state derived in Proposition~\ref{proposition:steadystate} have negative real parts.

 We first perform a change of variable and rewrite PLEs in these new variables. Let 
\begin{align*}
 &x_1=F\wA\naA,\; y_1=F\wA\kA,
\; z_1=F\wA\clA,\; \uA=F\wA, \\
 &x_2=F\wB\naB,\; y_2=F\wB\kB, \;z_2=F\wB\clB, \; \uB=F\wB.
\end{align*} 
Here, $x_i$, $y_i$, and $z_i$ have units of coulombs and represent the signed amounts of ionic charge per species in each compartment, and \KT{$u_i$} is the amount of charge you get per unit molarity in the compartment. This improves numerical conditioning as terms begin to vary.
In the new variables, the PLEs become:
\begin{subequations}
\begin{align}
    &\frac{d x_1}{dt} = -\gnabl \left( \vA - \enaA \right) + \dna - \gamma_{Na} \, \pumprate \, \Arbl ,  \\
    &\frac{d y_1}{dt}  = -\gkbl \left( \vA - \ekA \right) + \dk + \gamma_{K} \, \pumprate \, \Arbl ,  \\
    &\frac{d z_1}{dt} = \gclbl \left( \vA - \eclA \right) +  \dcl, \\
    &\frac{ d \uA }{ dt } = \nu_1 (\OsA-F\Ose) + \nu_2 (\OsA-\OsB) \label{eq:dwA_cv}, \\
    &\frac{d x_2 }{dt} = -\gnapara \left( \vB - \enaB \right) - \dna ,  \\
    &\frac{dy_2}{dt}  = -\gkpara\left( \vB - \ekB \right) -\dk , \label{eq:dkB_cv} \\
    &\frac{d z_2}{dt} = \gclpara \left( \vB - \eclB \right) - \dcl, \\
    &\frac{ d \uB }{ dt } = \nu_p(\OsB-F\Ose) - \nu_2 (\OsA-\OsB) .
\end{align}
\label{eq:ode_changevariable}
\end{subequations}
In terms of the new variables, we have the following expressions for membrane potential and osmolarity in compartments A and B:
\begin{align*}
&\vA=\frac{1}{C_{m,A}}(x_1+y_1+z_1+F\paramxA),\; \vB=\frac{1}{C_{m,B}}(x_2+y_2+z_2+F\paramxB),\\
&\OsA=\frac{F}{\uA}(x_1+y_1+z_1+F\paramxA), \; \OsB=\frac{F}{\uB}(x_2+y_2+z_2+F\paramxB).
\end{align*}
The Nernst potential for each ion in compartments A and B can be expressed as follows:
\[\eionA=\frac{RT}{F}\ln\left(\frac{\uA\ione}{\text{ion}_1}\right), \; \eionB=\frac{RT}{F}\ln\left(\frac{\uB\ione}{\text{ion}_2}\right).\]

Linearizing Equation~\eqref{eq:ode_changevariable} around the steady state (derived in Equation~\eqref{eq:jss_p1}) yields an $8\times 8$ Jacobian matrix $M$, which can be written in block form as
\begin{align}\label{eq:Jacobian}
M = \begin{bmatrix}M_A & M_{AB} \\ M_{BA} & M_B \end{bmatrix}
\end{align}
where $M_A, M_{AB}, M_{BA},$ and $M_B$ are $4\times 4$ matrices defined below. 
\[M_A= {\begin{bmatrix}
        -G_{A}^{Na}(\frac{1}{C_{m,A}}+\frac{RT}{F}\frac{1}{x_1}) & -G_{A}^{Na}\frac{1}{C_{m,A}} & G_{A}^{Na}\frac{1}{C_{m,A}} & G_{A}^{Na}\frac{RT}{F} \frac{1}{\uA} \\
        -G_{A}^{K}\frac{1}{C_{m,A}} & -G_{A}^{K}(\frac{1}{C_{m,A}}+\frac{RT}{F}\frac{1}{y_1}) & G_{A}^{K}\frac{1}{C_{m,A}} & G_{A}^{K}\frac{RT}{F}\frac{1}{\uA} \\
        G_{A}^{Cl}\frac{1}{C_{m,A}} & G_{A}^{Cl}\frac{1}{C_{m,A}} & -G_{A}^{Cl}(\frac{1}{C_{m,A}}+\frac{RT}{F}\frac{1}{z_1}) & G_{A}^{Cl}\frac{RT}{F}\frac{1}{\uA} \\
        \nu_A\frac{F}{\uA} & \nu_A\frac{F}{\uA} & \nu_A\frac{F}{\uA} & -\nu_A\sigma_A\frac{F}{\uA^2}
    \end{bmatrix}}
\]
where $G_{A}^{Na}=\gnabl+\gnaap$, $G_{A}^{K}=\gkbl+\gkap$, $G_{A}^{Cl}=\gclbl+\gclap$, $\nu_A=\nu_1+\nu_2$, and $\sigma_A=x_1+y_1+z_1+F\paramxA$.
\[
    M_B={\begin{bmatrix}
        -G_{B}^{Na}(\frac{1}{C_{m,B}}+\frac{RT}{F}\frac{1}{x_2}) & -G_{B}^{Na}\frac{1}{C_{m,B}} & G_{B}^{Na}\frac{1}{C_{m,B}} & G_{B}^{Na}\frac{RT}{F}\frac{1}{\uB} \\
        -G_{B}^{K}\frac{1}{C_{m,B}} & -G_{B}^{K}(\frac{1}{C_{m,B}}+\frac{RT}{F}\frac{1}{y_2}) & G_{B}^{K}\frac{1}{C_{m,B}} & G_{B}^{K}\frac{RT}{F}\frac{1}{\uB} \\
        G_{B}^{Cl}\frac{1}{C_{m,B}} & G_{B}^{Cl}\frac{1}{C_{m,B}} & -G_{B}^{Cl}(\frac{1}{C_{m,B}}+\frac{RT}{F}\frac{1}{z_2}) & G_{B}^{Cl}\frac{RT}{F}\frac{1}{\uB} \\
        \nu_B\frac{F}{\uB} & \nu_B\frac{F}{\uB} & \nu_B\frac{F}{\uB} & -\nu_B\sigma_B\frac{F}{\uB^2}
    \end{bmatrix}},
\]
where $G_{B}^{Na}=\gnapara+\gnaap$, $G_{B}^{K}=\gkpara+\gkap$, $G_{B}^{Cl}=\gclpara+\gclap$, $\nu_B=\nu_p+\nu_2$, and $\sigma_B=x_2+y_2+z_2+F\paramxB$.
\[
   M_{AB}= \begin{bmatrix}
        \gnaap(\frac{1}{C_{m,B}} + \frac{RT}{F}\frac{1}{x_2}) & \gnaap\frac{1}{C_{m,B}} & -\gnaap\frac{1}{C_{m,B}} & -\gnaap\frac{RT}{F}\frac{1}{\uB} \\
        \gkap\frac{1}{C_{m,B}} & \gkap(\frac{1}{C_{m,B}}+\frac{RT}{F}\frac{1}{y_2}) & -\gkap\frac{1}{C_{m,B}} & -\gkap\frac{RT}{F}\frac{1}{\uB} \\ 
        -\gclap\frac{1}{C_{m,B}} & -\gclap\frac{1}{C_{m,B}} & \gclap(\frac{1}{C_{m,B}}+\frac{RT}{F}\frac{1}{z_2}) & -\gclap\frac{RT}{F}\frac{1}{\uB} \\ 
        -\nu_2\frac{F}{\uB} & -\nu_2\frac{F}{\uB} & -\nu_2\frac{F}{\uB} & \nu_2\sigma_B\frac{F}{\uB^2} 
    \end{bmatrix} ,
\]
and 
\[
   M_{BA}= \begin{bmatrix}
        \gnaap(\frac{1}{C_{m,A}}+\frac{RT}{F}\frac{1}{x_1}) & \gnaap\frac{1}{C_{m,A}} & -\gnaap\frac{1}{C_{m,A}} & -\gnaap\frac{RT}{F}\frac{1}{\uA} \\
        \gkap\frac{1}{C_{m,A}} & \gkap(\frac{1}{C_{m,A}}+\frac{RT}{F}\frac{1}{y_1}) & -\gkap\frac{1}{C_{m,A}} & -\gkap\frac{RT}{F}\frac{1}{\uA} \\ 
        -\gclap\frac{1}{C_{m,A}} & -\gclap\frac{1}{C_{m,A}} & \gclap(\frac{1}{C_{m,A}}+\frac{RT}{F}\frac{1}{z_1}) & -\gclap\frac{RT}{F}\frac{1}{\uA} \\ 
        -\nu_2\frac{F}{\uA} & -\nu_2\frac{F}{\uA} & -\nu_2\frac{F}{\uA} & \nu_2\sigma_A\frac{F}{\uA^2} 
            \end{bmatrix} .
\]

For now, we fix all parameters at their default values (Tables~\ref{tab:constants}--\ref{tab:pump}) and vary only the pump rate $\pumprate$ from near zero to $p_{max}:= \Pmax/a_1 = \max\{\PmaxAbl, \PmaxBbl\}$, the full range over which steady states 
exist (See Figure~\ref{fig:eigval_p1}, left panel). Later, in Section~\ref{sec:dependencyOnParams}, we will analyze local stability as the other parameters vary. For any fixed $\pumprate$, the  right panel of Figure~\ref{fig:eigval_p1} shows the largest eigenvalue 
among the eight real eigenvalues of $M$. 
For all values of $\pumprate$ between $0$ and $p_{max}$, the eigenvalues are real and negative, indicating local stability. {Since the Jacobian $M$ is nearly singular throughout this range, we compute its eigenvalues symbolically using variable-precision arithmetic to reduce round-off errors caused by ill-conditioning~\cite{stor2015accurate}.}

\begin{figure}[h!]
    \centering
    \includegraphics[width=0.43\linewidth]{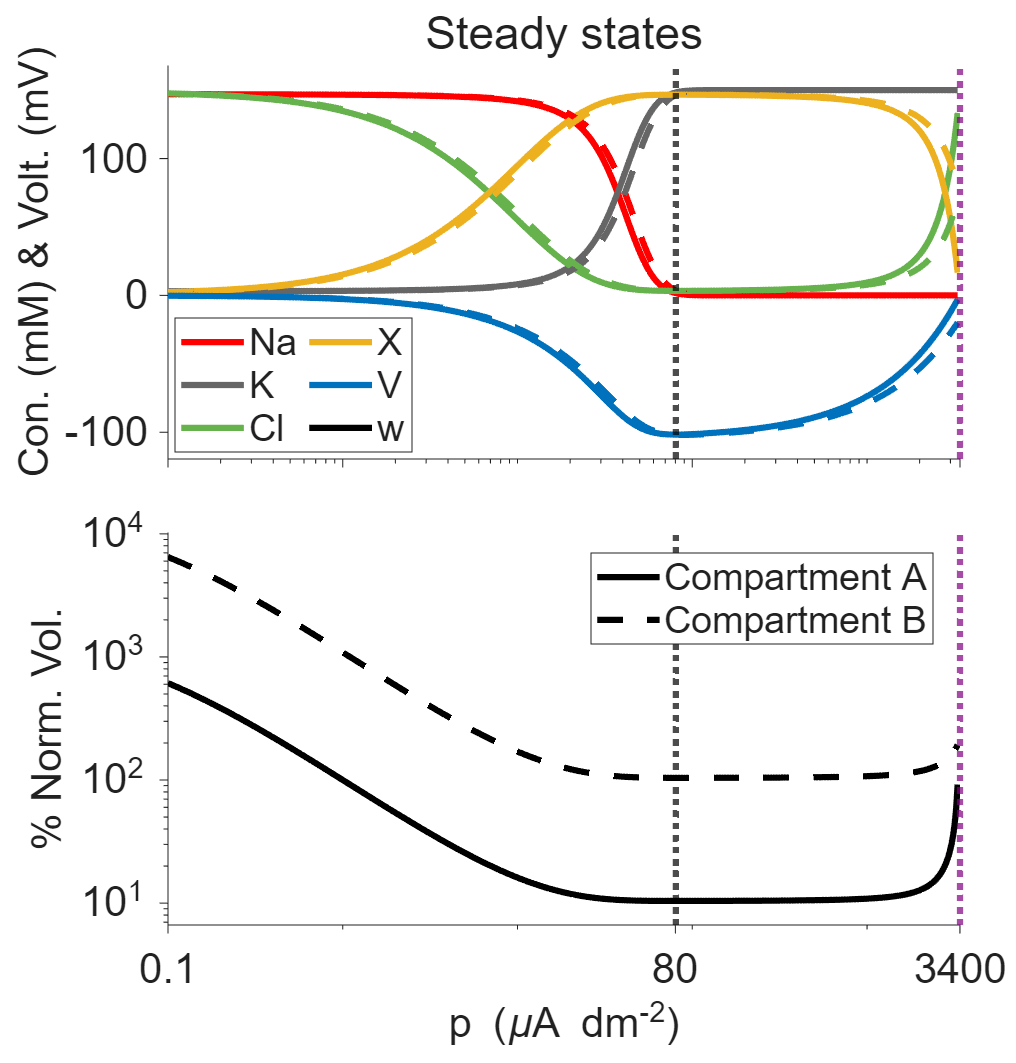} \quad
    \includegraphics[width=0.46\linewidth]{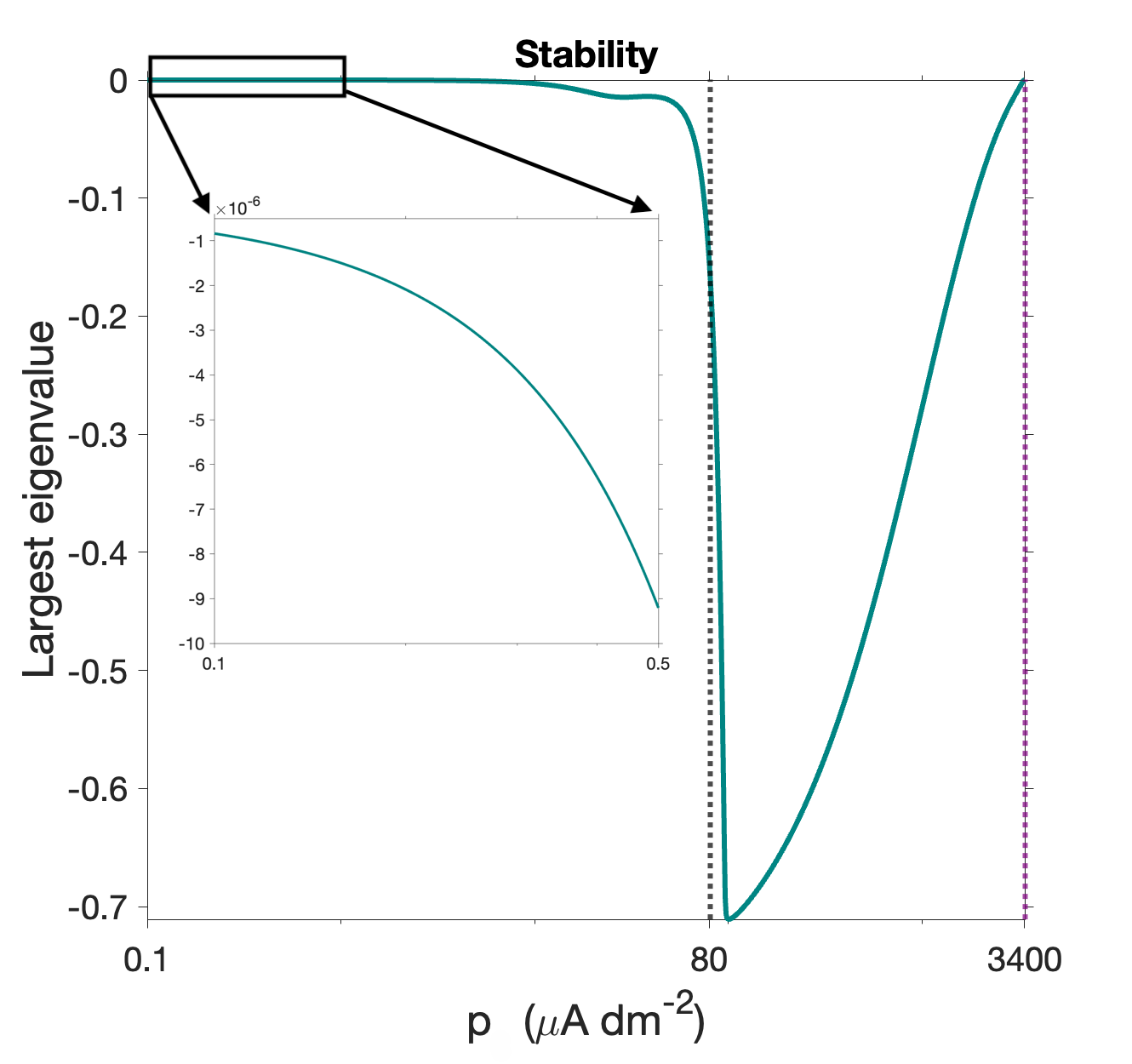} 
    % C:\Users\Lenovo Device\Dropbox\Zahra-Kerry\Summer 25_Eigenvalues\After-Zoom-New\Eigenvalues_vs_p1.m
  \caption{
        Steady state values of the ABp system  (Equation~\eqref{eq:jss_p1})  and the largest eigenvalue of the corresponding Jacobian matrix (Equation~\eqref{eq:Jacobian})   are plotted as functions of the pump rate $\pumprate$. 
        As shown in the right panel (and in the zoomed-in inset), all eigenvalues are negative for any given $\pumprate$, indicating that the steady states are locally stable. 
        The vertical lines mark $p_{\min}$ and $\pMax$, corresponding to the pump rates at which the steady state volume reaches its minimum and to the largest value of $\pumprate$ for which steady states exist, respectively. 
        The horizontal axis is shown on a logarithmic scale, and the vertical axis on a linear scale. All other parameters are fixed at their default values given in Tables~\ref{tab:constants}--\ref{tab:pump}.
    }
 \label{fig:eigval_p1}
\end{figure}

%~~~~~~~~~~~~~~~~~~~~~~~~~~~~~~~~~~~~~~~~~~~~~~~~~~~~~~~~~~~~~~~~~~~
\subsection{Special ABp systems} \label{sec:special_systems}

  In what follows, we examine two biologically relevant classes of ABp systems: \AK{The Koefoed-Johnsen-Ussing (KJU) model of epithelial transport} and cell-organelle systems. The KJU case is represented by an ABp model with $\ggnabl = \ggkap = 0$. The cell-organelle case corresponds to an AB model, obtained from the ABp framework by removing the permeable paracellular pathway, i.e., by setting $\ggionpara = \nu_p = 0$. These two cases parallel classical studies of epithelial transport and pump-leak cell volume regulation \cite{tosteson1960regulation}, and they provide mathematically tractable model systems for our analysis.

 \medskip 

\noindent \textbf{KJU model of epithelial transport.}  
In an epithelial system, there is typically a tube-like structure composed of a single layer of epithelial cells joined by tight junctions, forming a paracellular pathway. This arrangement is common in the gut, kidney, and other organs. The apical surface, often called the mucosal surface, faces the lumen of the tube, while the basolateral membrane, also known as the serosal surface, faces the interstitial space. Epithelia can be studied using an Ussing chamber, where the tissue is clamped between two chambers that allow control of the voltage across the epithelium and measurement of ionic current and sometimes volume flux. When the voltage is controlled and the current is measured, the system operates in the short-circuit configuration.

The foundational model for epithelial transport is the Koefoed-Johnsen-Ussing (KJU) model, which was extended by \cite[Section 2.8.2]{keener2009mathematical} to incorporate impermeant intracellular solutes, which were not included in the original formulation. In the KJU model, sodium conductance on the basolateral membrane  and potassium conductance on the apical surface are set to zero ($\ggnabl=\ggkap = 0$), reflecting physiological asymmetries. In this configuration, sodium and water are transported from the apical to the basolateral side and can flow continuously at steady state. 
A related single-cell epithelial model that includes a paracellular pathway was developed by Weinstein and collaborators \cite{Strieter1990}, capturing volume regulation via volume-activated chloride permeability.

In what follows, we compare the steady-state values of an KJU system with those of a general ABp system. To ensure that Assumption~\ref{assumption:equilibrium} holds, we require $\ggkbl,\,\ggkpara > 0$ and $\ggnaap,\,\ggnapara > 0$, which guarantees that Equation~\eqref{eq:na-k>0_bl} is satisfied. We further assume that at least two of $\nu_1$, $\nu_2$, and $\nu_p$ are nonzero. Under these conditions, the steady states of the KJU system---like those of the ABp system---are given explicitly by Equation~\eqref{eq:jss_p1} and therefore depend on several model parameters. Since $\ggkap = 0$, we have $G_{K,A} = 1/\ggkbl$ and $G_{K,B} = 0$. Therefore, even though $\ggkpara > 0$, the steady-state values of compartments $A$ and $B$ depend only on the potassium conductance along the basolateral membrane. In particular, no $K^+$ leaks into or out of compartment~$B$.

In the KJU system, $f_{B,1}$ has a unique negative root, so $f_{B,1}(\pumprate) < 0$ for all $\pumprate \ge 0$, that is, $\pMaxBbl = \infty$. In this case, we set $\pMax = \pMaxAbl < \infty$. Furthermore, since $\ggkap = 0$, we obtain $\pMinB = \infty$, and since $\ggnabl = 0$, $\pMinA$ becomes much smaller (approximately ten times smaller) than $\pMinA$ in a general ABp system. This indicates that the KJU system requires less energy to minimize $\wA^{ss}$ compared to the ABp system. This observation is consistent with the results shown in Figures~\ref{fig:heatmapA_p_v_gna1} and \ref{fig:robustness_gna1}, where small $\ggnabl$ decreases the volume steady states in the presence of a weak pump mechanism (i.e., small $\pumprate$).

Figure~\ref{fig:epithelial} depicts a schematic diagram for the KJU system (left), in which the sodium channels on the basolateral membrane and the potassium channels on the apical surface are removed. The steady states are shown as functions of the pump rate $\pumprate$ for $\pumprate < \pMaxAbl$. The value of $\pMinA$ is indicated by a vertical dashed line and is very small. Finally, the spectral abscissa {(the maximum real part among the eigenvalues of the Jacobian at steady state)} is plotted as a function of $\pumprate$, remaining negative and thereby reflecting the stability of the steady states.

\begin{figure}[h!]
    \centering
    \includegraphics[width=1\linewidth]{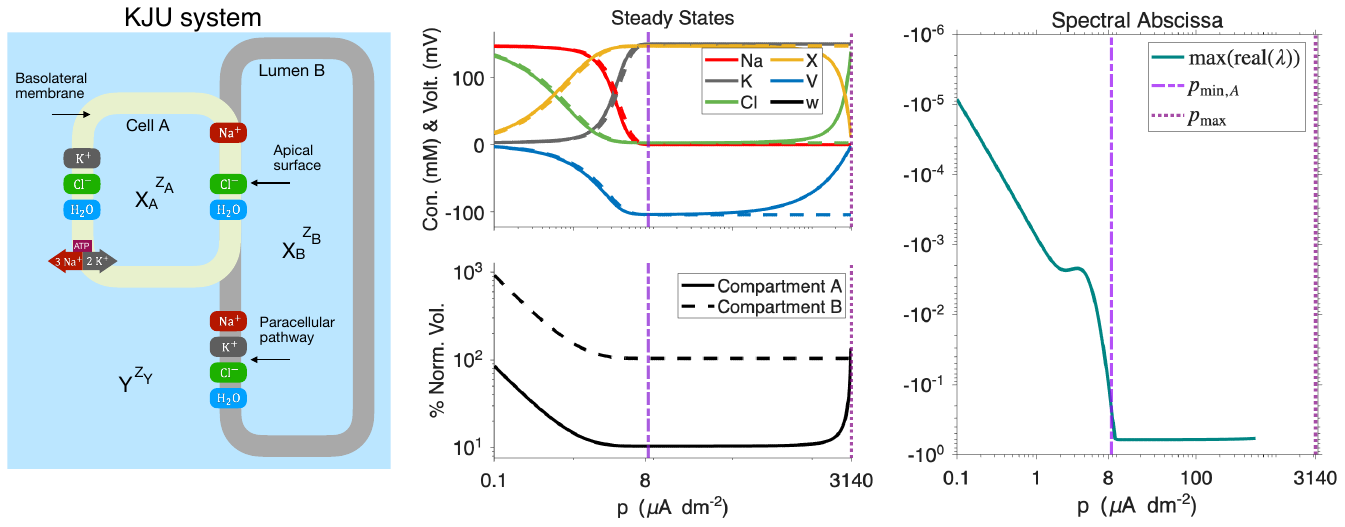}
    \caption{
        \textbf{KJU system.}
        The schematic diagram is shown, and the steady-state values together with the corresponding spectral abscissa are plotted as functions of the pump rate~$\pumprate$. 
        {Conductances are fixed at $\ggnabl=\ggkap=0$ with remaining parameters fixed at their default value.}
        }
    \label{fig:epithelial}
\end{figure}

\bigskip
\noindent \textbf{Cell-organelle AB systems. } 
Intracellular organelles such as the nucleus, mitochondria, and endoplasmic reticulum are {enclosed by lipid bilayer membranes that are topologically separate from the plasma membrane and that partition their interiors from the cytoplasm}. Because these organelles are surrounded by selective membranes and exchange ions with the cytoplasm, their volumes, ionic concentrations, and voltage can be modeled using pump-leak equations. In our framework, the organelle interior plays the role of compartment $B$ in an AB system with no paracellular pathway, that is, with $\ggionpara = \nu_p = 0$. 

In what follows, we consider an AB system, a special case of the ABp system without a paracellular pathway. Since in an AB system we have $\ggionpara = \nu_p = 0$, the existence of a steady state requires that all other ion conductances are positive and that $\nu_1 > 0$, $\nu_2 > 0$.

A simple calculation shows that when $\ggionpara = 0$, we have $\GionAbl = \GionBbl= 1/\gionbl$, and therefore $\mathcal{C}_A = \mathcal{C}_B$ (see Equation \eqref{eq:Cpbl}). The steady states depend only on two ion conductances, $\ggnabl$ and $\ggkbl$. Increasing $\ggnabl$ increases the steady state volume, while increasing $\ggkbl$ decreases it. Hence, if $\zA = \zB$, the two compartments admit identical steady states (except for volume, which differs by a factor $\paramxA/\paramxB$). We conclude that the two-compartment AB system effectively behaves like a single cell. Thus, similar to the equilibrium values of Equation \eqref{eq:EQ}, the only parameters that distinguish the two compartments are $\zA$ and $\zB$.

Figure~\ref{fig:AB_system} depicts a schematic diagram for the AB system (left), in which the channels on the paracellular pathway are removed. The steady states are shown as functions of the pump rate $\pumprate$.  The value of $\pMinA$ is indicated by a vertical dashed line and is very small. Finally, the spectral abscissa is plotted as a function of $\pumprate$, remaining negative and thereby reflecting the stability of the steady states.

\begin{figure}[h!]
    \centering
   \includegraphics[width=1\linewidth]{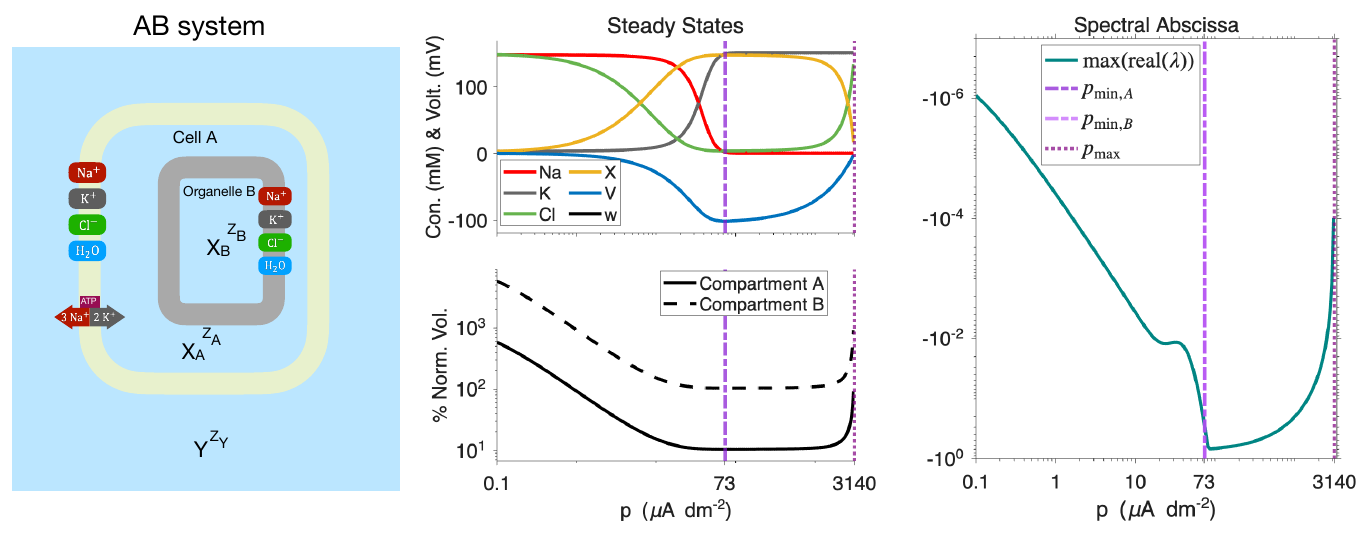}
        \caption{
\textbf{AB system.}
The schematic diagram is shown, and the steady-state values together with the corresponding spectral abscissa are plotted as functions of the pump rate~$\pumprate$.
{Assigning $\paramxB=\paramxA/10$ will make volume of compartment B a tenth of A's.}
       }
    \label{fig:AB_system}
\end{figure}

%~~~~~~~~~~~~~~~~~~~~~~~~~~~~~~~~~~~~~~~~~~~~~~~~~~~~~~~~~~~~~~~~~~~
\section{Effects of Biophysical Parameters on ABp System}
 \label{sec:dependencyOnParams}
In Equation~\eqref{eq:jss_p1} of Section~\ref{sec:active}, we explicitly derived the steady-state values of the coupled PLEs for general ABp systems. These expressions depend on many parameters in a highly nonlinear way, making it difficult to identify parameter influences by direct inspection. 

In the following three subsections, we use sensitivity analysis to:  
(1) identify the key parameters that alter the steady-state behavior of ABp systems, with emphasis on mechanisms that regulate compartmental volumes;  
(2) examine the robustness of these behaviors under parameter fluctuations; and  
(3) determine the parameters for which the ABp system maintains homeostasis 
 when they vary.

\subsection{\ZA{Key mechanisms that regulate the volumes in ABp systems}}
 \label{sec:dependencyOnParams_key_mechanisms}
 
We already saw that the NKA pump is a key mechanism that regulates cell and lumen volumes, and our results in the ABp system are consistent with those shown before in single-compartment models \cite{aminzare2024mathematical, mori2012mathematical, keener2009mathematical, FRASER2007336}. In this section, we use variance-based global sensitivity analysis (VBGSA) to identify the key model parameters that exhibit compensatory mechanisms for maintaining volume regulation when the NKA pump activity is reduced.

We consider the interaction among nine key parameters: the pump rate; sodium and potassium conductances across the basolateral  membrane, apical surface, and paracellular pathway; extracellular concentrations--primarily the combination of sodium and chloride--and the temperature. Note that chloride conductances do not appear in the steady state formulas, although they are essential for the operation of the PLM. Moreover, to preserve electroneutrality and positive osmolarity, not all extracellular concentrations can be varied simultaneously. All other model parameters are fixed at their default values. 

To vary $\nae$ and $\cle$ in a charge-balanced manner, we introduce the reparameterization
\begin{equation}
    \nacle = \nae + \cle \label{eq:nacle} .
\end{equation}
This additional parameter enables us to simultaneously vary $\nae$ and $\cle$ while maintaining electroneutrality since the charges of the two ions cancel each other out. The perturbed parameters are $\nae = \nae^{\text{default}} + \nacle$ and $\cle = \cle^{\text{default}} + \nacle$. 

For the numerical work, we consider a $9$-dimensional parameter vector $\theta=(\theta_k)\in\mathbb{R}^9$ where the index $k=1,\ldots,9$ refers respectively to the following parameters 
\[\pumprate,\; 
\ggnabl,\; 
\ggkbl,\;
\ggnaap,\;  
\ggkap,\;
\ggnapara,\;  
\ggkpara,\;
T,\;
\nacle,
\]
with default values denoted by $\theta_k^*$. Although we did not assign a default value to $\pumprate$ (unlike the other parameters), because we are investigating compensatory parameters under reduced pump activity, we set a small pump rate value here, namely $\theta_{\pumprate}^* = 1\,\mu\text{A}\,\text{dm}^{-2}$. 

For each parameter, we define a perturbation interval around its default value as
\begin{subequations}
\begin{align}
    %\theta_k \in 
    \Theta_k &:= \left[\theta_k^*\,\left(1-\beta\,\epsilon_k\right),\; \theta_k^* \,\left(1-\beta\,\epsilon_k\right)^{-1}\right], \quad\quad \mbox{for}\quad k\notin\left\{T, \, \nacle\right\},
    \label{eq:parameter_interval_log}\\
    %\theta_k\in
    \Theta_k &:= \left[\theta_k^*-\beta\,\epsilon_k,\; \theta_k^*+\beta\,\epsilon_k\right], \hspace{2.6cm} \mbox{for}\quad k\in\left\{T, \, \nacle\right\}.
    \label{eq:parameter_interval_linear}
\end{align}
\label{eq:parameter_interval}  
\end{subequations}
Here the scaling factor $\beta\in\KT{[0,1]}$ serves as a uniform perturbation level across all nine parameters in the model, while the values of $\epsilon_k$ are fixed as
\begin{equation}\label{eq:epsilon_values}
\begin{aligned}
    &\epsilon_1 = 0.1, \; \epsilon_2=0.25, \;
    \epsilon_3= 
    \epsilon_4=
    \epsilon_5= 
    \epsilon_6= \epsilon_7=0.5, \; \epsilon_8=25 \,^\circ\text{C}, \; \epsilon_9=25\, \text{mM}.
\end{aligned}
\end{equation}

We then define the parameter hypercube, denoted by $\mathcal{H}^{9}$, as
\begin{equation}
    \mathcal{H}^{9} \;:=\; \{ \theta \in \mathbb{R}^9 \, : \, \theta_k\in\Theta_k, \;\; \forall\, k\}.
    \label{eq:parameter_hypercube}
\end{equation}
From the parameter hypercube $\mathcal{H}^{9}$, we draw $N = 10^6$ parameter points for VBGSA and $N = 10^3$ parameter points for the robustness 
which will be used later in Section~\ref{sec:robustness}. 
We use Latin hypercube sampling (LHS) to sample from $\mathcal{H}^9$.
The sampling procedure is described in Section~\ref{sec:appendix_sampling_VBGSA}. 

VBGSA quantifies how uncertainty in parameters contributes to the variance of an output of interest (here, a steady state value of cell volume). It is a global, rather than local, sensitivity analysis, as the perturbation of multiple parameters is performed simultaneously \cite{JANSEN199935,saltelli2008}. For a given outcome value, the total variance is decomposed into contributions from each parameter or group of parameters and compared to the total variance in the output. This decomposition is summarized by Sobol indices, which are dimensionless numbers ranging from 0 to 1. A Sobol index close to 1 indicates that the parameter has a strong influence on the steady state, while an index close to 0 indicates a weak influence. Sobol indices partition output variance into first- and total-order indices (shown by blue and orange 
bars in Figure~\ref{fig:sobol_wA__0}), which indicate the contributions of each parameter \cite{SALTELLI2010259,DELA2022111159}.
\begin{figure}[h!]
    \centering
    \includegraphics[width=0.43\linewidth]{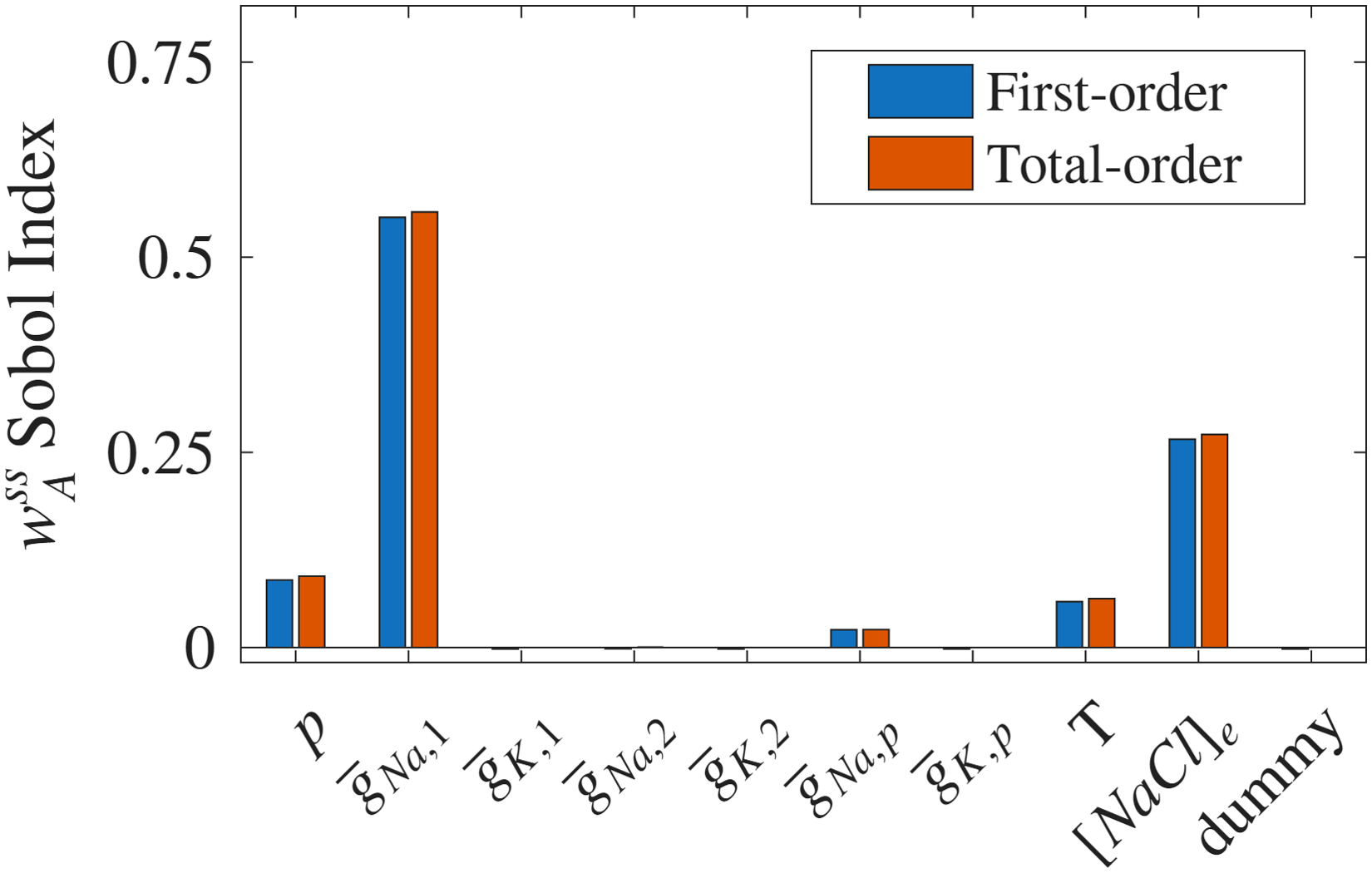}\qquad
    \includegraphics[width=0.43\linewidth]{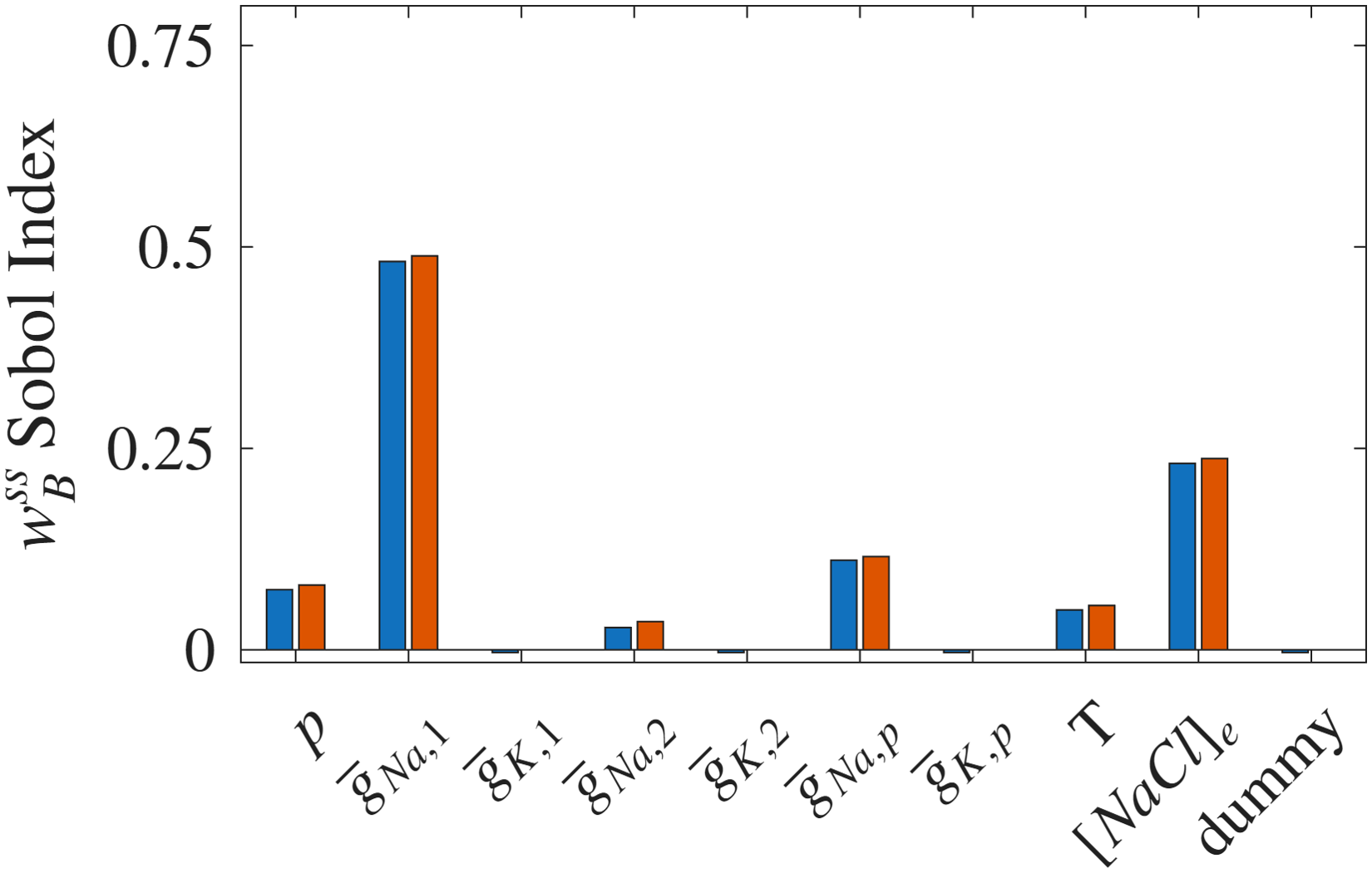}
    \caption{
        First- and total-order Sobol indices of the steady-state volumes of compartments $A$ (left) and $B$ (right) are shown as the nine parameters are varied as described in~\eqref{eq:parameter_interval}. The Sobol indices indicate that $\pumprate$, $\ggnabl$, $T$, and $\nacle$ contribute most to the variance in $\wA^{ss}$, with the largest contribution arising from $\ggnabl$ (highest bars) when $\pumprate$ is small {(using default value $1$ $\mu$A dm$^{-2}$)}. Similarly, the Sobol indices of $\wB^{ss}$ show that these same parameters are the most influential, with additional contributions from $\ggnaap$ and substantially larger contributions from $\ggnapara$.}
    \label{fig:sobol_wA__0}
\end{figure}

In Figure~\ref{fig:sobol_wA__0}, we use VBGSA with $10^6$ parameter points to compute the Sobol indices and show that the variance in the steady-state volume of compartment~A is primarily attributable to variations in $\pumprate$, $\ggnabl$, $\nacle$, and $T$. Similarly, the variance in the steady-state volume of compartment~B is mainly attributable to changes in $\pumprate$, $\ggnabl$, $\ggnaap$, $\ggnapara$, $\nacle$, and $T$.

Although the Sobol index diagram identifies which parameters influence the steady-state volume, it does not indicate whether these changes increase or decrease the volume. To address this, we perturb only two parameters at a time---$\pumprate$ and one of $\ggnabl$, $\ggnaap$, $\ggnapara$, $\nacle$, or $T$---and plot the resulting steady-state values as heat maps.

\ZA{
The Sobol indices shown in Figure~\ref{fig:sobol_wA__0} are computed over a restricted range of the pump rate, namely $\pumprate \in [0.1, 10]$~($\mu$A\,dm$^{-2}$). This restriction is imposed because, for larger values of $\pumprate$, variations in the pump rate account for a disproportionately large fraction of the output variance—particularly in volume-related quantities—thereby masking the contributions of other parameters in the Sobol index analysis.

In contrast, for the heat-map analysis that follows, we consider a broader range of pump rates, $\pumprate \in [0.1, 40]$~($\mu$A\,dm$^{-2}$), in order to ensure physiologically meaningful system behavior. In this regime, the model yields ion concentrations and membrane potentials within acceptable ranges, specifically sodium concentrations between 1--30~mM, potassium concentrations between 50--100~mM, chloride concentrations between 1--30~mM, and membrane potentials between $-100$ and $-40$~mV.
}

{\ZA{Motivated by the Sobol sensitivity analysis, we next investigate} whether modulating sodium conductance can serve as an additional or alternative mechanism for modulating volume, particularly when the pump rate is low. This is indeed viable, as seen in in vivo experiments, which often use tetrodotoxin and saxitoxin to act on voltage-gated sodium channels, thereby attenuating sodium conductance \cite{narahashi1964tetrodotoxin, chrachri1997voltage, maltsev1998novel}. We vary the basolateral sodium conductance $\ggnabl$ between $10^{-2}$ to $10^{2}$ mS dm$^{-2}$, which is a physiologically relevant range \cite{weinstein2012mathematical,weinstein2015mathematical}. In separate simulations, we perform the same variation for the apical and paracellular sodium conductances, $\ggnaap$ and $\ggnapara$.}

Figures~\ref{fig:heatmapA_p_v_gna1}--\ref{fig:heatmapA_p_v_gnaP} depict the steady-state values as heat maps over small pump rates, namely $\pumprate \in [0.1, 150]$ 
 (in $\mu$A\,dm$^{-2}$), and sodium conductances $\ggnabl$, $\ggnaap$, and $\ggnapara \in [0.01, 100]$ (in {mS dm$^{-2}$}), respectively. The left four panels in each figure show the changes in intracellular ion concentrations and voltage, and the right panel shows the corresponding volume. These figures allow us to examine how modulation of sodium conductances affects steady state values and whether such modulation could serve as an alternative mechanism to the pump for changing the volume.
\begin{figure}[h!]
    \centering
    \includegraphics[width=1\linewidth]{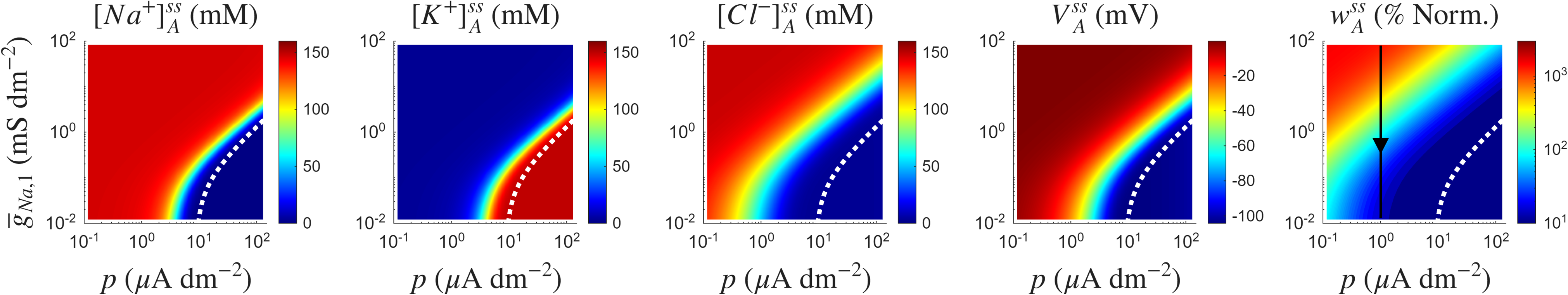}
    \caption{
          Steady-state values of compartment $A$ as functions of $\pumprate$, $\ggnabl$, shown as heat maps. The black vertical  line indicates that when $\pumprate$ is small, decreasing $\ggnabl$ decreases $\wA^{ss}$.
          {The dashed white curve traces the locus $\pumprate=\pMinA(\ggnabl)$ at which $\wA^{ss}$ is minimized---computed from Equation \eqref{eq:pminp1j}.}
       }
    \label{fig:heatmapA_p_v_gna1}
\end{figure}

{Focusing first on the basolateral conductance, the right panel of Figure~\ref{fig:heatmapA_p_v_gna1} shows that decreasing $\ggnabl$ decreases the steady-state volume of compartment $A$ (indicated by the vertical black line and arrow).} A similar result holds for compartment $B$ (figure not shown). This suggests that blocking the sodium channel on the basolateral membrane could assist the pump mechanism in reducing the volume of the ABp system. This result is consistent with our previous work on a single cell \cite{aminzare2024mathematical}, where we showed that reducing the sodium conductance lowers $p_{\min}$, the pump-rate value that minimizes the steady-state volume.

%\AK{(The values of the concentrations and voltages in fig 10-13 are not in the right range in compartment A. You should have the following Na+ 1-30 mM, K+  50-100 mM, Cl 1-30 mM and V -100 to -40 mV. These are all approximate but you need to get them in the right ball park.)}

As indicated by the Sobol indices corresponding to $\ggnaap$, the volume of compartment~$A$ remains nearly unchanged, while the volume of compartment~$B$ varies. Figure~\ref{fig:heatmapA_p_v_gna2} shows the steady-state values of both compartments. For a fixed $\pumprate$, decreasing $\ggnaap$ slightly decreases the volume of $A$ but substantially increases the volume of $B$ (see arrows on the right panels).

\begin{figure}[h!]
    \centering
    \includegraphics[width=1\linewidth]{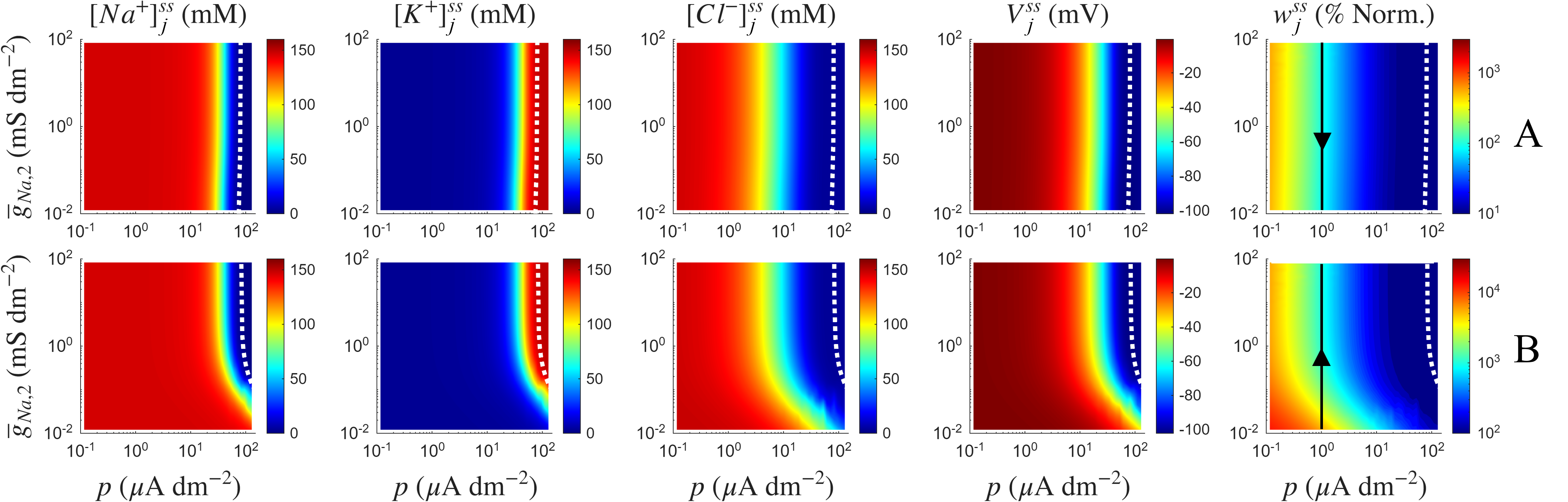}
    \caption{
       Steady-state values of compartments $A$ (top) and $B$ (bottom) as functions of $\pumprate$ and $\ggnaap$, shown as heat maps. The black vertical line and arrow indicate that when $\pumprate$ is small, decreasing $\ggnaap$ slightly decreases $\wA^{ss}$ while substantially increasing $\wB^{ss}$.
       {The dashed white curve overlays the locus $\pumprate=\pMinA(\ggnaap)$ in the top row and $\pumprate=\pMinB(\ggnaap)$ in the bottom row, where $\wA^{ss}$ and $\wB^{ss}$ attain their respective minima---curves computed from Equation \eqref{eq:pminp1j}.}
       }
    \label{fig:heatmapA_p_v_gna2}
\end{figure}

Finally, the Sobol indices corresponding to $\ggnapara$ indicate that the volume of $A$ changes only slightly, while the volume of $B$ is significantly affected. Figure~\ref{fig:heatmapA_p_v_gnaP} shows that the effect of $\ggnapara$ is similar to that of $\ggnabl$: decreasing either $\ggnabl$ or $\ggnapara$ decreases the volume of both compartments, although $\ggnapara$ reduces the volume of $B$ more than that of $A$.

The left panels of Figures~\ref{fig:heatmapA_p_v_gna1}--\ref{fig:heatmapA_p_v_gnaP} show that, for small pump rates, changes in sodium conductance do not appreciably alter the concentrations or membrane potential; a larger pump rate is required to observe such effects.

In summary, all sodium conductances can be used to control the volume of either or both compartments.
%\AK{Leave the following speculation for the discussion - do this for all the cases where you quote experimental data }
%\ZA{Discussed over zoom}
Physiologically, such effects of sodium conductances make sense because limiting sodium entry reduces intracellular sodium accumulation, thereby lowering the osmotic gradient that draws water into the cell. This not only decreases the need for compensatory ion pumping but also prevents osmotic swelling, especially under stress or pathological conditions. {Experimental studies support this prediction: in cardiac myocytes, reducing sodium influx along the basolateral membrane delays cell swelling during NKA pump inhibition \cite{takeuchi2006ionic}; in epithelial and astrocytic cells, amiloride-sensitive ENaC channels on the apical surface have been shown to mediate swelling-activated sodium conductance that can be pharmacologically suppressed to reduce volume in the cell \cite{kizer1997reconstitution}; and knockout mice lacking claudin-2 had a depressed sodium flux across the paracellular pathway and showed a decrease in reabsorption of water into the lumenal space \cite{muto2010claudin}. These studies confirm that sodium conductance is a key regulator of cell volume and a potential therapeutic target.}

\begin{figure}[h!]
    \centering
    \includegraphics[width=1\linewidth]{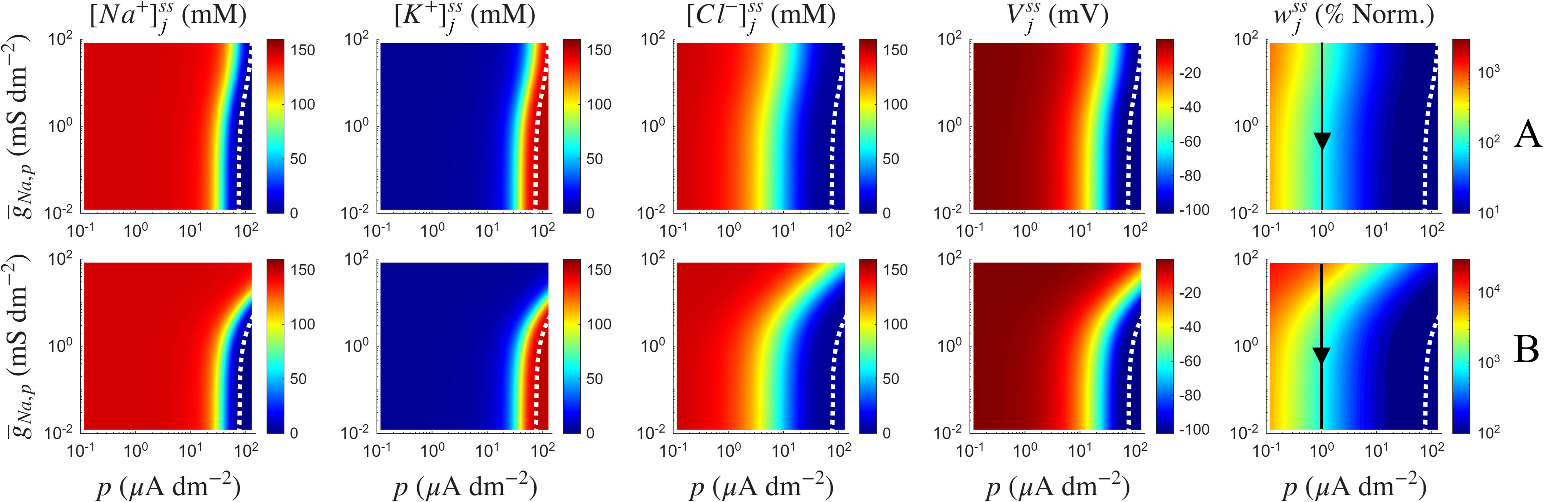}
  \caption{
    Steady-state values of compartments $A$ (top) and $B$ (bottom) as functions of $\pumprate$ and $\ggnapara$, shown as heat maps. The vertical black lines indicate that, for small $\pumprate$, although decreasing $\ggnapara$ decreases both $\wA^{ss}$ and $\wB^{ss}$, its effect on $\wB^{ss}$ is larger.
    {The dashed white curve overlays the locus $\pumprate=\pMinA(\ggnapara)$ in the top row and $\pumprate=\pMinB(\ggnapara)$ in the bottom row, where $\wA^{ss}$ and $\wB^{ss}$ attain their respective minima---curves computed from Equation \eqref{eq:pminp1j}.}
}
    \label{fig:heatmapA_p_v_gnaP}
\end{figure}

Furthermore, the Sobol index diagram predicts that perturbing the extracellular sodium and chloride concentrations---i.e., perturbing $\nacle$ as defined in \eqref{eq:nacle}---affects the steady-state cell volume. Figure~\ref{fig:heatmap_p_vs_Na_e}, where $\pumprate \in [0.1, 10]$ and $\nacle \in [-50,\,50]$, shows that increasing $\nacle$ leads to a decrease in the steady-state cell volume, especially when the pump rate is low, as indicated by the vertical black arrow in the right panel. In addition, the figure shows that intracellular sodium and chloride concentrations increase monotonically as the extracellular sodium-chloride concentration increases. Due to the balanced charges carried by sodium and chloride, the model predicts that the membrane voltage remains unaffected by changes in extracellular sodium-chloride concentration. The intracellular potassium concentration remains constant in order to maintain electroneutrality and positive osmolarity.

\ZA{Changes in extracellular sodium and chloride concentrations alter the osmotic balance across the membrane and drive water influx or efflux, a fundamental principle of cell volume regulation \cite{hoffmann2009physiology}. While extracellular ion concentrations are tightly regulated \emph{in vivo}, such perturbations can be imposed experimentally or clinically, for example through controlled changes in perfusate composition or osmotherapy. This principle underlies clinical interventions such as hypertonic saline treatment for cerebral edema \cite{Simard2007}. 

Within this experimentally controlled context, our theoretical results predict that modulation of extracellular $\nae$ and $\cle$ provides an effective means to influence cell volume. This prediction is consistent with experimental observations in T cells perfused with $2\times$PBS \cite{shu2016study} and with in silico time-series simulations of U937 cells, where the addition of 100~mM NaCl induces a reduction in cell volume \cite{yurinskaya2022cation}. Similar behavior is observed in the steady-state volume $\wB^{ss}$ in our model (results not shown).}

\begin{figure}[h!]
    \centering
    \includegraphics[width=1\linewidth]{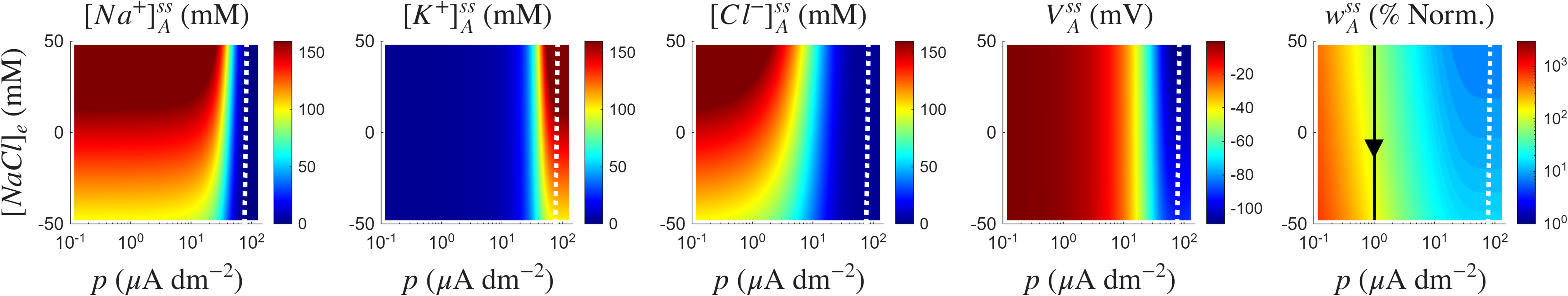}
    \caption{
       Steady-state values of compartment A as functions of $\pumprate$ and $\nacle$, shown as heat maps. The black vertical line indicates that when $\pumprate$ is small, increasing $\nacle$ decreases $\wA^{ss}$.
       {The dashed white curve traces the locus $\pumprate=\pMinA(\nacle)$ at which $\wA^{ss}$ is minimized---computed from Equation \eqref{eq:pminp1j}.}
    }
    \label{fig:heatmap_p_vs_Na_e}
\end{figure}

Finally, as suggested by the Sobol index diagram, we plot the steady states as functions of $\pumprate$ and $T$ in Figure~\ref{fig:heatmap_p_vs_T}. 
{To probe the model's sensitivity, we sweep a wide temperature range ($T\in[-100,100]\;^\circ\text{C}$); biological interpretation is restricted to temperatures where the system remains aqueous and active transport is meaningful.}
Our model predicts that, for a fixed $\pumprate$, lowering the temperature leaves the voltage and intracellular sodium and potassium concentrations essentially unchanged. We observe a modest decrease in intracellular chloride and in cell volume as $T$ decreases. As shown in the right panel of Figure~\ref{fig:heatmap_p_vs_T}, the effect on $\wA^{ss}$ is most pronounced when $\pumprate$ is small (see the vertical arrow). For larger values of $\pumprate$, temperature has only a minor effect on $\wA^{ss}$. The small magnitude of this response over a physiologically reasonable range suggests that temperature modulation may serve as only a modest volume-directed intervention when NKA activity is limited. Similar trends hold for $\wB^{ss}$ (not shown).

\begin{figure}[h!]
    \centering
    \includegraphics[width=1\linewidth]{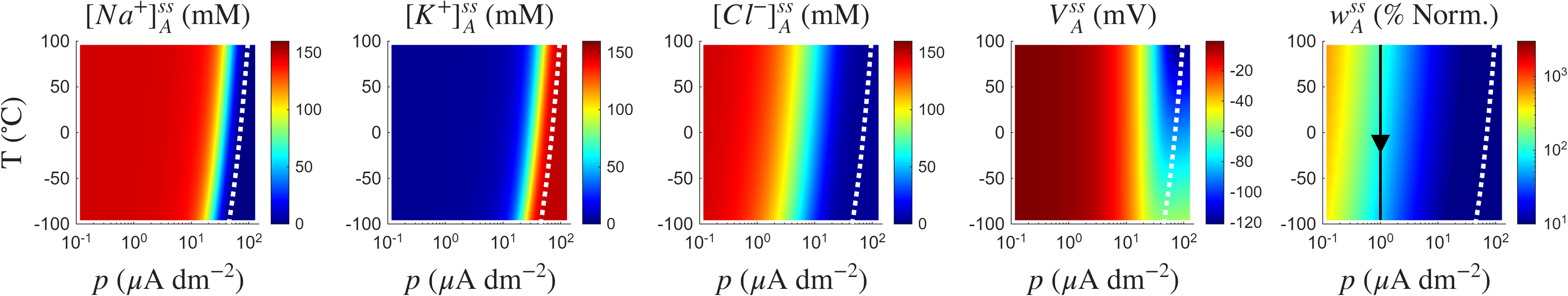}
\caption{
    Steady-state values of compartment $A$ as functions of $\pumprate$ and $T$, shown as heat maps. The vertical black line indicates that, for small $\pumprate$, decreasing $T$ decreases $\wA^{ss}$.
    {The dashed white curve traces the locus $\pumprate=\pMinA(T)$ at which $\wA^{ss}$ is minimized---computed from Equation \eqref{eq:pminp1j}.}
}
    \label{fig:heatmap_p_vs_T}
\end{figure}

In vitro recordings from rat type IIb omohyoid skeletal muscle fibers showed that, at $19$ $^\circ$C, voltages did not significantly differ from voltages at $37$ $^\circ$C or any other tested temperature in between \cite{ruff1999effects}. Preoptic-anterior hypothalamic neurons measured between $30$ and $40$ $^\circ$C showed no significant differences in their resting potential \cite{zhao2005temperature}. Direct mammalian, isomotic steady state evidence for cooling-induced cellular shrinkage is sparse; however, related settings report smaller cellular water content at lower temperatures. For instance, rat hepatocytes shrink during hypothermic storage, particularly when colloid support is absent \cite{neveux1997deletion}, and mild hypothermia reduces cytotoxic cellular swelling in brain tissue \cite{schwab1998mild}.

%~~~~~~~~~~~~~~~~~~~~~~~~~~~~~~~~~~~~~~~~~~~~~~~~~~~~~~~~~~~~~~~~~~~
\subsection{Robustness of multiple volume control mechanisms} \label{sec:robustness}

Robustness is the ability of a system to maintain its functions against perturbations \cite{kitano2004biological}. In biological systems, robustness is essential for survival in changing environments, such as the homeostatic mechanisms that maintain intracellular volume despite fluctuations in extracellular conditions \cite{alberts2022molecular}. 
In Figure~\ref{fig:eigval_p1}, we showed that an ABp system relies on the NKA pump as its primary mechanism for controlling {ion concentrations, voltage and }volume. 
{Furthermore, in Section~\ref{sec:dependencyOnParams_key_mechanisms}, we identified alternative mechanisms that can regulate volume when the pump is weak.} 
In particular, we observed that sodium conductances, extracellular concentrations, and temperature can modulate the volume. Thus far, we have varied only one (in Figure~\ref{fig:eigval_p1}) or two (in Section~\ref{sec:dependencyOnParams_key_mechanisms}) of these parameters at a time to examine their effects. However, in nature, simultaneous fluctuations in multiple parameters are inevitable. In this section, we study the robustness of these behaviors, in the sense of determining whether they persist under perturbations to multiple parameters.

Similar to Figure~\ref{fig:sobol_wA__0}, we simultaneously perturb $\ggionany$, $\nacle$, and $T$ as described in Equation~\eqref{eq:parameter_hypercube}, and plot the analytically computed steady-state values (from Equations~\eqref{eq:xjssp1}--\eqref{eq:jss_p1}) of the ABp system as functions of $\pumprate$ (Figure~\ref{fig:pump_perturb}), {$\ggnaany$} (Figures~\ref{fig:robustness_gna1} and \ref{fig:robustness_gna2_gnap}), and $\nacle$ (Figure~\ref{fig:robustness_Nae}) to assess the robustness of the behaviors observed in Section~\ref{sec:dependencyOnParams_key_mechanisms}.

In these figures, the solid curves, the $10^3$ scatter points, and the dashed curves represent the steady-state values and spectral abscissa of the Jacobian (from \eqref{eq:Jacobian}) evaluated at each parameter value for the default parameters, the $10^3$ parameter vectors (each of dimension 9) sampled via {LHS} as described in Section~\ref{sec:dependencyOnParams_key_mechanisms}, and the Gaussian Process (GP) regressions \cite{williams2006gaussian} fitted to the scatter points, respectively. Note that the regression curve (dashed) serves as a surrogate model for the perturbed data points (scatter). In addition, we plot an error measure that quantifies the difference between the scatter points (perturbed steady states) and the solid curves (default steady states), as described below. 

The perturbations in the parameters cause some amount of variance in the system. We  measure the variance in the steady state and eigenvalues caused by perturbations in parameters by computing the root mean square relative error (RMSRE), \cite{despotovic2016evaluation,webber2017canopy,goccken2016integrating} defined as 
\begin{equation}
    \text{RMSRE} = \sqrt{\mathbb{E} \left[\left(\frac{\bm{y} - \bm{o}}{\bm{o}}\right)^2\right]}.
    \label{eq:RRMSE}
\end{equation}
where $\bm{y}$ and $\bm{o}$ are the steady state value or the largest eigenvalue (evaluated at the steady states) for default parameters and perturbed parameters, respectively. In this section, the RMSRE is computed for 50 values of $\beta\in[0.0001,1]$. 

For visualization purposes, the GP predictive mean, denoted by $\bm{\hat{y}}$, is shown by the dashed curves and interpreted as an estimate of the conditional expectation
$
    \bm{\hat{y}}(\theta_k) \approx \mathbb{E}\!\left[ \bm{o} \mid \theta_k \right],
$
where $\theta_k$ is one of the nine parameters of interest \cite{ratto2007state,saltelli2008}.

 All the figures  confirm that although the RMSRE increases as the perturbation factor $\beta$ grows (top-right panels), the steady-state values of the ABp system still exist and remain locally stable under the considered parameter fluctuations (green curves in the bottom-right panels). 
Overall, the system behavior remains robust, as explained next.

Figure~\ref{fig:pump_perturb}, in comparison with Figure~\ref{fig:eigval_p1}, confirms that in the presence of parameter fluctuations, increasing the pump rate $\pumprate$ continues to decrease the steady-state volumes.
\begin{figure}[h!]
    \centering
    \includegraphics[width=0.45\linewidth]{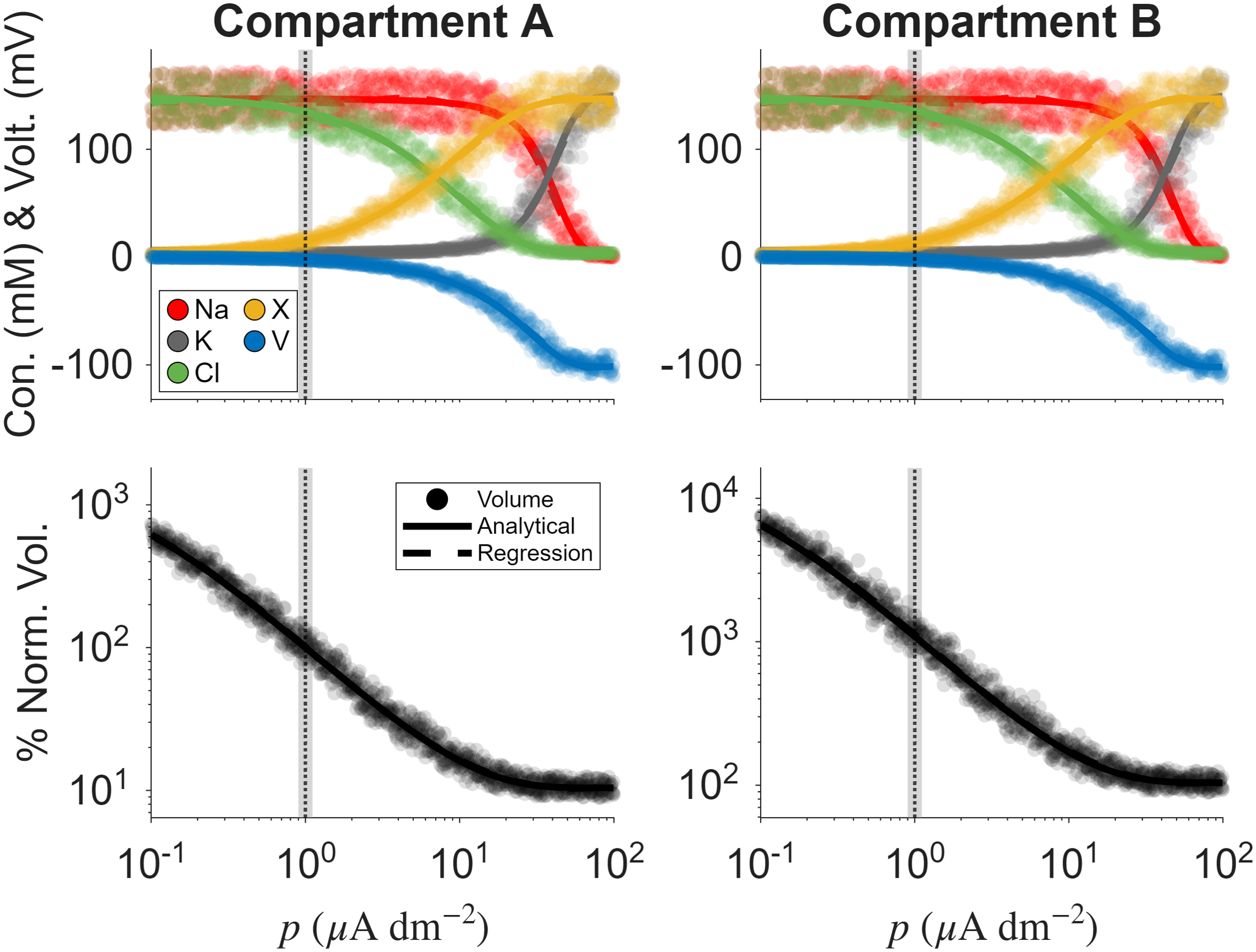}\qquad
    \includegraphics[width=0.22\linewidth]{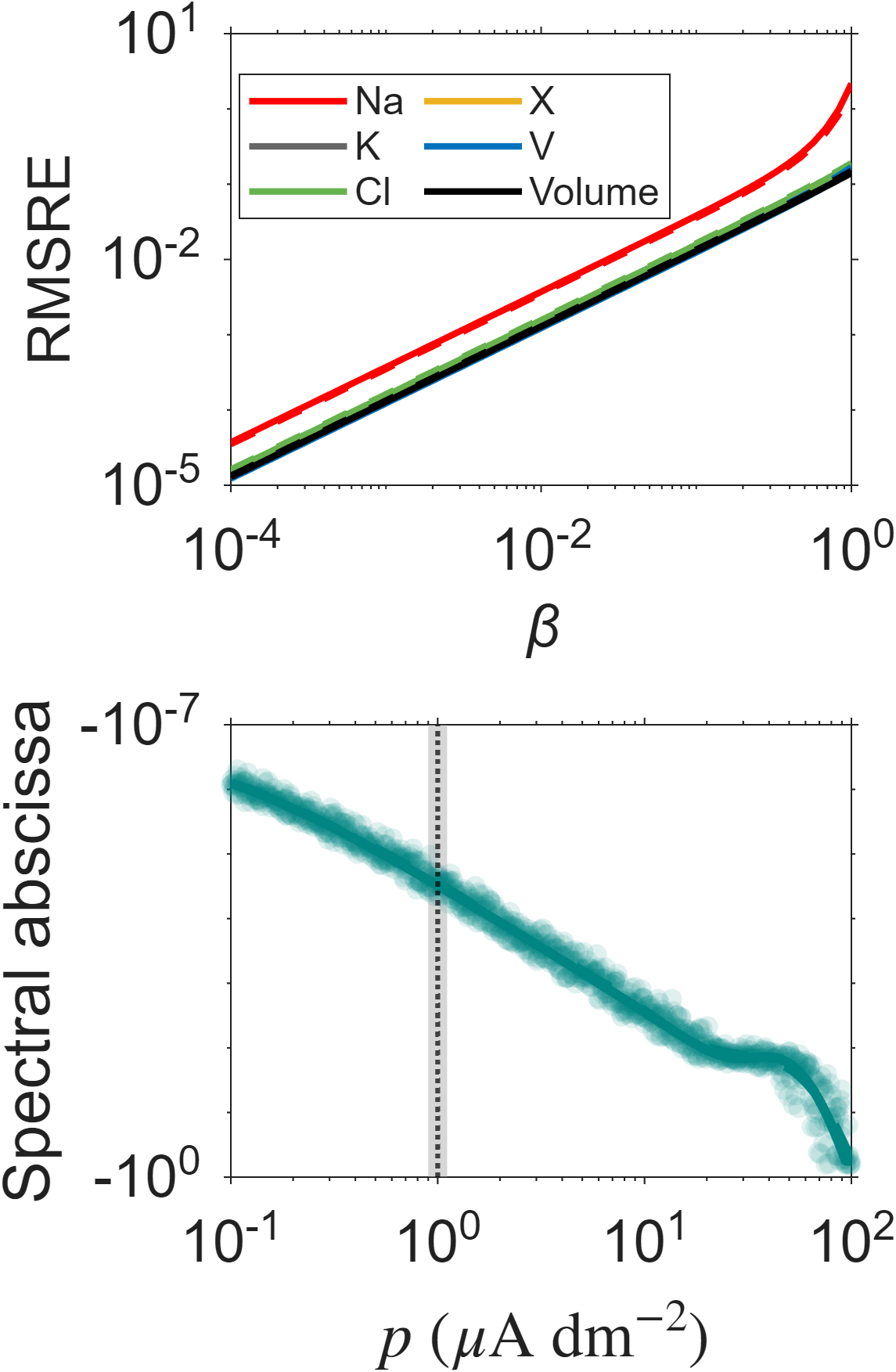}
   \caption{
\textbf{(Left \& Middle.)} Steady-state values of the ABp system are shown as functions of the pump rate $\pumprate \in (0,100)$: solid curves correspond to the default parameter values listed in Tables~\ref{tab:extracellular} and~\ref{tab:conductances}; scattered points correspond to $1000$ parameter vectors of dimension~9 sampled via LHS; and dashed curves represent GP regression fits to the scattered data. 
\textbf{(Bottom Right.)} Spectral abscissa of the Jacobian evaluated at the steady states shown in the left and middle panels. 
\textbf{(Top Right.)} Root mean square relative error (RMSRE), defined in Equation~\eqref{eq:RRMSE}, for the state variables in compartments $A$ (solid) and $B$ (dashed), plotted as a function of the scaling factor $\beta$, which increases the magnitude of the perturbations in $\ggionany$, $T$, and $\nacle$.
}
\label{fig:pump_perturb}
\end{figure}

Furthermore, Figures~\ref{fig:robustness_gna1} and \ref{fig:robustness_gna2_gnap}, when compared with the black arrows in Figures~\ref{fig:heatmapA_p_v_gna1}, \ref{fig:heatmapA_p_v_gna2}, and \ref{fig:heatmapA_p_v_gnaP}, confirm that under relatively weak NKA pump activity and in the presence of appropriate parameter fluctuations, decreasing the sodium conductances on the basolateral membrane and the paracellular pathway ($\ggnabl$ and $\ggnapara$) continues to decrease the steady-state volumes. Similarly, decreasing the sodium conductance on the apical surface ($\gnaap$) increases the steady-state volume in compartment $B$ while has little decreasing effect in compartment $A$.

We perform the same robustness analysis for intermediate and high pump rates ($\pumprate = 40, 90$ $\mu$A dm$^{-2}$)\ZA{, where the ion concentrations reach more physiologically realistic values—i.e., high potassium, low sodium and chloride, and low membrane voltage,} and observe similar behavior. The results are provided in Appendix~\ref{sec:robustness_mid_hi_pump}.

\begin{figure}[h!]
    \centering
    \includegraphics[width=0.45\linewidth]{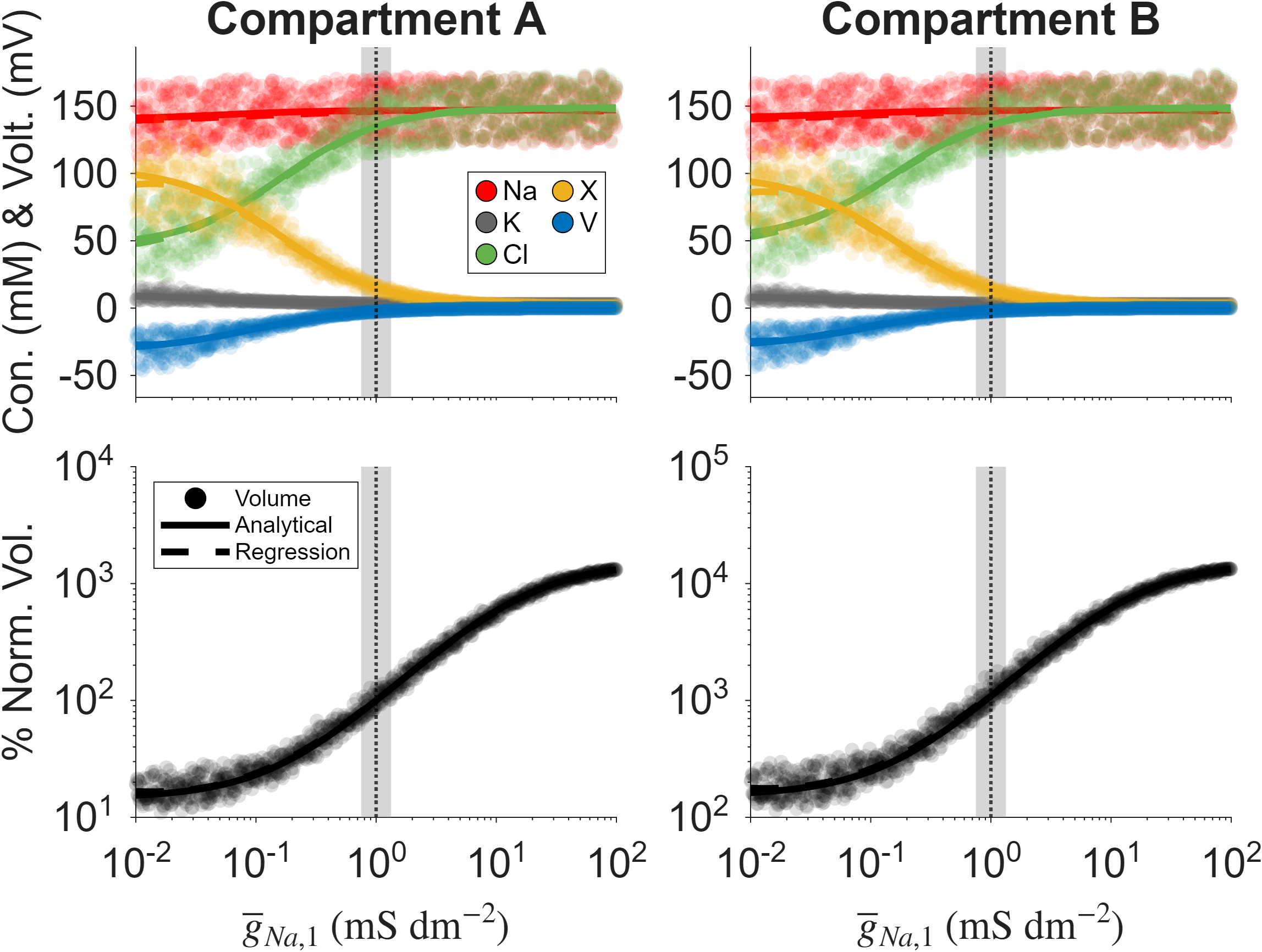}\qquad
    \includegraphics[width=0.22\linewidth]{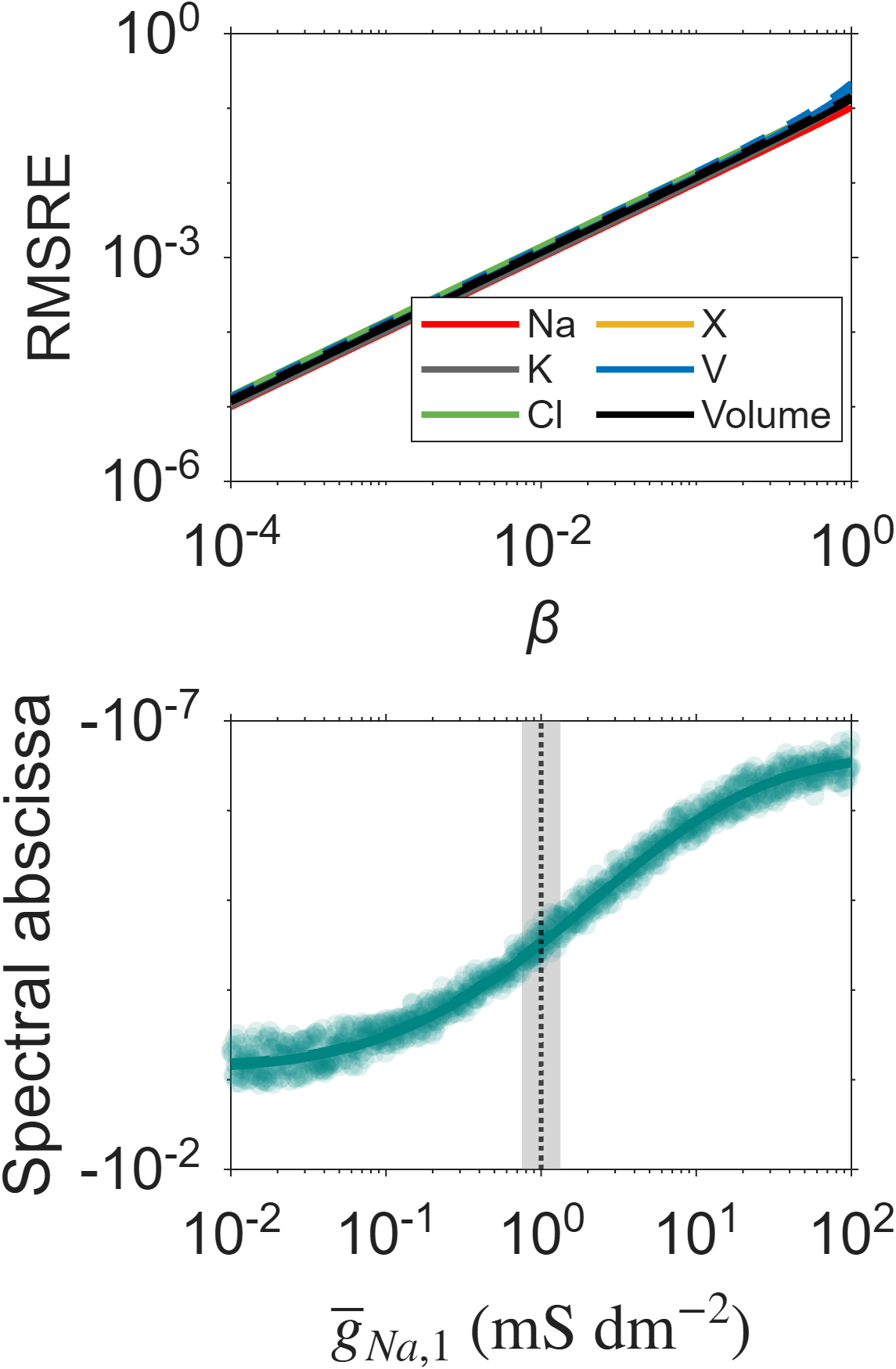}
    \caption{
       Steady-state values of the ABp system and corresponding spectral abscissa are shown as functions of the sodium conductance $\ggnabl$. The gray shaded vertical bands mark the perturbation described in \eqref{eq:parameter_interval} around the default value (dashed vertical line) of $\ggnabl$. 
        The RMSRE between scattered points and analytically computed steady states show that the error increases as the perturbation factor $\beta$ increases. 
    }
    \label{fig:robustness_gna1}
\end{figure}

\begin{figure}[h!]
    \centering
    \includegraphics[width=0.9\linewidth]{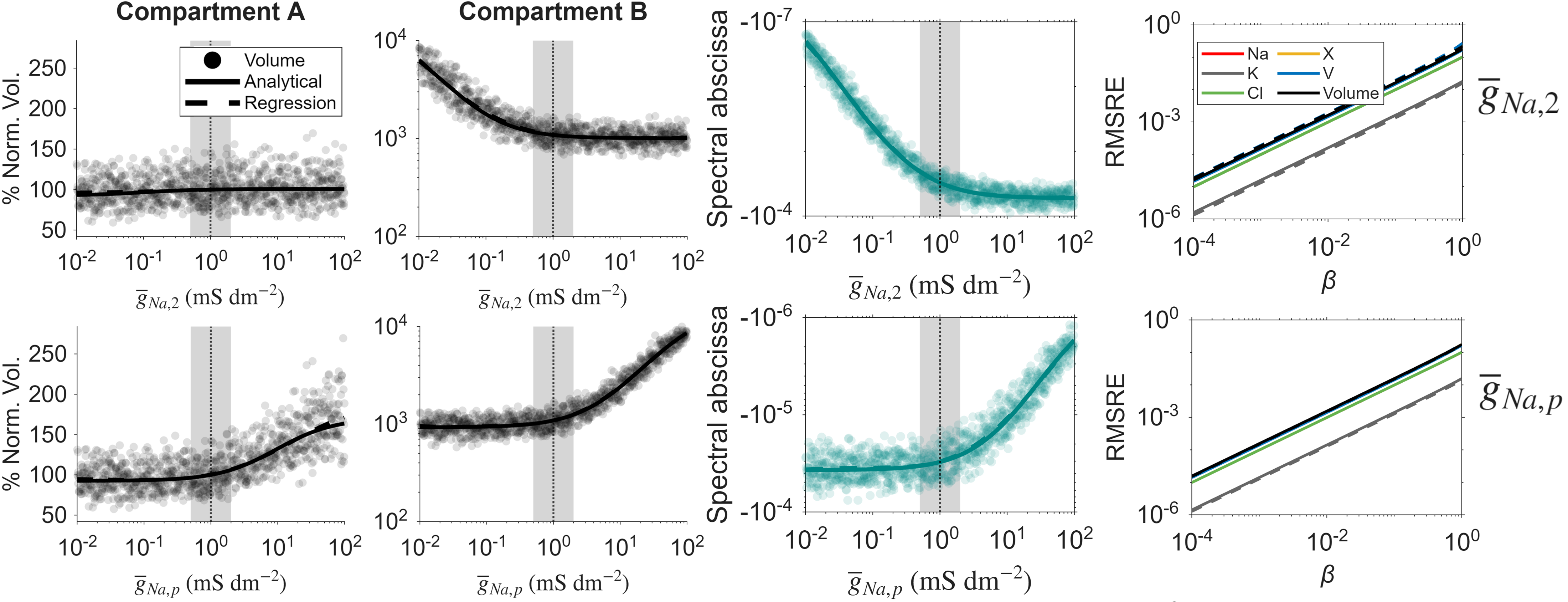}
    \caption{
    A comparison of the steady-state  values of the volumes and corresponding spectral abscissa are shown as functions of $\ggnaap$ (top) or $\ggnapara$ (bottom). Varying these conductances has little effect on concentrations and voltage (not shown).}
    \label{fig:robustness_gna2_gnap}
\end{figure}

Finally, Figure~\ref{fig:robustness_Nae} examines the robustness of the effects of extracellular concentrations on ABp system behavior, as previously discussed in Figure~\ref{fig:heatmap_p_vs_Na_e}. Under relatively weak NKA pump ($\pumprate\in[0.9,\,1.\overline{1}]$) and in the presence of parameter fluctuations, increasing the extracellular sodium and chloride concentrations (i.e., increasing $\nacle$) continues to decrease the steady-state volume, consistent with the black arrow shown in the right panel of Figure~\ref{fig:heatmap_p_vs_Na_e}.
 
\begin{figure}[h!]
    \centering
    \includegraphics[width=0.45\linewidth]{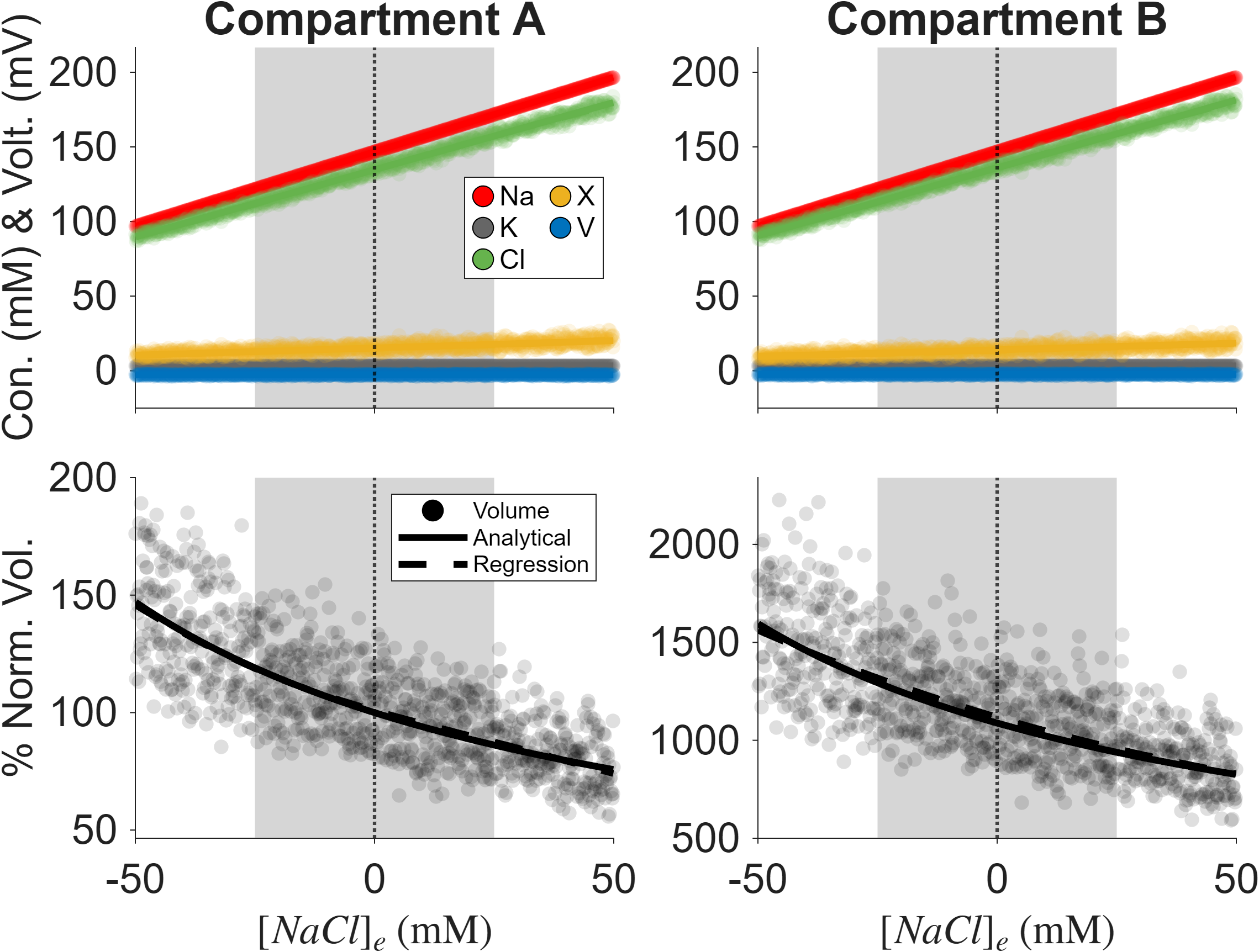}\qquad
    \includegraphics[width=0.22\linewidth]{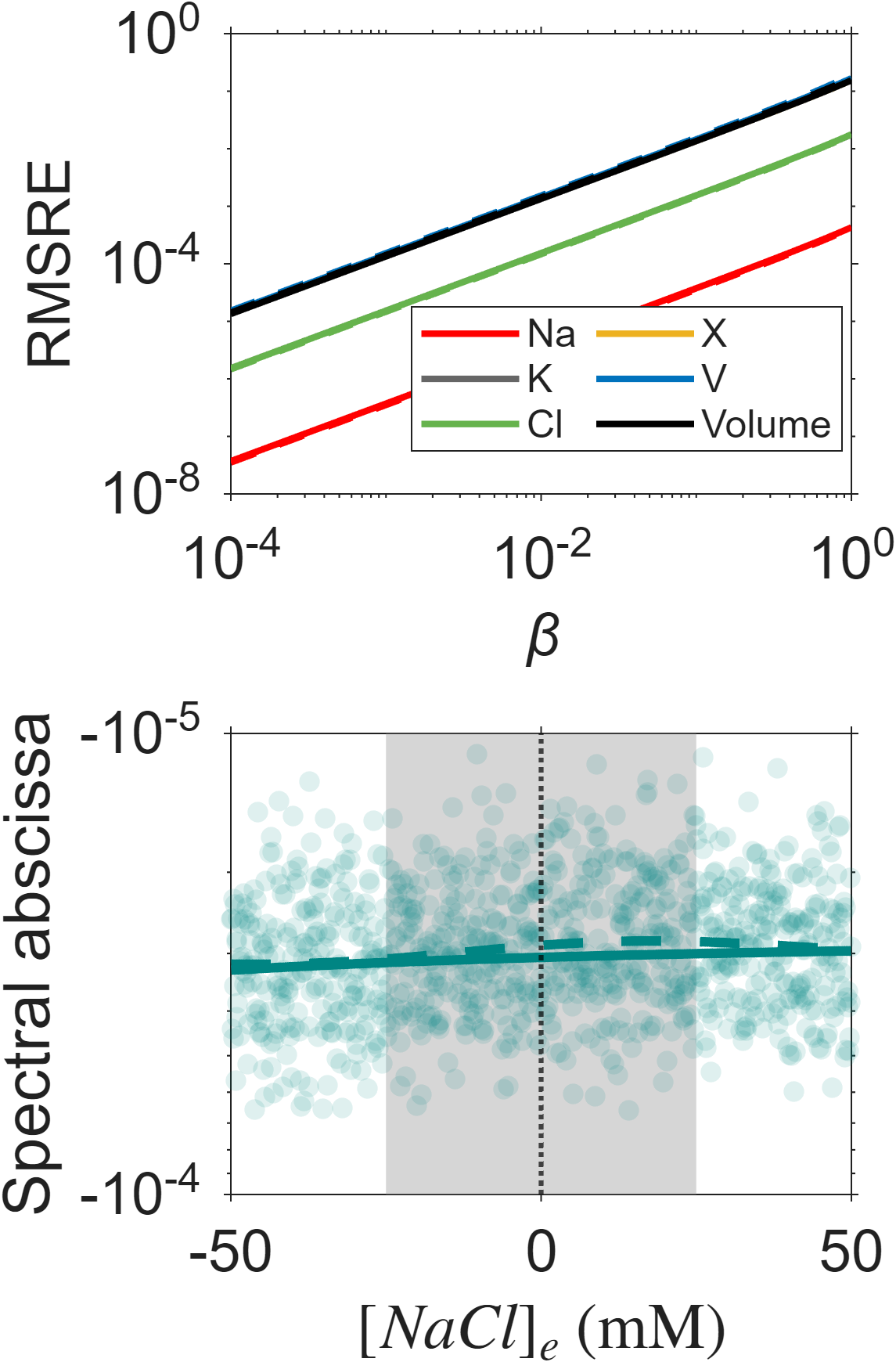}
    \caption{
    Steady-state values of the ABp system and corresponding spectral abscissa are shown as functions of $\nacle$ and the corresponding RMSRE are shown as functions of the perturbation factor $\beta$.
    }
    \label{fig:robustness_Nae}
\end{figure}

\subsection{Parameters that preserve homeostasis}
\label{subsection:homeostasis} 

In the sensitivity analysis, we observed that the potassium conductances 
$\ggkbl$, $\ggkap$, and $\ggkpara$ contribute negligibly to the variance of the 
steady-state values; see the short bars associated with these parameters in 
Figure~\ref{fig:sobol_wA__0}. This insensitivity is further illustrated in 
Figure~\ref{fig:robustness_gk1}, where all three potassium conductances are 
perturbed simultaneously and the resulting steady states are projected onto each 
conductance (left to right panels). Across the tested parameter ranges, the 
scatter points remain tightly clustered around the analytical steady-state 
curve, indicating that the steady state is effectively unchanged under combined 
fluctuations in $\ggkany$. Thus, in contrast to sodium conductances, the ABp 
system exhibits {an insensitivity to variations of the potassium conductance.}  

\begin{figure}[h!]
    \centering
    \includegraphics[width=0.33\linewidth]{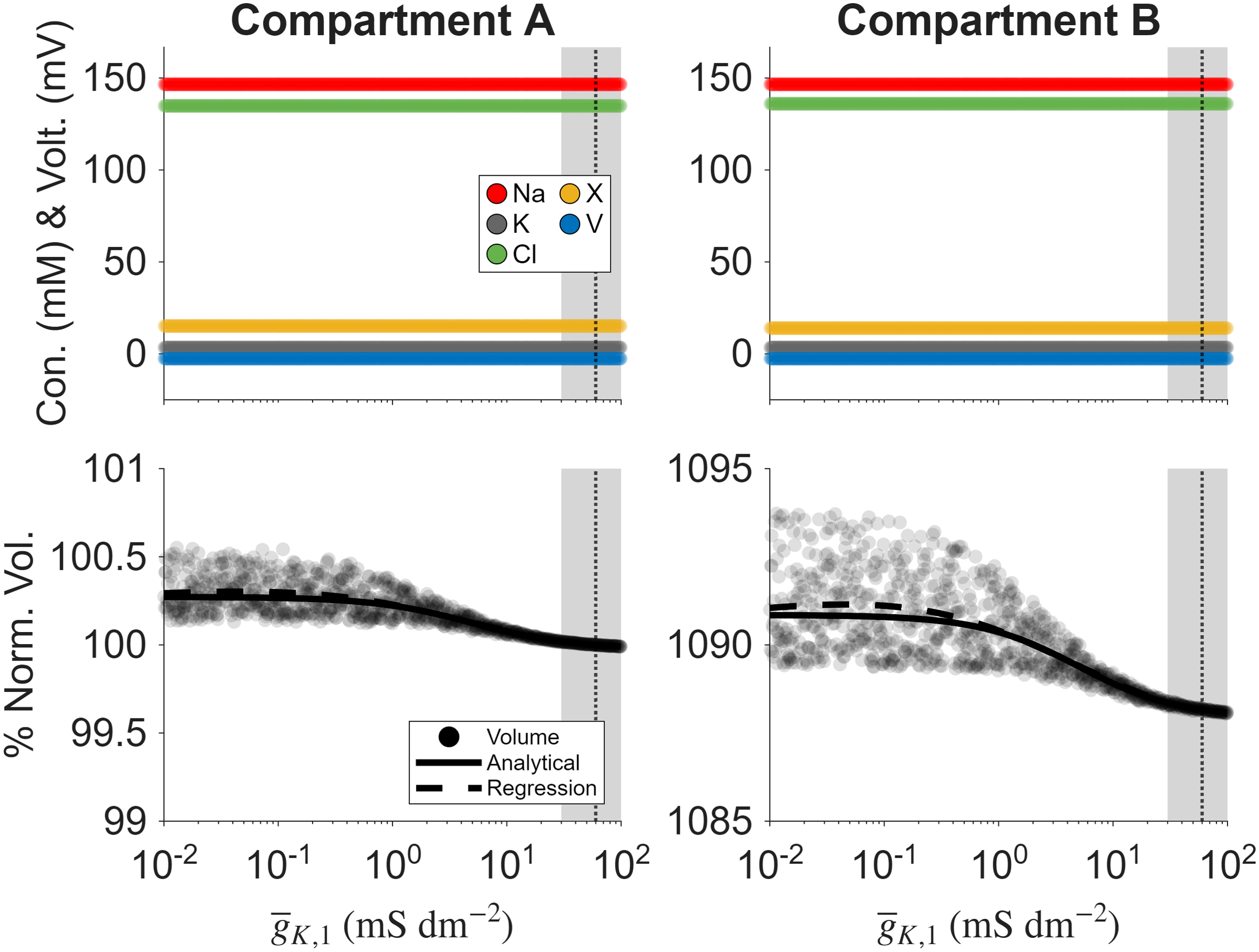}
    \includegraphics[width=0.33\linewidth]{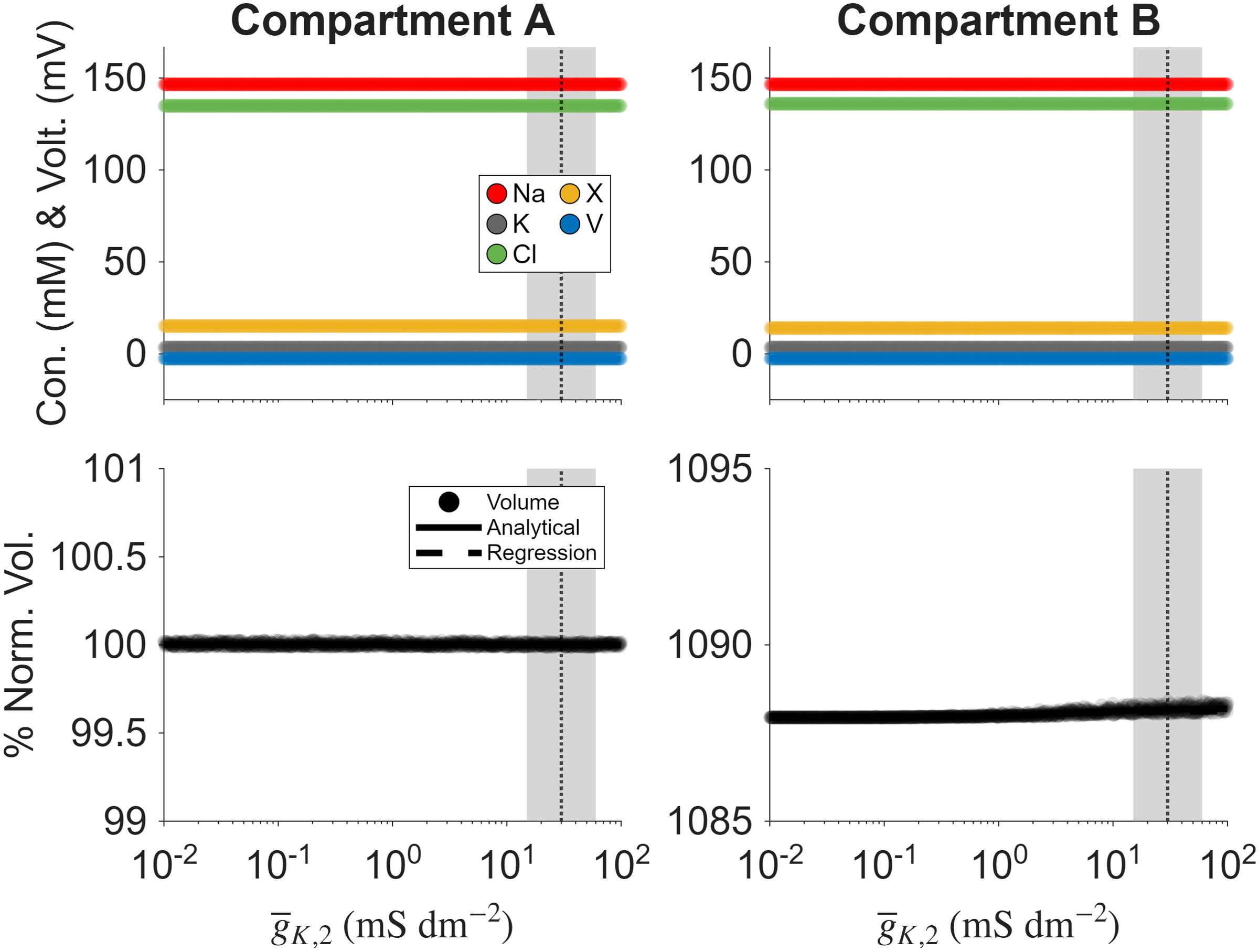}\includegraphics[width=0.33\linewidth]{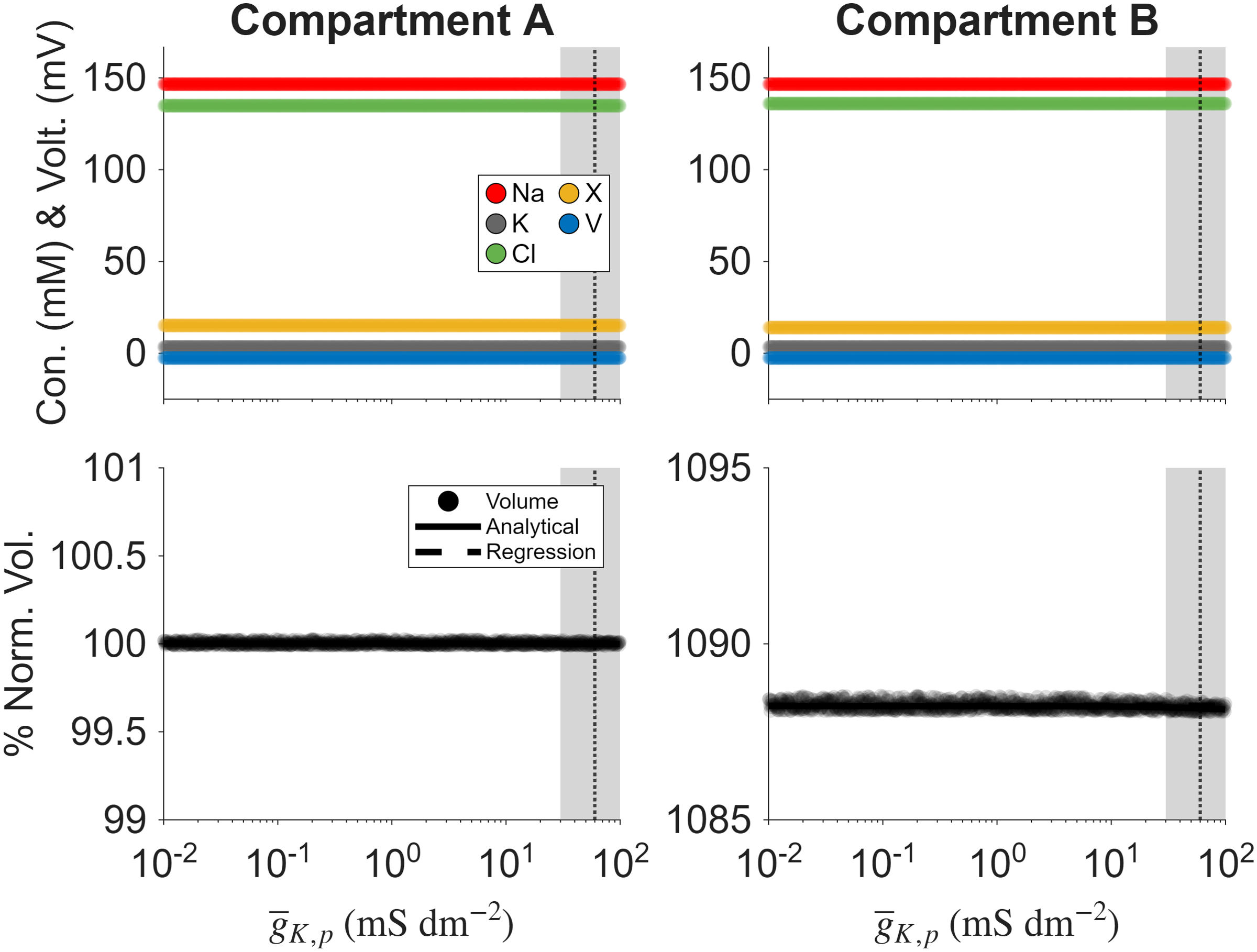}
    \caption{
\textbf{Homeostasis of steady states with respect to potassium conductances.} 
Steady-state responses of the ABp system under simultaneous perturbations of the potassium conductances 
$\ggkbl$, $\ggkap$, and $\ggkpara$. 
In each panel, one conductance is varied across its full tested range while the remaining two are perturbed within their sampling intervals. 
}
    \label{fig:robustness_gk1}
\end{figure}

Chloride conductances do not appear in the closed-form steady state expressions derived in Section \ref{sec:active}, so varying $\ggclbl$, $\ggclap$, or $\ggclpara$ does not alter the steady state ion concentrations, voltages, or volumes. In this sense, for fixed values of the remaining parameters, the ABp steady state is invariant under sufficiently small fluctuations in $\ggclany$, and the system is homeostatic with respect to these conductances. The chloride conductances do, however, enter the Jacobian \eqref{eq:Jacobian} and therefore influence the eigenvalues and transient dynamics. Assumption \ref{assumption:equilibrium} ensures that no more than one of $\ggclbl$, $\ggclap$, and $\ggclpara$ is fixed at zero. In addition, within the admissible parameter set, perturbations in the chloride conductances change the magnitude of the spectral abscissa but not its sign, so local stability is preserved. For this reason, we hold $\ggclany$ fixed at their default values in our sensitivity and robustness analyses.

%~~~~~~~~~~~~~~~~~~~~~~~~~~~~~~~~~~~~~~~~~~~~~~~~~~~~~~~~~~~~~~~~~~~
\section{Active Two-Compartment Systems:  NKA Pump on  Apical Surface} \label{subsection:pumpAS} 

In a less common configuration, the NKA pump mechanism can be found on the apical surface, interfacing compartments $A$ and $B$, without a corresponding pump mechanism on the basolateral surface. {This is true in the case of the epithelial cells of the choroid plexus, which secrete cerebrospinal fluid into the ventricles of the brain \cite{brown2004molecular}.}
%For example, in the lining of epithelial cells and ventricles of the choroid plexus, sodium ions are pumped into the ventricle as potassium ions are pumped into the cell to establish a concentration gradient \AK{replace the previous sentence with 'This is true in the case of the epithelial cells of the choroid plexus, which secrete cerebrospinal fluid into the ventricles of the brain.'  }
We extend the PLEs to cases where the passive and active transporters generate ionic gradients along the apical surface, while the basolateral membrane solely possesses ionic channels. In Equations \eqref{eq:ode_water}--\eqref{eq:ode_con_b}, we let 
\[
\pnabl = 0,\quad \pkbl = 0, \quad \pnaap=-\gamma_{Na} \, \pumprate \, \Arap , \quad \pkap = \gamma_{K} \, \pumprate \, \Arap.
\]
The NKA pump hydrolyzes ATP in compartment $A$, expending energy to pump $\gamma_{Na}$ Na\textsuperscript{+} ions into compartment $B$ for every $\gamma_K$ K\textsuperscript{+} pumped from $B$ into $A$. The ionic channels and paracellular pathway facilitate the passive transport of ions and water between regions. From our steady state equations in Section \ref{sec:active} and numerical solutions of the ODE system \eqref{eq:ode_water}--\eqref{eq:ode_con_b},
we compare the effects of pump rates on ionic flux at steady state and in a time series.
\medskip 

\noindent\textbf{Existence of steady states. }
The form of the steady states for the PLEs with an apical surface pump mechanism is very similar to that of the PLEs with a basolateral membrane pump mechanism derived in Section~\ref{sec:existence}, with the following modifications. In Lemma~\ref{lemma:existence_ss} and Proposition~\ref{proposition:steadystate}, replace $\pumprate \, \Arbl$ by $\pumprate \, \arap$, and replace $G_{\text{ion},A}$ and $G_{\text{ion},B}$ by  
\begin{subequations}
\begin{align}
    &\GionAap  = \dfrac{g_{\text{ion},p}}{g_{\text{ion},1} g_{\text{ion},2}  + g_{\text{ion},1} g_{\text{ion},p}  + g_{\text{ion},2} g_{\text{ion},p}} \label{eq:Giona2p2} \\
    &\GionBap  = \dfrac{-g_{\text{ion},1}}{g_{\text{ion},1} g_{\text{ion},2}  + g_{\text{ion},1} g_{\text{ion},p}  + g_{\text{ion},2} g_{\text{ion},p}} . \label{eq:Gionb2p2}
\end{align}
\label{eq:Gionp2}
\end{subequations}

\noindent\textbf{The possible range of the pump rate. }
In this case, unlike the ABp system with the NKA pump located on the basolateral membrane (see Figure~\ref{fig:eigval_p1}), the ABp system becomes highly sensitive to small changes in the pump rate $\pumprate$. In particular, the steady state of compartment~$B$ exists only over a restricted range of pump values, namely for $\pumprate < \PmaxBap$, where $\PmaxBap$ is substantially smaller than $\PmaxAap$ for compartment~$A$. See Equation~\eqref{eq:fj1} and Figure~\ref{fig:fj2fig} for a comparison of $\PmaxBap$ and $\PmaxAap$. Consequently, the existence of a stable steady state with non-negative concentrations and volume is confined to a relatively narrow range of $\pumprate$.

\begin{figure}[h!]
    \centering
    \includegraphics[width=0.6\linewidth]{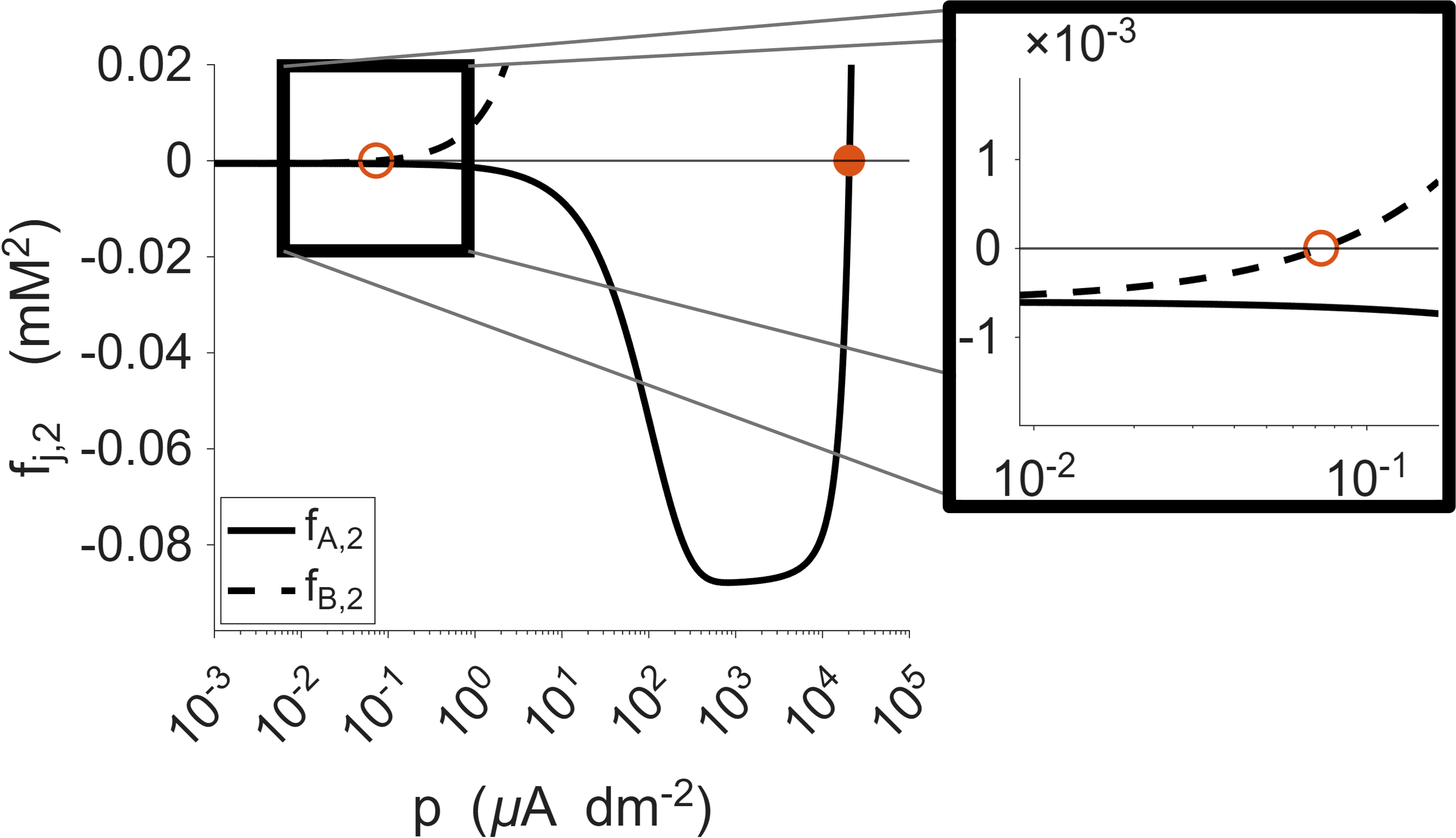} 
    \caption{
    The admissible range of $\pumprate$ is computed numerically using the default parameter values by plotting $f_{j,2}(\pumprate) := 4 \, \Cpbl(\pumprate) - \Ose^2$ and identifying the interval $(0, \pMax)$ where $f_{j,2}(\pumprate) < 0$. In the zoomed-in panel on the right, the open circle denotes the root of $f_{B,2}$.
    }
    \label{fig:fj2fig}
\end{figure}

\medskip 

\noindent\textbf{Explicit form of steady states and their stability. } 
Figure~\ref{fig:papab} shows the steady-state values of an ABp system as functions of $\pumprate$, where the pump is located on the apical surface and $\pumprate \leq \PmaxBap$. All steady-state concentrations and volumes remain non-negative across this range since the condition $f_{j,2}(\pumprate) := 4 \, \Cpbl(\pumprate) - \Ose^2<0$ remains valid.

\begin{figure}[h!]
    \centering
     \includegraphics[width=.8\linewidth]{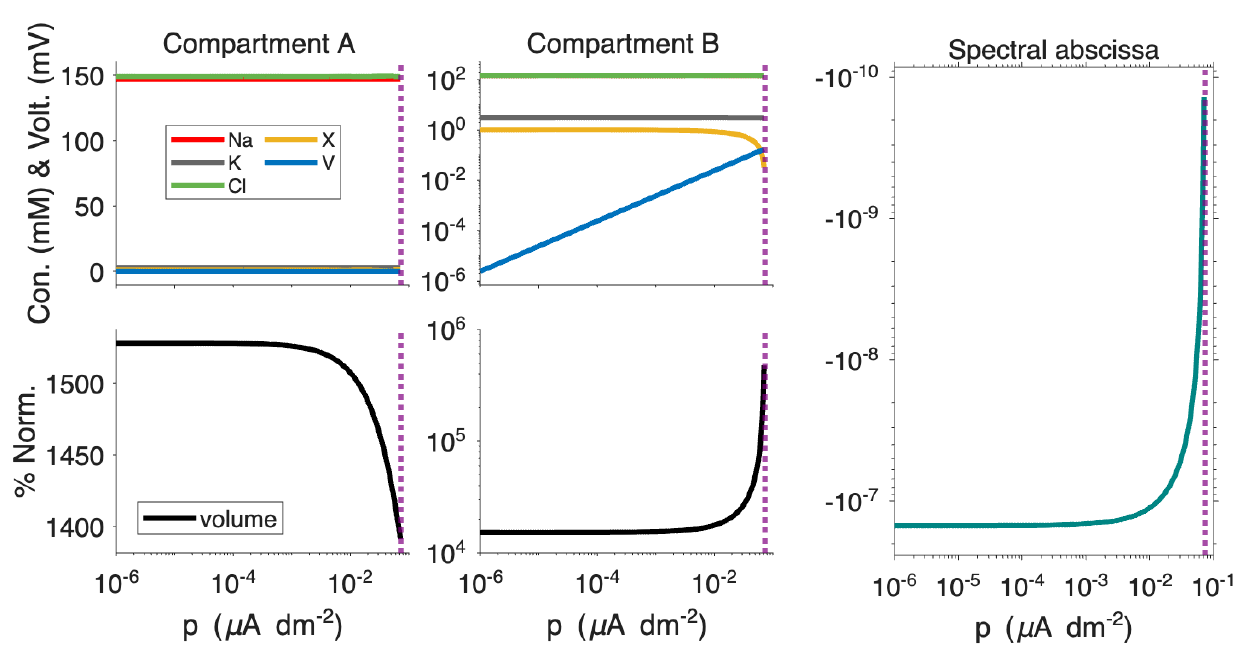}
    \caption{
\textbf{
ABp system with the NKA pump on the apical surface.}
Steady states and their corresponding spectral abscissae are plotted as functions of the pump rate~$\pumprate$. The purple vertical line marks $\PmaxBap$, the largest value of $\pumprate$ for which a steady state with non-negative volume exists. {$\naB^{ss}$ and $\clB^{ss}$ remain approximately the same over this range of $\pumprate$ and are plotted on top of one another in the middle panel.}
}
    \label{fig:papab}
\end{figure}

Although steady states of the ABp system with non-negative concentrations and volumes for both compartments $A$ and $B$ exist only for $\pumprate < \PmaxBap$, compartment $A$ remains stable for $\PmaxBap < \pumprate < \PmaxAap$, while compartment $B$ exhibits unbounded growth. Figure~\ref{fig:apical_timeseries_SS_pmaxA} (left panels) shows the time series of a numerically computed solution of the coupled PLEs with $\pumprate = 2 \times \PmaxBap$. As illustrated, all variables in compartments $A$ and $B$ approach finite steady-state values except for the volume of $B$, which diverges. This behavior persists for all $\pumprate < \PmaxAap$, as demonstrated in the right panel of Figure~\ref{fig:apical_timeseries_SS_pmaxA}. 
Note that $\wB^{ss}$ appears as a negative value for $\pumprate > \PmaxBap$ because the condition $f_{B,2} < 0$ is violated; in practice, the volume does not become negative but instead grows without bound.

\begin{figure}[h!]
    \centering
    \includegraphics[width=.95\linewidth]{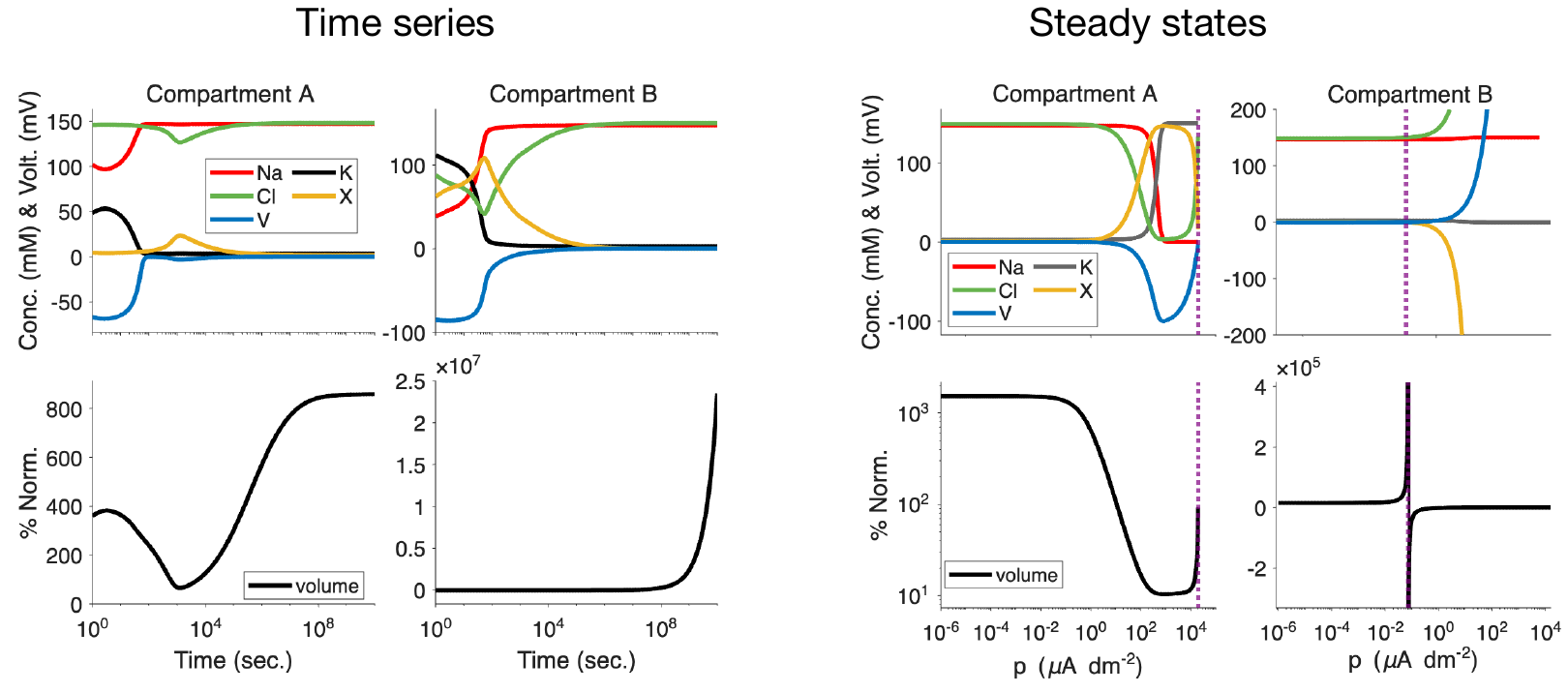}\quad
    \caption{(Left) Time series of a numerically computed solution of the coupled PLEs with $\pumprate = 2 \times \PmaxBap$. (Right) Steady state values are plotted as functions of the pump rate $\pumprate$ for a larger range of $\pumprate$, $0<\pumprate<\PmaxAap$ which violates the condition for existence of steady states with non-negative volumes. So on the right-bottom panel, $\wB^{ss}$ increases to $\infty$ for $\pumprate<\PmaxBap$ and becomes negative for $\pumprate>\PmaxBap$. {This corresponds to epithelial transport in the time series where $\wB(t)$ grows without bound.}
        }
    \label{fig:apical_timeseries_SS_pmaxA}
\end{figure}

\medskip

\noindent\textbf{A note on ion flows in an ABp system at steady state.}
For an ABp system with the NKA pump located on the apical surface, the ionic fluxes across the three interfaces are
\begin{subequations}
\begin{align}
    &\fluxionbl = \zion \, \gionbl \left( \vA - \eionA \right), \\
    &\fluxionap = - \zion \, \gionap \left[ \left( \vA - \eionA \right) - \left( \vB - \eionB \right) \right] + \pionap, \\
    &\fluxionpara = - \zion \, \gionpara \left( \vB - \eionB \right),
\end{align}
\label{eq:fluxes}
\end{subequations}
where 
$ \pnaap=-\gamma_{Na} \, \pumprate \, \Arap$, $\pkap = \gamma_{K} \, \pumprate \, \Arap,$ and $\pclap=0.$

Using the steady-state expressions in Equation~\eqref{eq:jss_p1}, together with $\GionAap$ and $\GionBap$ from Equation~\eqref{eq:Gionp2}, one verifies that the absolute values of the total ion---and water---fluxes are equal across the basolateral, apical, and paracellular pathways for $\pumprate < \PmaxBap$; that is, 
\[
\fluxionbl^{ss} = \fluxionap^{ss} = \fluxionpara^{ss}.
\]

As illustrated in the two left panels of Figure~\ref{fig:Apical_Na_flux}, sodium flow forms a clockwise loop in this configuration: sodium moves from compartment~$B$ to the ISF through the paracellular pathway, enters compartment~$A$ across the basolateral membrane, and returns to compartment~$B$ via the apical surface. The total fluxes across the three membranes are plotted in the second-left panel as functions of $\pumprate < \PmaxBap$. As $\pumprate$ increases, the magnitude of the flux increases across all pathways while remaining equal, confirming conservation of flow at steady state. Unlike in Figure~\ref{fig:total_flow}, the fluxes here are negative, indicating opposite flow directions. 

In addition, in Figure~\ref{fig:Apical_Na_flux} (right panels), we plot the time series of $\fluxnabl, \fluxnaap, \fluxnapara$ for some $\PmaxBap<\pumprate<\PmaxAap$ and observe that 
for large time $t$, i.e., at the steady state, the $Na^+$ fluxes become nearly constant and \[\fluxnabl(t)\approx\fluxnaap(t)<\fluxnapara(t)<0.\] Thus, the magnitude of the paracellular pathway $Na^+$ flux $B\rightarrow\text{ISF}$ is smaller than the magnitude of the apical surface $Na^+$ flux $A\rightarrow B$. This asymmetry produces a net $Na^+$ gain in compartment $B$, so $\naB(t)$ increases. The resulting osmotic imbalance drives water into compartment $B$ to compensate for the growing osmolarity, and $\wB(t)$ increases. The $K^+$ flux exhibits an analogous bottleneck in compartment $B$.

\begin{figure}[h!]
    \centering
     \includegraphics[width=.95\linewidth]{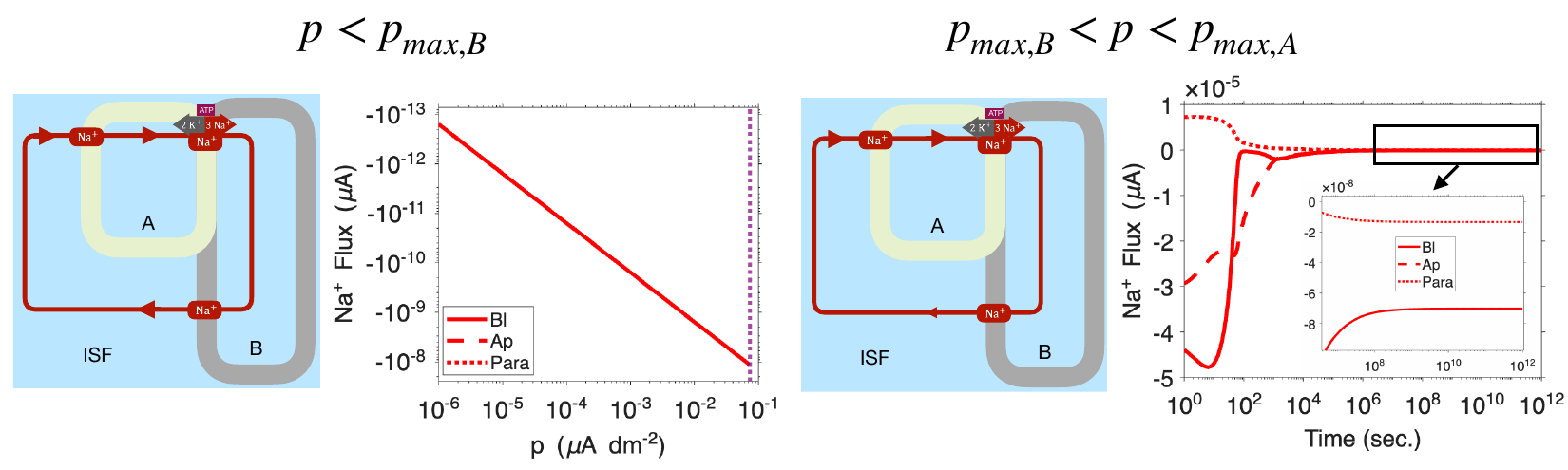}
\caption{
    \textbf{Sodium transport across the membranes at steady state for an ABp system with the NKA pump on the apical surface.}
    The schematic diagrams illustrate the clockwise sodium transport loop. For $\pumprate < \PmaxBap$, the total sodium fluxes across the three membranes are equal at steady state (second panel from left). For $\pumprate > \PmaxBap$, the flux through the paracellular pathway becomes smaller than the apical and basolateral fluxes, leading to water accumulation in compartment~$B$ and resulting in its instability (right panel).
}
    \label{fig:Apical_Na_flux}
\end{figure}

\medskip 

Apical NKA transport requires that the $K^+$ entering compartment A across the apical surface not be immediately lost across the basolateral membrane. If the basolateral $K^+$ conductance is too high, it dissipates the $K^+$ gradient generated by the pump and allows $K^+$ to leak out across the basolateral membrane. Keeping basolateral $\ggkbl$ smaller maintains a high intracellular $K^+$ activity set by pump activity and allows $K^+$ recycling rather than $K^+$ loss \cite{zeuthen1981epithelial, macaulay2022cerebrospinal}. This is the same energy-conservation logic that underlies classic pump-leak models, but here it is applied to the less common apical NKA configuration. Consistent with this physiology, the mathematical ABp system admits a stable steady state when the apical $K^+$ conductance is large ($\ggkap$ high) and the basolateral $K^+$ conductance is small ($\ggkbl$ low). A large $\ggkap$ provides the local $K^+$ recycling pathway needed to balance apical NKA current, while a small $\ggkbl$ prevents the basolateral side from short-circuiting the apical pump and draining intracellular $K^+$. Unlike the ABp system with basolateral NKA, where a sizable basolateral K$^+$ leak helps regulate the system, in the apical NKA case, the same leak becomes destabilizing because it competes directly with the pump for control of intracellular $K^+$ and volume. Not shown here, $\ggkap=300$ mS dm$^{-2}$ and $\ggkbl=30$ mS dm$^{-2}$ are parameter values consistent with biological systems, in which$K^+$ conductance on the apical surface can be roughly 10 times larger than on the serosal surface \cite{zeuthen1981epithelial}.

%~~~~~~~~~~~~~~~~~~~~~~~~~~~~~~~~~~~~~~~~~~~~~~~~~~~~~~~~~~~~~~~~~~~
\section{Discussion} \label{sec:discussion}

\ZA{
This work provides a systematic extension of the classical pump--leak framework to a coupled, two-compartment system, referred to as the ABp system, demonstrating how spatial organization fundamentally alters the qualitative behavior of ionic concentrations, membrane potential, and volume.}
\AK{Our development of the ABp model extends the reach of the PLM to multicellular assemblies and serves as a prelude to the mathematical analysis of epithelial systems that transport fluid.
}

\ZA{
Despite the resulting 10-dimensional system of coupled algebraic–differential equations and the presence of many parameters, we derive explicit steady-state formulas for both passive and pump-driven active regimes under a constant pump mechanism and characterize their dependence on physiologically meaningful parameters. We further show that assuming a constant pump rate preserves the qualitative behavior of the system while rendering the analysis analytically accessible. Obtaining explicit expressions for steady states is especially important in this setting, as the system exhibits stiffness and strong parametric dependence, making purely numerical determination of steady states both challenging and unreliable across large regions of parameter space.  We further establish local stability of these steady states using the Hartman–Grobman Theorem. \AK{Previously Mori \cite{mori2012mathematical} demonstrated how a free energy formulation of the PLM using Lyapunov functions could be used to study the \ZA{global} stability of the PLEs \ZA{for a single cell}. We leave the global asymptotic analysis of the ABp system for future studies. }

Although the steady states are derived explicitly, their expressions depend nonlinearly on numerous model parameters, making qualitative prediction of ABp system behavior challenging as parameters vary. To address this complexity, we perform a global sensitivity analysis that reveals a pronounced low-dimensional structure underlying the high-dimensional parameter space. Despite the large number of parameters, variance-based Sobol indices show that steady-state volume, voltage, and ionic concentrations are governed primarily by a small subset of parameters, notably pump strength, sodium conductances, and extracellular sodium concentration. This clear separation between dominant and negligible parameter directions suggests opportunities for systematic model reduction and provides a rigorous explanation for the robustness of volume regulation observed across wide parameter ranges.
}

\AK{The model developed here is for a rather specialized system, a stable epithelial vesicle. This model encompasses a wide range of natural biological structures as well as organoids. We initially developed the ABp to model a rather specialized biological structure, the scolopidium (or chordotonal organ), which serve as stretch receptors in insects \cite{Field_Matheson_1998}. % add to bib
 Scolopidia have a specific glial cell, a scolopale cell that surrounds the dendrites of a sensory neuron. The scoloplale cell creates a closed fluid filled compartment around the sensory dendrites  with an elevated $K^+$ concentration. Another example of an ABp system is the scala media in the mammalian cochlea. In this case a multilayer epithelium, the stria vascularis produces endolymph which covers the apical surface of the organ of Corti, and has a high $K^+$ concentration \cite{Nin_2012}. 

 It is worth noting that biomechanical models have been proposed  to simulate the development of epithelial vesicles  \cite{Rejniak_Anderson_2008} 
and there is considerable interest in the formation of epithelial organoids in vitro  \cite{Lu_2025, Pedersen_1999}. 
 
In the ABp model it is possible for cyclic flows of $Na^+$ and $K^+$  to be generated by the operation of the NKA. Without an active $Cl^-$ transporter there are no net flows of this ion. In the model as implemented no cyclic water flows can occur. However, when ions move through the channels they have an obligatory hydration shell, the water can cycle too. In the case of voltage gated $Na^+$ channels water, water accompanies the ion in its passage through the channel, whereas in the case of $K^+$ channels water is stripped from the ions \cite{roux2017ion}. If this is true for $Na^+$ leak channels, water will cycle through the ABp system. This is what is classically called electroosmosis \cite{finkelstein1987water}.  \ZA{See Figure~\ref{fig:total_flow}.}}

\ZA{
From a modeling perspective, although the ABp model introduced here provides a tractable extension of the classical PLEs and captures the essential dynamics of a cell coupled to a lumen—both reproducing known regulatory mechanisms and predicting new ones—it nevertheless has limitations that warrant further investigation. In particular, we model ionic currents using linear Ohm's law, whereas a more biophysically detailed description could be obtained using the nonlinear Goldman--Hodgkin--Katz (GHK) formulation. Similarly, water transport is modeled using a Starling-type relation without explicit incorporation of hydrostatic \AK{effects, or ion-water interactions.} 

In addition, it remains to be shown rigorously under what conditions an ABp system provides an accurate reduced description of a multicompartment epithelial architecture of the form $A_1\dots A_NBp$. Extending the model to include ion cotransporters represents another important direction toward greater physiological realism.

From a mathematical perspective, while our analysis focuses on the existence, local stability, and robustness of steady states, the explicit formulas derived here lay the groundwork for deeper analytical investigations. The strong dependence of steady states on model parameters suggests the presence of rich bifurcation structures. A systematic bifurcation analysis, as well as the development of Lyapunov or free-energy methods adapted to two-compartment settings, may provide insight into global stability, transitions between stable regimes, and the onset of pathological behaviors such as luminal volume divergence. 

Incorporating GHK-type nonlinearities will likely preclude closed-form expressions for steady states; consequently, the development of accurate and efficient numerical methods for computing steady states and their stability properties will be essential for advancing these models.
}

\section*{Materials and methods}
All computations were performed using MATLAB, release 25.2.0.2998904 (R2025b) (The MathWorks, Inc., Natick, MA). A small capacitance allows rapid changes in voltage. In this case, the PLEs become ``stiff,'' and special numerical solvers are required to solve this system. In this work, for the case of the non-linear NKA models, the system of equations was solved using the MATLAB stiff differential equation solver ode15s or ode23tb (the results are similar, only their
speeds are different for different parameters). 
To reduce the sensitivity of time series solutions to stiff solvers, we use the voltage formulation algebraically derived in Appendix \ref{sec:voltage}, which yields more numerically stable trajectories. \KT{The code is currently maintained in a private repository and is available upon reasonable request. The code will be made publicly available upon acceptance of the associated manuscript.}

\section*{Acknowledgment}
This work was supported by National Science Foundation grant 2037828 (KT, AK, ZA) and The Simons Foundation MPS-TSM-00008005 (ZA). Generative AI tools were used exclusively for copy-editing purposes--such as improving grammar, phrasing, and flow--and for error checking.

%~~~~~~~~~~~~~~~~~~~~~~~~~~~~~~~~~~~~~~~~~~~~~~~~~~~~~~~~~~~~~~~~~~~
%\printbibliography
%\bibliographystyle{plain}
%\bibliography{bibliography}

%~~~~~~~~~~~~~~~~~~~~~~~~~~~~~~~~~~~~~~~~~~~~~~~~~~~~~~~~~~~~~~~~~~~
\section{Appendix}

\subsection{Default values of the coupled PLEs model parameters} \label{sec:appendix_1}

In Tables~\ref{tab:constants}--\ref{tab:pump}, we introduce the parameters used in the coupled PLEs. We also provide \textit{default} values for these parameters as a reference set. However, we emphasize that a main objective of this work is to avoid relying on a single parameter set and instead explore the system's behavior over a wide range of parameter values.

\begin{table}[h!]
    \centering
    \begin{tabular}{| c || c | c | c |}
        \hline
         & Equations & Default value & Description \\
        \hline
        \hline
        $\Arbl$ && $2\pi 10^{-7}$ $\text{dm}^2$ & Basolateral membrane area \\
        % \hline
        $\Arap$ && $2\pi 10^{-7}$ $\text{dm}^2$ & Apical surface area  \\
        % \hline
        $\Arpara$ && $2\pi 10^{-8}$ $\text{dm}^2$ & Paracellular pathway area   \\
        % \hline
        $w_{A,0}$ && $\frac{4}{3} \pi \cdot 125\cdot10^{-15}$ dm$^{3}$ & Initial volume of compartment $A$  \\
        % \hline
        $w_{B,0}$ && $\frac{4}{3} \pi \cdot 125\cdot10^{-14}$ dm$^{3}$ & Initial volume of compartment $B$   \\
        % \hline
        $X_A$ &\eqref{eq:alg_volt}, \eqref{eq:osj}& $5\cdot10^{-3} \times w_{A,0}$ mol & Intra. impermeant molecules in $A$ \\
        % \hline
        $X_B$ &\eqref{eq:alg_volt}, \eqref{eq:osj}& $ 50\cdot 10^{-3} \times w_{B,0}$ mol & Intra. impermeant molecules in $B$  \\
        % \hline
        $\zA$ & \eqref{eq:alg_volt} & -1 & Charge of impermeant molecules in $A$ \\
        % \hline
        $\zB$ & \eqref{eq:alg_volt} & -1 & Charge of impermeant molecules in $B$ \\
        % \hline
        $C_{m,A}$ & \eqref{eq:alg_volt} & $10^{-4} \text{F}$ & Total capacitance \\
        % \hline
        $C_{m,B}$ & \eqref{eq:alg_volt} & $10^{-4} \text{F}$ & Total capacitance \\
        % \hline
        $R$ & \eqref{eq:E_ion_j} & $8.314 \frac{\text{J}}{\text{mol K}}$ & Universal gas constant \\
        % \hline
        % \hline
        $F$  & \eqref{eq:alg_volt}, \eqref{eq:E_ion_j} & $96485 \frac{\text{C}}{\text{mol}}$ & Faraday constant \\
        $\nu_1$ &\eqref{eq:dwA}& $126 \cdot 10^{-5} \Arbl $ ${\text{dm}^6\cdot\text{mol}^{-1}\cdot\text{s}^{-1}}$ & osmotic permeability $\times$ molar volume of water\\
        &&& on basolateral membrane \\
	% \hline
	$\nu_2$ &\eqref{eq:ode_water}& $126 \cdot 10^{-5} \Arap $ ${\text{dm}^6\cdot\text{mol}^{-1}\cdot\text{s}^{-1}}$ & osmotic permeability $\times$ molar volume of water\\ 
    &&& on apical surface \\
	% \hline
	$\nu_p$ &\eqref{eq:dwB}& $126 \cdot 10^{-5} \Arpara $ ${\text{dm}^6\cdot\text{mol}^{-1}\cdot\text{s}^{-1}}$ & osmotic permeability $\times$ molar volume of water\\
        &&& on para. pathway \\
        \hline
    \end{tabular}
    \caption{
    {Parameters used in Equations \eqref{eq:osj}--\eqref{eq:E_ion_j} and their default values.}
    }
    \label{tab:constants}
\end{table}

\begin{table}[h!]
    \centering
    \begin{tabular}{| c || c | c | c |}
        \hline
        & Equation & Value & Description \\
        \hline
        \hline
        $\Ose$ & \eqref{eq:ose}, \eqref{eq:cle}, \eqref{eq:ode_water} & 300 mM  & Extra. osmolarity  \\
        % \hline
        $\ye$ & \eqref{eq:sum_e=0}, \eqref{eq:ose}, \eqref{eq:ione} & 1 mM & Extra. impermeant molecule \\
        % \hline
        $\zy$ & \eqref{eq:sum_e=0}, \eqref{eq:ione}  & -1      & Charge of extra. impermeant molecule \\
        % \hline
        $\nae$ & \eqref{eq:E_ion_j}, \eqref{eq:sum_e=0}, \eqref{eq:ose}, \eqref{eq:nae} & 147 mM & Extra. Sodium concentration \\
        % \hline
        $\ke$ & \eqref{eq:E_ion_j}, \eqref{eq:sum_e=0}, \eqref{eq:ose}, \eqref{eq:nae} & 3 mM & Extra. Potassium concentration \\
        % \hline
        $\cle$ & \eqref{eq:E_ion_j}, \eqref{eq:sum_e=0}, \eqref{eq:ose}, \eqref{eq:ione} & 149 mM & Extra. Chloride concentration \\
        $\nacle$ & \eqref{eq:nacle} & 0 mM & Extra. Salt concentration \\
        \hline
            \end{tabular}
    \caption{
    {Parameters of the extracellular space that describe the extracellular concentrations.} Equation \eqref{eq:sum_e=0} is used to update $\Ose$, and Equation \eqref{eq:nacle} is used to update $\nae$ and $\cle$. 
    }
    \label{tab:extracellular}
\end{table}

\begin{table}[h!]
    \centering
    \begin{tabular}{| c || c | c | c |}
        \hline
         & Equations & Value & Description  \\
        \hline
        \hline
        $T$ & \eqref{eq:E_ion_j} & $37$ $^{\circ}$C & Temperature \\
        $\ggnabl$ &\eqref{eq:dnaA}& 1 mS dm$^{-2}$ & Na\textsuperscript{+} conductance on basolateral membrane \\
	% \hline
        $\ggnaap$ &\eqref{eq:dION}& 1 mS dm$^{-2}$ & Na\textsuperscript{+} conductance on apical surface \\
	% \hline
	$\ggnapara$ &\eqref{eq:dnaB}& 1 mS dm$^{-2}$ & Na\textsuperscript{+} conductance on para. pathway\\
    % \hline
    $\ggkbl$ &\eqref{eq:dkA}& 60 mS dm$^{-2}$& K\textsuperscript{+} conductance on basolateral membrane\\ 
	% \hline
	$\ggkap$ &\eqref{eq:dION}& 30 mS dm$^{-2}$ &  K\textsuperscript{+} conductance on the apical surface\\
	% \hline
	$\ggkpara$ &\eqref{eq:dkB}& 60 mS dm$^{-2}$ & K\textsuperscript{+} conductance on para. pathway\\
        % \hline
        $\ggclbl$ &\eqref{eq:dclA}& 2 mS dm$^{-2}$ & Cl\textsuperscript{-} conductance on basolateral membrane\\
	% \hline
	$\ggclap$ &\eqref{eq:dION}& 300 mS dm$^{-2}$ &Cl\textsuperscript{-} conductance on apical surface\\
	% \hline
	$\ggclpara$ &\eqref{eq:dclB}& 10 mS dm$^{-2}$ & Cl\textsuperscript{-} conductance on para. pathway\\
        % \hline
        \hline
        \end{tabular}
    \caption{Default parameter values for temperature and conductances. 
    Total conductance is defined as $\gionany=\ggionany\;\arany$.
    }
    \label{tab:conductances}
\end{table}

\begin{table}[h!]
    \centering
    \begin{tabular}{| c | c | c |}
        \hline 
         & Default value & Description \\
        \hline
        \hline
        % % \hline
        $\pumprate$ &   1 $\mu$A dm$^{-2}$ & NKA pump rate per unit area \\
        % \hline
        $\gamma_{Na}$&   3 & Sodium stoichiometry \\
        % \hline
        $\gamma_{K}$&   2 & Potassium stoichiometry \\
        \hline
    \end{tabular}
    \caption{{NKA pump rate and stoichiometry parameters. Parameters are given in equations at the beginning of Sections \ref{sec:passive} and \ref{sec:active}.} The pump rate does not possess a default value; however, we are interested in compensatory mechanisms for when $\pumprate$ is small, which is given here the value of $\pumprate$ = 1 $\mu$A dm$^{-2}$.
    }
    \label{tab:pump}
\end{table}

%\vskip2in

\subsection{Motivation to study a larger range for each parameter} \label{sec:motivation}

{Table~\ref{tab:compare_cond} compares our default parameter values with those used in \cite{weinstein1992mathematical}, \cite{weinstein2012mathematical}, \cite{weinstein2015mathematical}, and \cite{edwards2014effects}. We converted their permeabilities (reported as cm/sec) to conductances per unit area (mS/dm$^2$). We calculated conductance by differentiating the Goldman-Hodgkin-Katz (GHK) current-voltage equation with respect to voltage and evaluated it at the reversal potential. The reversal potentials were computed using the ion concentrations reported in each reference. The comparison for extracellular concentrations includes epithelial cells from the rat proximal tubule (\cite{weinstein1992mathematical,weinstein2015mathematical}), loop of Henle (\cite{weinstein2015mathematical}), thick ascending limb and macula densa (\cite{edwards2014effects}), and a general biophysical representation of kidney function (\cite{weinstein2012mathematical}).}

\begin{table}[h!]
    \centering
    \small
    \begin{tabular}{|l|c|c|c|c|c|}
        \hline
         & \textbf{Ours} & 
         %\textbf{Weinstein 1992} 
         \cite{weinstein1992mathematical} & %\textbf{2012} 
         \cite{weinstein2012mathematical} & %\textbf{2015}
         \cite{weinstein2015mathematical} & %\textbf{Edwards 2014} 
         \cite{edwards2014effects} \\
        \hline
        $\ggnabl$ (mS dm$^{-2}$) & 1 & 0.12497 & 113.30--12948.04 & 0.64--4.49 & \\
        $\ggnaap$ (mS dm$^{-2}$) & 1 & 0 &  & 0.00--0.00 & \\
        $\ggnapara$ (mS dm$^{-2}$) & 1 &  &  & 4165.83--8331.66 & \\
        $\ggkbl$ (mS dm$^{-2}$) & 30 & 24.608 & 79.65--5881.49 & 132.88--885.89 & 11.2262 \\
        $\ggkap$ (mS dm$^{-2}$) & 3 & 110.736 &  & 16.61--110.74 & 1389.75 \\
        $\ggkpara$ (mS dm$^{-2}$) & 60 &  &  & 1784.08--3568.17 & \\ 
        $\ggclbl$ (mS dm$^{-2}$) & 2 & 0 & 93.31--24526.06 & 0.00--0.00 & 348.196 \\
        $\ggclap$ (mS dm$^{-2}$) & 300 & 0 &  & 0.00--0.00 & \\
        $\ggclpara$ (mS dm$^{-2}$) & 10 &  &  & 1495.42--2990.84 & \\
        $[$Na$^+]_e$ (mM) & 146 & 140 & 140 & & 144 \\
        $[$K$^+]_e$ (mM) & 3 & 4.9 & 4.9 & & 5 \\
        $[$Cl$^-$]$_e$ (mM) & 147 & 114 & 113.2 & & 123 \\
        \hline
    \end{tabular}
    \caption{Comparison of conductances and concentrations.}
    \label{tab:compare_cond}
\end{table}

%%%%%%%%%%%%%%%%%%%%%%%%
\subsection{Voltage in time series} \label{sec:voltage}

Voltage variables are determined at each time step by the electroneutrality constraint. For each compartment $j\in\{A,B\}$, we assume
\[0 = \sum_{\text{ion}} F \, \zion \, \ionj + F \, \zj \, \xj\]
so that the total intracompartmental charge concentration is zero. Multiplying this expression by $\wj$ and differentiating with respect to time gives
\[ 0 = \sum_{\text{ion}} F \, \zion \frac{d\left(\wj \ionj\right)}{dt} \]
where $\wj \, \xj$ is constant. Using the right-hand sides of Equations \eqref{eq:ode_con_a} and \eqref{eq:ode_con_b}, and collecting terms in $\vA$ and $\vB$, we obtain a linear system of the form
\[
\alpha \vA + \omega \,\vB = h, \quad \quad \omega \, \vA + \delta \vB = k,
\]
where $\alpha$, $\omega$, $h$, and $k$ depend on reversal potentials, which in turn depend on state variables:
\begin{align*}
    &\alpha=-(\gnabl+\gnaap +\gkbl+\gkap +\gclbl+\gclap) \\
    &\omega=(\gnaap+\gkap+\gclap) \\
    &\delta=-(\gnapara+\gnaap +\gkpara+\gkap +\gclpara+\gclap) \\
    &h=-\enaA(\gnabl+\gnaap) - \ekA(\gkbl+\gkap) - \eclA(\gclbl+\gclap) \\
    &\quad + \gnaap\enaB + \gkap \ekB + \gclap\eclB + (\nu - \kappa)\,\pumprate \, ({ss}_1 \, \Arbl+{ss}_2 \, \Arap) \\
    &k=-\enaB(\gnapara+\gnaap) - \ekB(\gkpara+\gkap) - \eclB(\gclpara+\gclap) \\
    &\quad + \gnaap\enaA + \gkap\ekA + \gclap\eclA - (\nu - \kappa)\,\pumprate \, ({ss}_2 \, \Arap).
\end{align*}
The terms ${ss}_1$ and ${ss}_2$ are used here to denote pump switches for the basolateral membrane and apical surface, respectively. Solving this system yields
\begin{align*}
    &\vA = \frac{\delta h - \omega k}{\alpha \delta - \omega^2} \\
    &\vB = \frac{\alpha k - \omega h}{\alpha \delta - \omega^2}.
\end{align*}
Thus, in our time-series simulations, the voltages are computed algebraically at each time step from the electroneutrality condition.

\subsection{Sampling procedure in VBGSA}\label{sec:appendix_sampling_VBGSA}

Each marginal $\Theta_k$ is stratified into $N$ 
 equiprobable bins; one value is drawn without replacement from each bin. Independent random permutations are applied per parameter, and the draws are assembled to form $N$ $9$-dimensional samples \cite{mckay1979,helton2006survey}. 

We use Latin hypercube sampling (LHS) to sample from $\mathcal{H}^9$. For each $k$, write $\Theta_k=[a_k,b_k]$ and define the monotone transform
\[
\phi_k(t)=
\begin{cases}
\log t, & k\notin\{T,\nacle\},\\
t, & k\in\{T,\nacle\},
\end{cases}
\qquad t\in\Theta_k,
\]
with inverse $\phi_k^{-1}(y)=\exp(y)$ for the log case and $\phi_k^{-1}(y)=y$ otherwise.
Partition $[0,1]$ into $N$ strata $I_i=\bigl((i-1)/N,i/N\bigr]$. For each parameter $k$, draw $U_{i,k}\sim\mathrm{Unif}(I_i)$ independently across $i$, apply an independent random permutation $\pi_k$ of $\{1,\dots,N\}$, and set for $i=1,\dots,N$
\[
\theta_k^{(i)}
=\phi_k^{-1}\!\Big(\phi_k(a_k) + U_{\pi_k(i),k}\,[\,\phi_k(b_k)-\phi_k(a_k)\,]\Big).
\]
The $N$ vectors $\theta^{(i)}=(\theta^{(i)}_1,\dots,\theta^{(i)}_9)$ form a Latin hypercube on $\mathcal{H}^9$ with uniform marginals on $\Theta_k$ for $k\in\{T,\nacle\}$ and uniform marginals on $\log\Theta_k$ for $k\notin\{T,\nacle\}$ \cite{mckay1979,helton2006survey,DELA2022111159}. Positivity of the log-scaled bounds is guaranteed by $0<\beta\,\epsilon_k\ll1$ in \eqref{eq:parameter_interval_log}, where $\beta$ and $\epsilon_k$ are
\begin{equation}\nonumber
\begin{aligned}
    &\beta = 1, \quad \epsilon_1 = 0.10, \quad \epsilon_2=0.25, \quad \epsilon_3=0.5, \quad \epsilon_4=0.5,\\ 
    &\epsilon_5=0.5, \quad \epsilon_6=0.5, \quad \epsilon_7=0.5, \quad \epsilon_8=25 \, ^\circ\text{C}, \quad \epsilon_9=25 \, \text{mM}.
\end{aligned}
\end{equation}
where the index $k=1,\ldots,9$ refers respectively to the parameters 
\[\pumprate,\; 
\ggnabl,\; 
\ggkbl,\;
\ggnaap,\;  
\ggkap,\;
\ggnapara,\;  
\ggkpara,\;
T,\;
\nacle.
\]

In the figures, we plot responses versus the predefined $\Theta_k$ domain for the parameter indexed by $k$. When helpful for visualization, the domain $\Theta_k$ for the parameter of interest may be extended beyond $\Theta_k$, while for the remaining eight parameters the sampling domain $\mathcal{H}^9$ in \eqref{eq:parameter_hypercube} remains unchanged. In the same plots against parameter $k$ we will highlight the original domain $\Theta_k$ with a light gray vertical band centered around $\theta_k^*$. Unless otherwise noted, we set $\beta=1$.

\subsection{Robustness analysis for intermediate and high pump rates} \label{sec:robustness_mid_hi_pump}

The robustness analysis on $\ggnabl$ is repeated for $\pumprate=40$ and $90 \; \mu\text{A dm}^{-2}$. The overall trends are the same as Figure \ref{fig:robustness_gna1} when $\pumprate=1\;\mu\text{A dm}^{-2}$ -- increasing $\ggnabl$ will increase $\naj$, $\clj$, $\vj$, and $\wj$ while decreasing $\kj$ and $\xj$. Compared to the case in Figure \ref{fig:robustness_gna1}, there is a difference in the values of the steady state and spectral abscissa. For the intermediate and high pump rate, the $\naj^{ss}$ and $\clj^{ss}$ will increase from near 0 mM to about 150 mM. $\vj^{ss}$ increases from roughly -100 mV to near 0 mV. $\xj^{ss}$ and $\kj^{ss}$ decrease from roughly 150 mM to about 0 mM. $\wA^{ss}$ increases from $\sim 10 \%$ to $\sim 100 \%$, and $\wB^{ss}$ increases from $\sim 100 \%$ to $\sim 1100$ (volumes are normalized by $\wnorm$). Notice that the trends observed here are the same as those seen in Figure \ref{fig:heatmapA_p_v_gna1}. The spectral abscissa increases from $-1$ to $-10^{-5}$. $\naj^{ss}$ attains values near zero, making RMSRE unstable; standardized mean square error (SMSE) is a better metric for the error \cite{williams2006gaussian}, where SMSE is defined as \[\text{SMSE}=\frac{\mathbb{E}\!\left[(\bm{y}-\bm{o})^2\right]}{\operatorname{Var}(\bm{o})}.\] These two cases are summarized in Figure \ref{fig:g_Na1_p40_and_p90}.

\medskip 

\begin{figure}[h!]
    \centering
    \includegraphics[width=0.45\linewidth]{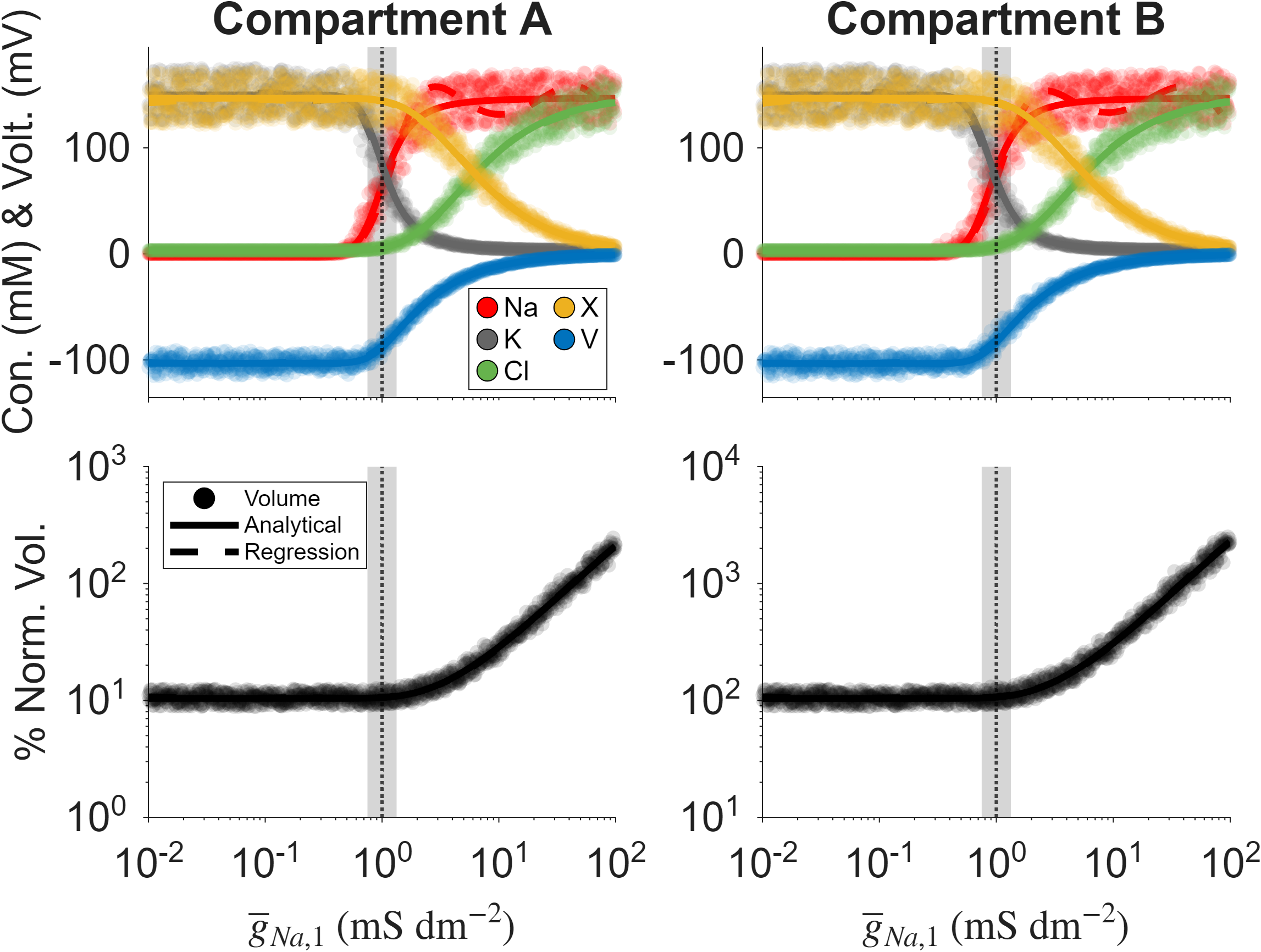} 
    \includegraphics[width=0.22\linewidth]{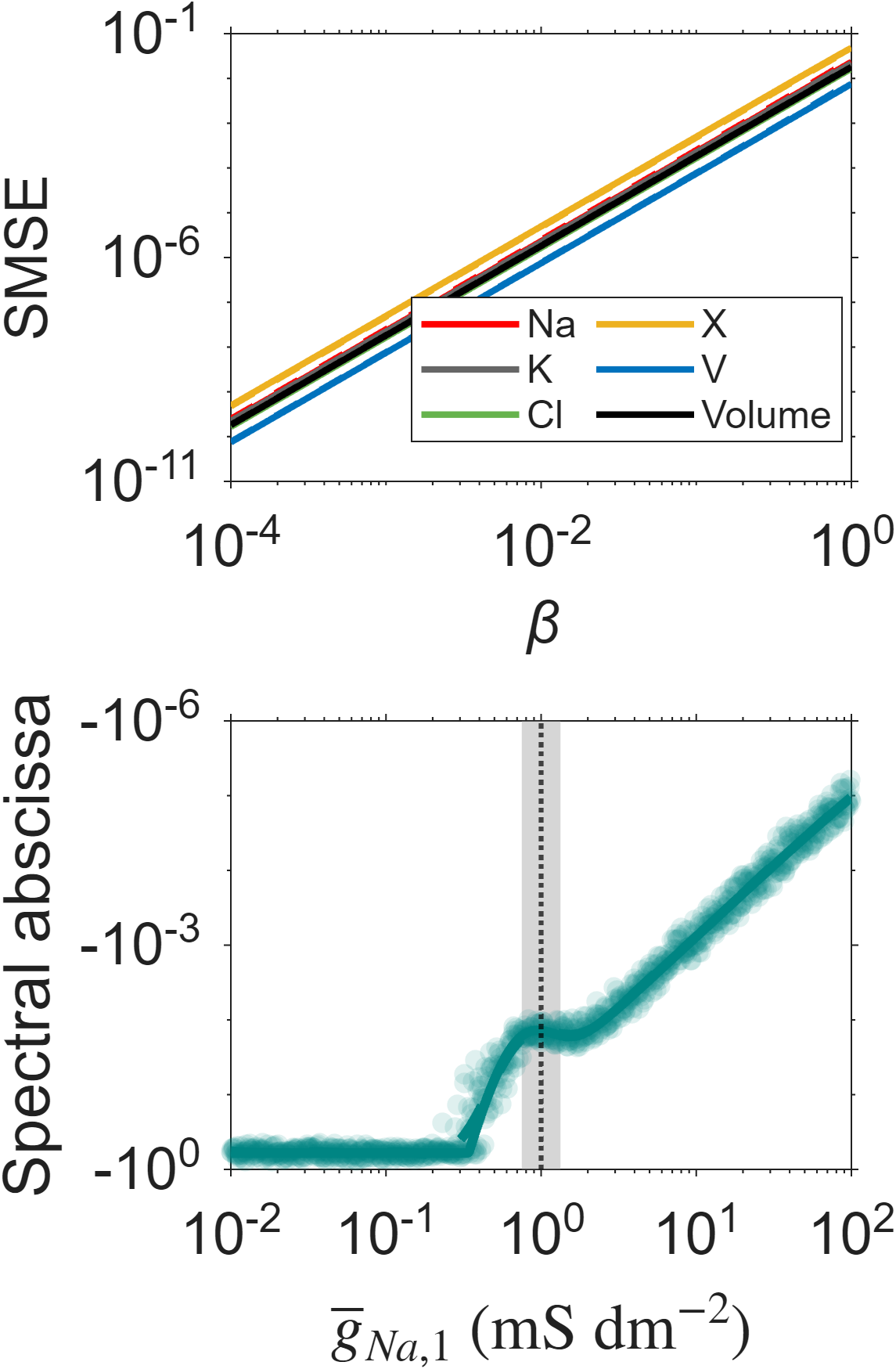} 
    \bigskip 
    
    \bigskip 
    
    \includegraphics[width=0.45\linewidth]{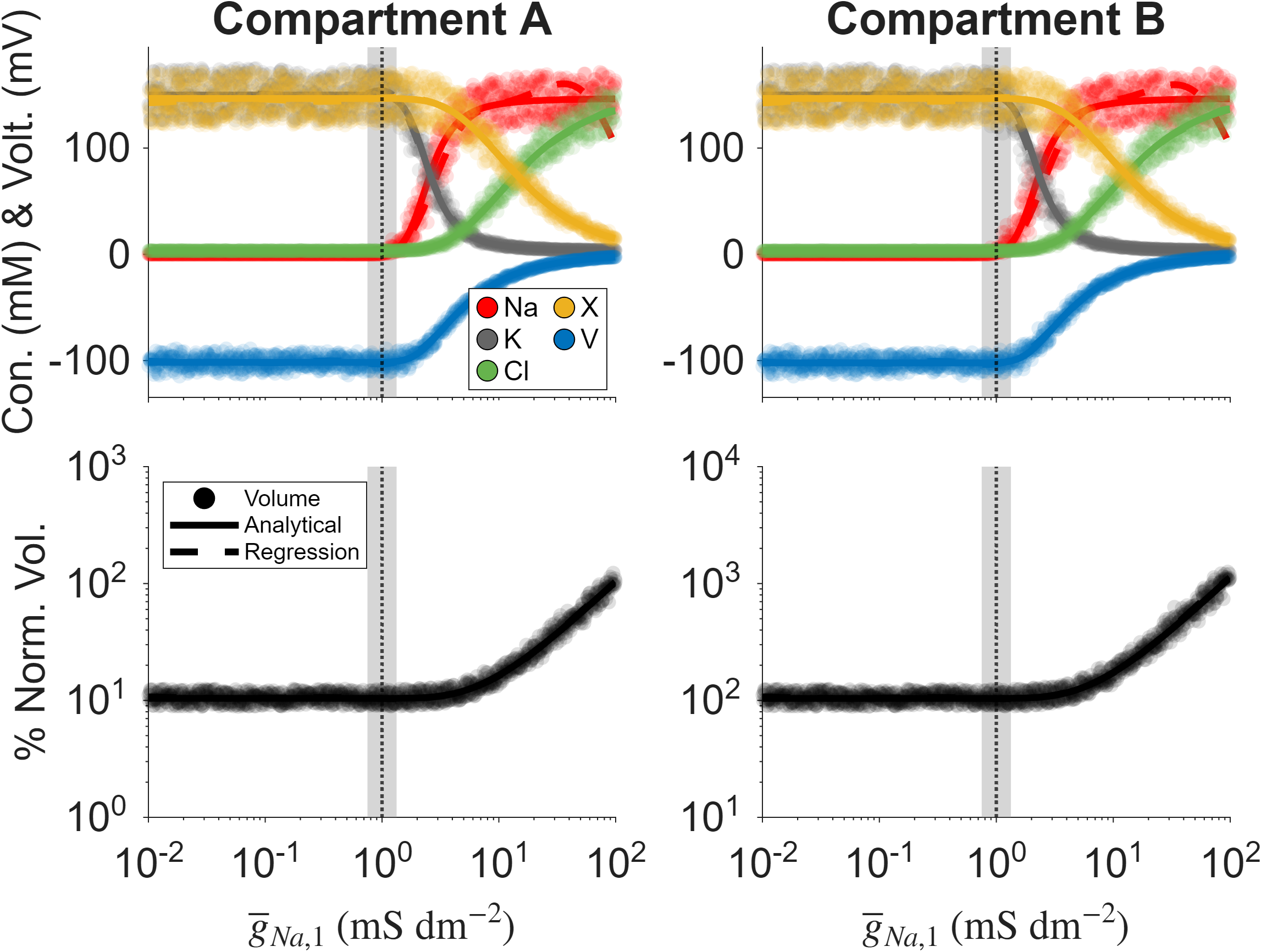} 
    \includegraphics[width=0.22\linewidth]{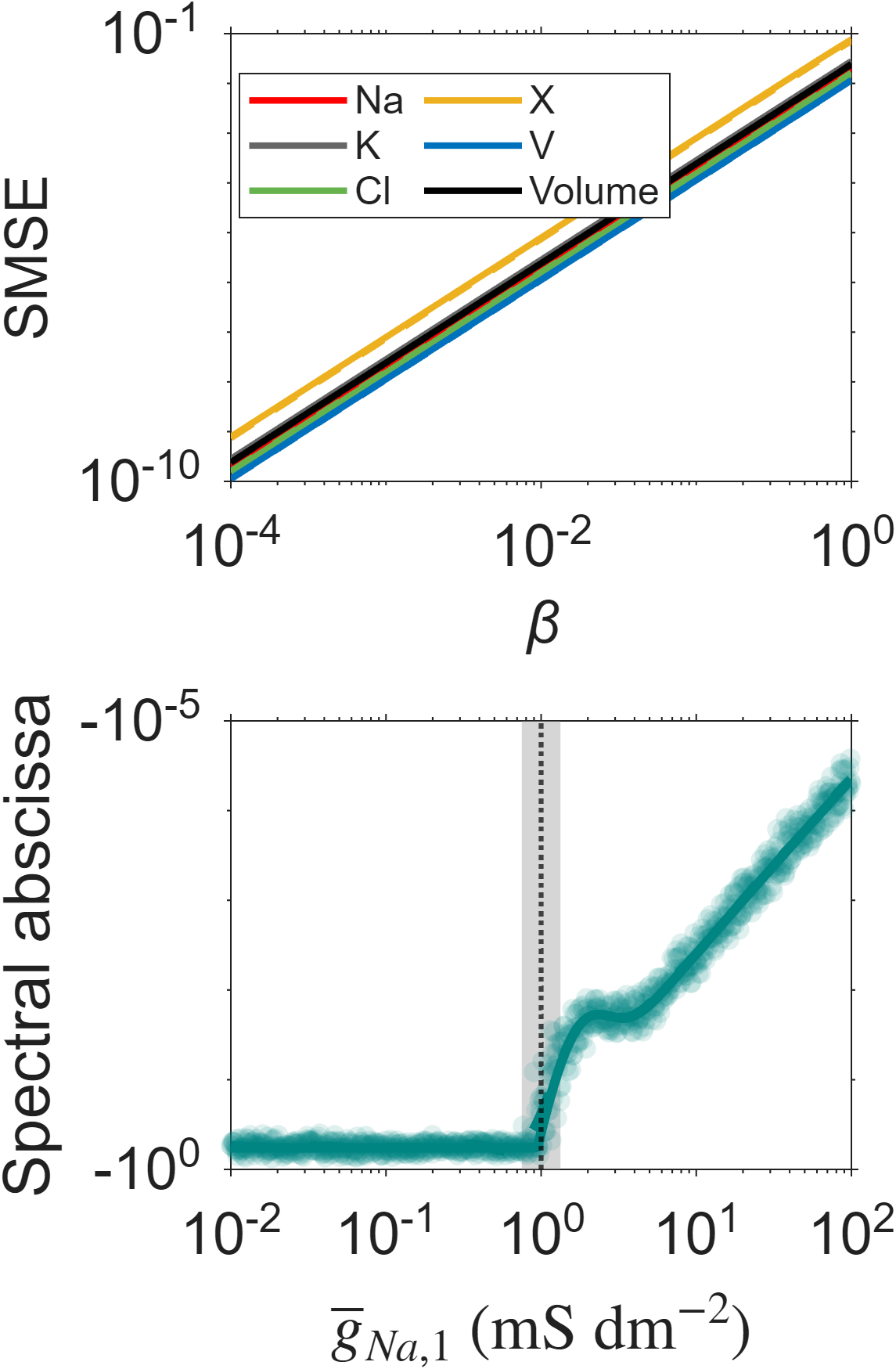}
    \caption{The steady states of the system perturbed around $\pumprate=40$ and $\pumprate=90$.}
    \label{fig:g_Na1_p40_and_p90}
\end{figure}

\end{document}